\newcommand{\NP}{{\sf NP}}
\newcommand{\W}{{\sf W}}
\newtheorem{lemma}{Lemma}
\newtheorem{theorem}{Theorem}
\newtheorem{corollary}{Corollary}
\newcommand{\dist}{{\rm dist}}
\newcommand{\diam}{{\rm diam}}
\newcommand{\chord}{{\rm chord}}
\begin{document}

\title{Hadwiger Number of Graphs with Small Chordality\thanks{The research leading to these results has received funding from the Research Council of Norway and the European Research Council under the European Union's Seventh Framework Programme (FP/2007-2013) / ERC Grant Agreement n. 267959. A preliminary version of this paper appeared as an
extended abstract in the proceedings of WG 2014.}}

\author{
Petr A. Golovach\thanks{Department of Informatics, University of Bergen, Norway, e-mail: \texttt{\{petr.golovach,pinar.heggernes,pim.vanthof\}@ii.uib.no\}@ii.uib.no}} 
\addtocounter{footnote}{-1}
\and
Pinar Heggernes\footnotemark
\addtocounter{footnote}{-1}
\and
Pim van 't Hof\footnotemark
\and
Christophe Paul\thanks{CNRS, LIRMM, Montpellier France,  e-mail: \texttt{paul@lirmm.fr}}
}

\date{}

\maketitle

\begin{abstract}
The Hadwiger number of a graph $G$ is the largest integer~$h$ such that~$G$ has the complete graph $K_h$ as a minor. We show that the problem of determining the Hadwiger number of a graph is \NP-hard on co-bipartite graphs, but can be solved in polynomial time on cographs and on bipartite permutation graphs. We also consider a natural generalization of this problem that asks for the largest integer~$h$ such that~$G$ has a minor with~$h$ vertices and diameter at most $s$. We show that this problem can be solved in polynomial time on AT-free graphs when $s\geq 2$, but is \NP-hard on chordal graphs for every fixed $s\geq 2$. 
\end{abstract}

\section{Introduction}
\label{sec:intro}

The Hadwiger number of a graph $G$, denoted by $h(G)$, is the largest integer $h$ such that the complete graph $K_h$ is a minor of $G$. The Hadwiger number has been the subject of intensive study, not in the least due to a famous conjecture by Hugo Hadwiger from 1943~\cite{Hadwiger43} stating that the Hadwiger number of any graph is greater than or equal to its chromatic number. In a 1980 paper, Bollob\'{a}s, Catlin, and Erd\H{o}s~\cite{BCE80} called Hadwiger's conjecture ``one of the deepest unsolved problems in graph theory.'' Despite many partial results
the conjecture remains wide open more than 70 years after it first appeared in the literature. 

Given the vast amount of graph-theoretic results involving the Hadwiger number, it is natural to study the computational complexity of the  {\sc Hadwiger Number} problem, which is to decide, given an $n$-vertex graph $G$ and an integer $h$, whether the Hadwiger number of $G$ is greater than or equal to~$h$ (or, equivalently, whether $G$ has $K_h$ as a minor).
Rather surprisingly, it was not until 2009 that this problem was shown to be \NP-complete by Eppstein~\cite{Eppstein09}. Two years earlier, Alon, Lingas, and Wahl\'{e}n~\cite{AlonLW07} observed that the problem is fixed-parameter tractable when parameterized by $h$ due to deep results by Robertson and Seymour~\cite{RobertsonS95}. This shows that the problem of determining the Hadwiger number of a graph is in some sense easier than the closely related problem of determining the clique number of a graph, as the decision version of the latter problem is \W[1]-hard when parameterized by the size of the clique. Alon et al.~\cite{AlonLW07} showed that the same holds from an approximation point of view: they provided a polynomial-time approximation algorithm for the {\sc Hadwiger Number} problem with approximation ratio $O(\sqrt{n})$, contrasting the fact that it is \NP-hard to approximate the clique number of an $n$-vertex graph in polynomial time to within a factor better than $n^{1-\epsilon}$ for any $\epsilon>0$~\cite{Zuckerman06}.

Bollob\'{a}s, Catlin, and Erd\H{o}s~\cite{BCE80} referred to the Hadwiger number as the {\em contraction clique number}. This is motivated by the observation that for any integer $h$, a connected graph $G$ has $K_h$ as a minor if and only if 
$G$ has $K_h$ as a contraction. In this context, it is worth mentioning another problem that has recently attracted some attention from the parameterized complexity community. The {\sc Clique Contraction} problem takes as input an $n$-vertex graph $G$ and an integer $k$, and asks whether $G$ can be modified into a complete graph by a sequence of at most $k$ edge contractions. Since every edge contraction reduces the number of vertices by exactly~$1$, it holds that $(G,k)$ is a yes-instance of the {\sc Clique Contraction} problem if and only if $G$ has the complete graph $K_{n-k}$ as a contraction (or, equivalently, as a minor). Therefore, the {\sc Clique Contraction} problem can be seen as the parametric dual of the {\sc Hadwiger Number} problem, and is \NP-complete on general graphs. When parameterized by~$k$, the {\sc Clique Contraction} problem was recently shown to be fixed-parameter tractable~\cite{CaiG13,LokshtanovMS13}, but the problem does not admit a polynomial kernel unless \NP~$\subseteq$~co\NP/
poly~\cite{CaiG13}.

\medskip

In this paper, we study the computational complexity of the {\sc Hadwiger Number} problem on several graph classes of bounded chordality. For chordal graphs, which form an important subclass of 4-chordal graphs, the {\sc Hadwiger Number} problem is easily seen to be equivalent to the problem of finding a maximum clique, and can therefore be solved in linear time on this class~\cite{TY84}. In Section~\ref{sec:main}, we present polynomial-time algorithms for solving the {\sc Hadwiger Number} problem on two other well-known subclasses of $4$-chordal graphs: cographs and bipartite permutation graphs. We also prove that the problem remains \NP-complete on co-bipartite graphs, and hence on $4$-chordal graphs. The latter result implies that the problem is also \NP-complete on AT-free graphs, a common superclass of cographs and bipartite permutation graphs.

In Section~\ref{sec:diam}, we consider a natural generalization of the {\sc Hadwiger Number} problem, and provide additional results about finding large minors of bounded diameter.
We show that the problem of determining the largest integer~$h$ such that a graph $G$ has a minor with $h$ vertices and diameter at most~$s$ can be solved in polynomial time on AT-free graphs if $s\geq 2$. In contrast, we show that this problem is \NP-hard on chordal graphs for every fixed $s\geq 2$, and remains \NP-hard for $s=2$ even when restricted to split graphs. Observe that when $s=1$, the problem is equivalent to the {\sc Hadwiger Number} problem and thus \NP-hard on AT-free graphs and linear-time solvable on chordal graphs due to our aforementioned results.

\section{Preliminaries}\label{sec:defs}
We consider finite undirected graphs without loops or multiple edges. 
For each of the graph problems considered in this paper, we let $n=|V(G)|$ and $m=|E(G)|$ denote the number of vertices and edges, respectively, of the input graph $G$.
For a graph $G$ and a subset $U\subseteq V(G)$ of vertices, we write $G[U]$ to denote the subgraph of $G$ induced by $U$. We write $G-U$ to denote the subgraph of $G$ induced by $V(G)\setminus U$, and $G-u$ if $U=\{u\}$.
For a vertex $v$, we denote by $N_G(v)$ the set of vertices that are adjacent to $v$ in $G$.
The \emph{distance} $\dist_G(u,v)$ between vertices $u$ and $v$ of $G$ is the number of edges on a shortest path between them.
The \emph{diameter} $\diam(G)$ of $G$ is $\max\{\dist_G(u,v) \mid u,v\in V(G)\}$.
The \emph{complement} of $G$ is the graph $\overline{G}$ with vertex set $V(G)$, where two distinct vertices are adjacent in $\overline{G}$ if and only if they are not adjacent in $G$.
For two disjoint vertex sets $X,Y\subseteq V(G)$, we say that $X$ and $Y$ are \emph{adjacent} if there are $x\in X$ and $y\in Y$ that are adjacent in $G$.

We say that $P$ is a {\em $(u,v)$-path} if $P$ is a path that joins $u$ and $v$. The vertices of $P$ different from $u$ and $v$ are the \emph{inner} vertices of $P$.
We denote by $P_n$ and $C_n$ the path and the cycle on $n$ vertices respectively.
The {\em length} of a path is the number of edges in the path.
A set of pairwise adjacent vertices is a \emph{clique}.  
A \emph{matching} is a set $M$ of edges such that no two edges in $M$ share an end-vertex. A vertex incident to an edge of a matching $M$ is said to be \emph{saturated} by $M$.
We write $K_n$ to denote the \emph{complete} graph on $n$ vertices, i.e., graph whose vertex set is a clique. 
For two integers $a\leq b$, the \emph{(integer) interval} $[a,b]$ is defined as $[a,b]=\{i\in \mathbb{Z}\mid a\leq i\leq b\}$.  
If $a>b$, then $[a,b]=\emptyset$.

The \emph{chordality}  $\chord(G)$ of a graph $G$ is the length of a longest
induced cycle in $G$; if $G$ has no cycles, then $\chord(G)=0$.
For a non-negative integer $k$, a graph $G$ is \emph{$k$-chordal} if $\chord(G)\leq k$.
A graph is \emph{chordal} if it is $3$-chordal.
A graph is \emph{chordal bipartite} if it is both $4$-chordal and bipartite.
A graph is a \emph{split} graph if its vertex set can be partitioned in an independent set and a clique.
For a graph $F$, we say that a graph $G$ is \emph{$F$-free} if $G$ does not contain $F$ as an induced subgraph.
A graph is a \emph{cograph} if it is $P_4$-free.
Let $\sigma$ be a permutation of $\{1,\ldots,n\}$. A graph $G$ is said to be a \emph{permutation graph for $\sigma$} if $G$ has vertex set $\{1,\ldots,n\}$ and two vertices $i,j$ are adjacent if and only if $i,j$ are reversed by the permutation. A graph $G$ is a \emph{permutation} graph if $G$ is a permutation graph for some $\sigma$. A graph is a \emph{bipartite permutation} graph if it is bipartite and permutation.
An {\it asteroidal triple (AT)} is a set of three non-adjacent vertices such that between each pair of them there is a path that does not contain a neighbor of the third.
A graph is \emph {AT-free} if it contains no AT. 
Each of the above-mentioned graph classes can be recognized in polynomial (in most cases linear) time, and they are closed under taking induced subgraphs \cite{BrandstadtLS1999,Golumbic04}. See the monographs by Brandst{\"a}dt et al.~\cite{BrandstadtLS1999} and Golumbic \cite{Golumbic04} for more properties and characterizations of these classes and their inclusion relationships.

\medskip
\noindent
{\bf Minors, Induced Minors, and Contractions}.
Let $G$ be a graph and let $e\in E(G)$. The {\it contraction} of $e$ removes 
both end-vertices of $e$ and replaces them by a
new vertex adjacent to precisely those vertices to
which the two end-vertices were adjacent. We denote by $G/e$ the graph obtained from $G$ be the contraction of $e$. For a set of edges $S$, $G/S$ is the graph obtained from $G$ by the contraction of all edges of $S$. A graph $H$ is a \emph{contraction} of $G$ if $H=G/S$ for some $S\subseteq E(G)$. We say that $G$ is \emph{$k$-contractible} to $H$ if $H=G/S$ for some set $S\subseteq E(G)$ with $|S|\leq k$. 
A graph $H$ is an \emph{induced minor} of $G$ if a $H$ is a contraction of an induced subgraph of $G$.
Equivalently, $H$ is an induced minor of $G$ if $H$ can be obtained from $G$ by a sequence of vertex deletions and edge contractions. 
A graph $H$ is a {\em minor} of a graph $G$ if $H$ is a contraction of a subgraph of~$G$.
Equivalently, $H$ is a minor of $G$ if $H$ can be obtained from $G$ by a sequence of vertex deletions, edge deletions, and edge contractions. 

Let $G$ and $H$ be two graphs. An {\em $H$-witness structure} ${\cal W}$ of $G$ is a partition $\{W(x) \mid x\in V(H)\}$ of the vertex set of a (not necessarily proper) subgraph of $G$ into $|V(H)|$ sets called {\em bags}, such that the following two conditions hold:
\begin{itemize}
\item[(i)] each bag $W(x)$ induces a connected subgraph of $G$;
\item[(ii)] for all $x,y\in V(H)$ with $xy\in E(H)$, bags $W(x)$ and $W(y)$ are adjacent in~$G$.
\end{itemize}

In addition, we may require an $H$-witness structure to satisfy one or both of the following additional conditions:
\begin{itemize}
\item[(iii)] for all $x,y\in V(H)$ with $xy\notin E(H)$, bags $W(x)$ and $W(y)$ are not adjacent in~$G$;
\item[(iv)] every vertex of $G$ belongs to some bag.
\end{itemize}

By contracting each of the bags into a single vertex we observe that $H$ is a contraction, an induced minor, or a minor of $G$ if and only if $G$ has an $H$-witness structure ${\cal W}$ that satisfies conditions (i)--(iv), (i)--(iii), or (i)--(ii), respectively. We will refer to such a structure ${\cal W}$ as an {\em $H$-contraction structure}, an {\em $H$-induced minor structure}, and an {\em $H$-minor structure}, respectively. Observe that, in general, such a structure ${\cal W}$ is not uniquely defined. 

Let ${\cal W}$ be an $H$-witness structure of $G$, and let $W(x)$ be a bag of ${\cal W}$. We say that $W(x)$ is a \emph{singleton} if $|W(x)|=1$ and $W(x)$ is an \emph{edge-bag} if $|W(x)|=2$.
We say that $W(x)$ is a \emph{big bag} if $|W(x)|\geq 2$.

We conclude this section by presenting four structural lemmas that will be used in the polynomial-time algorithms presented in Section~\ref{sec:main}. The first lemma is due to Heggernes et al.~\cite{HHLLP14}. 

\begin{lemma}[\cite{HHLLP14}]
\label{l-big}
If a graph $G$ is $k$-contractible to a graph $H$, then any $H$-contraction structure ${\cal W}$ of $G$ satisfies the following properties: 
\begin{itemize}
\item ${\cal W}$ has at most $k$ big bags;
\item each bag of ${\cal W}$ contains at most $k+1$ vertices;
\item all the big bags of ${\cal W}$ together contain at most $2k$ vertices.
\end{itemize}
\end{lemma}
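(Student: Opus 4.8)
The plan is to derive all three properties from one elementary counting identity together with a single bound coming from $k$-contractibility. Fix an arbitrary $H$-contraction structure ${\cal W}=\{W(x)\mid x\in V(H)\}$ of $G$. Because an $H$-contraction structure satisfies condition~(iv), the bags form a partition of $V(G)$, so $\sum_{x\in V(H)}|W(x)|=n$. Subtracting $1$ for each of the $|V(H)|$ bags yields the identity
\[
\sum_{x\in V(H)}\bigl(|W(x)|-1\bigr)\;=\;n-|V(H)|.
\]
(Equivalently: by condition~(i) each $G[W(x)]$ has a spanning tree with $|W(x)|-1$ edges, and contracting these pairwise disjoint edge sets over all bags — which by conditions~(ii) and~(iii) produces exactly $H$ — uses $n-|V(H)|$ contractions in total.)

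The second ingredient I would establish is that $n-|V(H)|\le k$. Since $G$ is $k$-contractible to $H$, we have $H=G/S$ for some $S\subseteq E(G)$ with $|S|\le k$. Performing the contractions in $S$ one edge at a time, each single contraction decreases the number of vertices by at most one (in a simple graph the two endpoints of an edge are distinct, so a contraction either merges two vertices into one or, if the edge has meanwhile become a loop, leaves the vertex set unchanged). Hence $|V(H)|\ge n-|S|\ge n-k$, that is, $n-|V(H)|\le k$. Combining with the identity above gives
\[
\sum_{x\in V(H)}\bigl(|W(x)|-1\bigr)\;\le\;k,
\]
where every summand is a non-negative integer.

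All three statements now follow. Every big bag $W(x)$ contributes $|W(x)|-1\ge1$ to the left-hand side, so there can be at most $k$ big bags. For any single bag $W(x)$, dropping the non-negative contributions of the other bags gives $|W(x)|-1\le k$, i.e.\ $|W(x)|\le k+1$. Finally, writing $B$ for the set of big bags, $\sum_{x\in B}|W(x)|=\sum_{x\in B}\bigl(|W(x)|-1\bigr)+|B|\le k+k=2k$, using $|B|\le k$ from the first statement. There is essentially no obstacle here; the only point that needs a word of care is the inequality $n-|V(H)|\le k$ when $S$ is allowed to contain cycles, which is precisely what the ``one contraction at a time'' observation handles — or, equivalently, the remark that $|V(H)|$ equals the number of connected components of the spanning subgraph $(V(G),S)$ and hence is at least $n-|S|$.
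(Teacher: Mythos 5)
Your proof is correct and complete. The paper itself does not prove Lemma~\ref{l-big} — it is imported from \cite{HHLLP14} as a black box — so there is no in-paper argument to compare against; your counting argument (observing that condition~(iv) makes the bags a partition of $V(G)$, hence $\sum_x(|W(x)|-1)=n-|V(H)|\le k$, and then reading off all three bounds from the fact that the summands are non-negative integers) is the standard and essentially unique way to obtain these bounds, and every step checks out.
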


The next lemma readily follows from the definitions of a minor, an induced minor, and a contraction.

\begin{lemma}
\label{lem:clique}
For every connected graph $G$ and non-negative integer~$p$, the following statements are equivalent:
\begin{itemize}
\item $G$ has $K_p$ as a contraction;
\item $G$ has $K_p$ as an induced minor;
\item $G$ has $K_p$ as a minor.
\end{itemize}
\end{lemma}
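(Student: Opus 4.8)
The plan is to establish the cycle of implications ``$G$ has $K_p$ as a contraction'' $\Rightarrow$ ``$G$ has $K_p$ as an induced minor'' $\Rightarrow$ ``$G$ has $K_p$ as a minor'' $\Rightarrow$ ``$G$ has $K_p$ as a contraction''. The first two implications need no real argument and hold with an arbitrary graph $H$ in place of $K_p$: by the characterisation recalled immediately before the lemma, a $K_p$-contraction structure of $G$ is one satisfying conditions (i)--(iv), which is in particular a $K_p$-induced minor structure (conditions (i)--(iii)), which in turn is in particular a $K_p$-minor structure (conditions (i)--(ii)). So all the work is in the last implication, and there I would exploit the two special features of the situation: $K_p$ is complete and $G$ is connected.

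Suppose then that $G$ has $K_p$ as a minor, witnessed by a $K_p$-minor structure $\mathcal{W}=\{W(1),\dots,W(p)\}$; we may assume $p\geq 1$. Since $K_p$ is complete it has no non-edges, so condition (iii) is vacuously satisfied by $\mathcal{W}$, and hence the only thing separating $\mathcal{W}$ from a $K_p$-contraction structure is condition (iv): some vertices of $G$ may lie outside all bags. To repair this I would grow the bags greedily using connectivity of $G$. Let $U$ denote the union of the bags of $\mathcal{W}$. While $U\neq V(G)$: since $G$ is connected and $U\neq\emptyset$ (as $p\geq 1$), there is an edge of $G$ between $U$ and $V(G)\setminus U$; choose such an edge, let $v\in V(G)\setminus U$ be its endpoint outside $U$, pick a bag $W(x)$ containing the other endpoint, and put $v$ into $W(x)$. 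This preserves pairwise disjointness of the bags (as $v\notin U$), keeps $W(x)$ connected (as $v$ has a neighbour in $W(x)$) while leaving the other bags untouched, retains all adjacencies required by condition (ii), and leaves condition (iii) vacuous; moreover it enlarges $U$ by one vertex. Since $V(G)$ is finite, iterating terminates with $U=V(G)$, so the resulting structure satisfies conditions (i)--(iv) and is a $K_p$-contraction structure. Therefore $G$ has $K_p$ as a contraction, which closes the cycle.

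I do not anticipate a genuine obstacle: the lemma is essentially bookkeeping with the four witness-structure conditions, exactly as the statement ``readily follows from the definitions'' suggests. The one place where care is needed is the bag-growing argument, and I would make sure to flag explicitly that it uses \emph{both} hypotheses: connectivity of $G$ guarantees that the bags can be extended until they cover $V(G)$, and completeness of $K_p$ guarantees that enlarging a bag can never create a forbidden adjacency --- which for a general target graph $H$ would be the real difficulty, since then condition (iii) is not automatic.
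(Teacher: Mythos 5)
Your proposal is correct and is exactly the bookkeeping argument the paper has in mind when it says the lemma ``readily follows from the definitions'': the two forward implications are immediate from the nesting of conditions (i)--(iv), (i)--(iii), (i)--(ii), and the closing implication is handled by noting that condition (iii) is vacuous for $K_p$ and then greedily absorbing the remaining vertices of the connected graph $G$ into adjacent bags until condition (iv) holds. You also correctly flag that both hypotheses (completeness of the target and connectivity of $G$) are used precisely in this last step; the only cosmetic gap is the throwaway ``we may assume $p\geq 1$'', which you could dispatch by noting that for a nonempty connected $G$ all three statements fail (or are treated as degenerate) when $p=0$.
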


We say that an $H$-induced minor structure ${\cal W}=\{W(x)\mid x\in V(H)\}$ is \emph{minimal} if there is no $H$-induced minor structure ${\cal W}'=\{W'(x)\mid x\in V(H)\}$ with $W'(x)\subseteq W(x)$ for every $x\in V(H)$ such that at least one inclusion is proper.

\begin{lemma}\label{lem:diam-bag}
For any minimal $K_p$-induced minor structure of a graph $G$, each bag induces a subgraph of diameter at most~$\max\{\chord(G)-3,0\}$.
\end{lemma}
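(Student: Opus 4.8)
The plan is to show that if a bag $W=W(i)$ of a minimal $K_p$-induced minor structure $\mathcal{W}$ of $G$ satisfies $d:=\diam(G[W])\geq 1$, then $G$ contains an induced cycle of length at least $d+3$. Since every induced cycle of $G$ has length at most $\chord(G)$, this gives $d\leq\chord(G)-3$; together with the trivial bound $d\geq 0$ this yields $d\leq\max\{\chord(G)-3,0\}$, as desired (the case $d=0$ being immediate).

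First I would extract what minimality says about $W$. Pick $u,v\in W$ with $\dist_{G[W]}(u,v)=d$. Since $v$ achieves the maximum distance from $u$ in the connected graph $G[W]$, it is not a cut vertex of $G[W]$, so $G[W\setminus\{v\}]$ is connected. Replacing the bag $W(i)$ by $W\setminus\{v\}$ and keeping all other bags still yields a partition into connected bags that satisfies condition~(iii) (vacuous for $K_p$), so by minimality of $\mathcal{W}$ the result is not a $K_p$-induced minor structure, hence it violates condition~(ii): some bag $A\in\mathcal{W}\setminus\{W(i)\}$ is not adjacent to $W\setminus\{v\}$. As $W$ and $A$ were adjacent, $v$ must be the only vertex of $W$ having a neighbour in $A$. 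Symmetrically, $G[W\setminus\{u\}]$ is connected, and there is a bag $B\in\mathcal{W}\setminus\{W(i)\}$ such that $u$ is the only vertex of $W$ having a neighbour in $B$. Since $v\neq u$ and $v$ has a neighbour in $A$, while the only vertex of $W$ with a neighbour in $B$ is $u$, we get $A\neq B$ (in particular $p\geq 3$).

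Next I would assemble the cycle from two internally disjoint $u$--$v$ pieces. Let $P$ be a shortest $(u,v)$-path in $G[W]$; it has length $d$ and, since $G[W]$ is an induced subgraph of $G$, it is an induced path of $G$. As $A$ and $B$ are bags of $\mathcal{W}$, both $G[A]$ and $G[B]$ are connected, $A$ is adjacent to $B$, $v$ has a neighbour in $A$, and $u$ has a neighbour in $B$; hence there is a path from $v$ to $u$ in $G[A\cup B\cup\{u,v\}]$ that avoids the edge $uv$, and I would let $Q$ be a shortest such path. Its first internal vertex lies in $A$ (the neighbours of $v$ in $A\cup B$ all lie in $A$, as $v$ has no neighbour in $B$) and its last internal vertex lies in $B$, so $Q$ has at least two internal vertices, all lying in $A\cup B$ and hence outside $V(P)\subseteq W$. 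Consequently $C:=P\cup Q$ is a cycle whose length equals $d$ plus the length of $Q$, which is at least $d+3$.

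Finally I would verify that $C$ is induced, which I expect to be the main technical hurdle. Consider any chord of $C$, i.e.\ an edge of $G$ joining two non-consecutive vertices of $C$. Both endpoints cannot lie on $P$ (an induced path), except possibly the pair $\{u,v\}$; and both endpoints cannot be internal vertices of $Q$, since a shortcut between two internal vertices of $Q$ would again be a $(v,u)$-path avoiding $uv$ but shorter than $Q$. If the chord joins a vertex $x$ of $P$ to an internal vertex $q$ of $Q$, then when $q\in A$ the fact that $v$ is the only vertex of $W$ with a neighbour in $A$ forces $x=v$, and shortest-ness of $Q$ forces $q$ to be the first internal vertex of $Q$, which is consecutive to $v$ on $C$ --- a contradiction; the case $q\in B$ is symmetric with $x=u$. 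There remains only the pair $\{u,v\}$: if $d\geq 2$ then $uv\notin E(G)$ because $G[W]$ is an induced subgraph, and if $d=1$ then $uv$ is the unique edge of $P$ and hence a (non-chord) edge of $C$. Thus $C$ has no chord, so it is an induced cycle of $G$ of length at least $d+3$, completing the argument. The crux of the whole proof is this last chord analysis; by comparison, the structural step producing the distinct bags $A\neq B$ is routine, the only subtlety there being that it forces $p\geq 3$, with the cases $p\leq 2$ being degenerate ($d=0$).
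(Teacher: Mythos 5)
Your proof is correct and follows essentially the same route as the paper's: use minimality (via the non-cut-vertex observation) to obtain two distinct bags $A,B$ for which $v$, respectively $u$, is the unique vertex of $W$ with a neighbour, and then close a shortest $(u,v)$-path in $G[W]$ together with a path through $A\cup B$ into an induced cycle of length at least $d+3$. You simply spell out the chord analysis more explicitly (by taking $Q$ to be a shortest $(v,u)$-path in $G[A\cup B\cup\{u,v\}]$ avoiding the edge $uv$), where the paper just asserts the existence of an induced $(u,v)$-path through $W(y)\cup W(z)$ and that the resulting cycle is induced.
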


\begin{proof}
Let ${\cal W}$ be a minimal $K_p$-induced minor structure of $G$. Since $K_p$ is a complete graph, any two bags of ${\cal W}$ are adjacent.
Let $W(x)$ be a bag. If $|W(x)|=1$, then $\diam(G[W(x)])=0$ and the statement holds. 
Observe that if $p\leq 2$, then each bag of $\cal W$ is a singleton
due to the minimality of ${\cal W}$. 
Suppose that $|W(x)|\geq 2$ and $p\geq 3$. 

Let $u,v\in W(x)$ be vertices such that $\dist_{G[W(x)]}(u,v)=\diam(G[W(x)])$.  Let $P$ be a shortest $(u,v)$-path in $G[W(x)]$.
Observe that by the choice of $u$ and $v$, the graphs $G[W(x)]-u$ and $G[W(x)]-v$ are connected. Because $\cal W$ is minimal, there are two bags $W(y)$ and $W(z)$ for distinct $y,z\in V(H)$ such that  $u$ is the unique vertex of $W(x)$ adjacent to a vertex of $W(y)$ and $v$ is the unique vertex of $W(x)$ adjacent to a vertex of $W(z)$. Because the graphs $G[W(y)]$ and $G[W(z)]$ are connected and the sets $W(y)$ and $W(z)$ are adjacent, there is an induced $(u,v)$-path $P'$ in $G$
whose inner vertices all belong to $W(y)\cup W(z)$.
Notice that $P'$ has length at least 3 and no inner vertex of $P'$ is adjacent to a vertex of $W(x)$ in $G$. Consequently, the union of $P$ and $P'$ is an induced cycle of length at least $\diam(G[W(x)])+3\leq \chord(G)$, so we conclude that $\diam(G[W(x)])\leq \chord(G)-3$.
\end{proof}

Note that Lemma~\ref{lem:diam-bag} immediately implies the aforementioned equivalence on chordal graphs between the {\sc Hadwiger Number} problem and the problem of finding a maximum clique. Lemma~\ref{lem:diam-bag} also implies the following result.

\begin{corollary}\label{cor:diam-bag} 
If $G$ is a graph of chordality at most~$4$, then for any minimal $K_p$-induced minor structure in $G$, each bag is a clique. 
\end{corollary}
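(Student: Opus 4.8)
The plan is to obtain this corollary as an immediate consequence of Lemma~\ref{lem:diam-bag}. First I would apply that lemma to the given minimal $K_p$-induced minor structure ${\cal W}$ of $G$: it tells us that every bag $W(x)$ of ${\cal W}$ induces a subgraph of $G$ of diameter at most $\max\{\chord(G)-3,0\}$. The hypothesis $\chord(G)\leq 4$ then yields $\chord(G)-3\leq 1$, so this bound collapses to $\max\{\chord(G)-3,0\}\leq 1$; note that the $\max$ is precisely what keeps the bound meaningful in the borderline cases $\chord(G)\in\{0,1,2,3\}$, for instance when $G$ is acyclic.

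Second, I would invoke the elementary fact that a graph of diameter at most~$1$ is complete: in such a graph every pair of distinct vertices is at distance exactly~$1$, hence adjacent. Applying this to $G[W(x)]$ shows that $W(x)$ is a clique, and since $W(x)$ was an arbitrary bag of ${\cal W}$, every bag is a clique. Bags of size~$0$ or~$1$ are trivially cliques and require no separate discussion.

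There is essentially no obstacle in this argument: all the real work has already been carried out in the proof of Lemma~\ref{lem:diam-bag}. The only points that warrant a moment's care are the handling of the $\max$ in the diameter bound and the routine observation that the graphs of diameter at most one are exactly the complete graphs. This corollary will then be used to constrain the shape of witness structures in the polynomial-time algorithms presented in Section~\ref{sec:main}, where restricting bags to cliques is what makes an exhaustive search feasible.
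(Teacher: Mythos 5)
Your argument is correct and is exactly what the paper intends: apply Lemma~\ref{lem:diam-bag} to get that each bag induces a subgraph of diameter at most $\max\{\chord(G)-3,0\}\leq 1$, and observe that a graph of diameter at most~$1$ is complete. The paper presents the corollary as an immediate consequence of the lemma without spelling out these steps, so your proof matches its approach.
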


We say that a $K_p$-induced minor structure is \emph{nice} if each bag is either a singleton or an edge-bag. 

\begin{lemma}
\label{lem:prism}
Let $G$ be a $\overline{C}_6$-free graph of chordality at most~$4$. If $K_p$ is an induced minor of $G$, then $G$ has a nice $K_p$-induced minor structure.
\end{lemma}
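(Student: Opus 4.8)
The plan is to start from a \emph{minimal} $K_p$-induced minor structure ${\cal W}=\{W(x)\mid x\in V(K_p)\}$ of $G$. By Corollary~\ref{cor:diam-bag} every bag of ${\cal W}$ is a clique, so it suffices to rule out a bag $W(x)$ with $|W(x)|\geq 3$. Assuming such a bag exists, I would first extract the needed local structure from minimality: since $W(x)$ is a clique on at least three vertices, deleting a single vertex $u\in W(x)$ leaves $W(x)$ connected, so minimality forces a \emph{private bag} $B_u=W(y_u)$ with $y_u\neq x$ for which $u$ is the \emph{unique} vertex of $W(x)$ having a neighbour in $B_u$ (and, in particular, $u$ does have such a neighbour); consequently distinct vertices of $W(x)$ receive distinct private bags. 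Choosing $u_1,u_2,u_3\in W(x)$ with private bags $B_1,B_2,B_3$ and setting $A_i=N_G(u_i)\cap B_i\neq\emptyset$, the facts I would use repeatedly are: each $B_i$ is a clique; $u_i$ has no neighbour in $B_j$ whenever $i\neq j$; and $B_i$ and $B_j$ are adjacent for all $i\neq j$, since $K_p$ is complete. The argument then amounts to producing in $G$ an induced $C_5$, an induced $C_6$, or an induced $\overline{C}_6$ — all three forbidden by $\chord(G)\leq 4$ and $\overline{C}_6$-freeness.

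Step one is to show that $A_i$ and $A_j$ are adjacent for every pair $i\neq j$. If not, take a shortest path $Q$ from $A_i$ to $A_j$ inside the (connected) graph $G[B_i\cup B_j]$. Because $B_i$ and $B_j$ are cliques joined by at least one edge, $Q$ has length exactly $2$ or $3$, and, being shortest between the two sets, its internal vertices lie in $(B_i\setminus A_i)\cup(B_j\setminus A_j)$ and hence are adjacent to neither $u_i$ nor $u_j$. Appending the edges from $u_i$ and $u_j$ to the endpoints of $Q$ together with the edge $u_iu_j$ closes $Q$ into a cycle on $5$ or $6$ vertices, and the three facts above make this cycle chordless, contradicting $\chord(G)\leq 4$.

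Step two treats the case that $A_i$ and $A_j$ are adjacent for all pairs. If some $A_i$ has a vertex $a$ adjacent to a vertex $a_j\in A_j$ and to a vertex $a_k\in A_k$ (with $\{i,j,k\}=\{1,2,3\}$), then when $a_j$ and $a_k$ are adjacent, $\{u_1,u_2,u_3,a,a_j,a_k\}$ induces the prism $\overline{C}_6$, and when they are nonadjacent, $u_j\,a_j\,a\,a_k\,u_k\,u_j$ is an induced $C_5$ — a contradiction either way. In the remaining case no $A_\ell$ contains a vertex adjacent to two of the other $A$'s; then pick an edge $a_1a_2$ with $a_i\in A_i$ and an edge $a_1^{*}a_3$ with $a_1^{*}\in A_1$ and $a_3\in A_3$. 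The ``no vertex adjacent to two others'' assumption gives: $a_1$ is nonadjacent to $a_3$ (so $a_1\neq a_1^{*}$, hence $a_1$ and $a_1^{*}$ are adjacent as $B_1$ is a clique), $a_2$ is nonadjacent to all of $A_3$, and $a_1^{*}$ is nonadjacent to all of $A_2$; a routine check of chords then shows $u_2\,a_2\,a_1\,a_1^{*}\,a_3\,u_3\,u_2$ is an induced $C_6$, again a contradiction. As these cases are exhaustive, no bag of ${\cal W}$ has size $\geq 3$, so ${\cal W}$ is a nice $K_p$-induced minor structure.

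I expect the main obstacle to be the last subcase of step two, where the three pairwise edges between $A_1$, $A_2$, $A_3$ use pairwise distinct endpoints, so there is no transversal triangle and hence no prism staring at us. The fix is to route the cycle through the two distinct $A_1$-vertices $a_1$ and $a_1^{*}$, which are adjacent because $B_1$ is a clique, thereby turning a ``near-prism'' into an honest induced $C_6$; setting up the case distinction so that chordlessness of this $6$-cycle drops out of the private-bag property together with the ``no vertex adjacent to two others'' hypothesis is the delicate bookkeeping. A secondary point that must be handled carefully is the claim in step one that a shortest path between two of the clique-bags has length at most $3$ — this is precisely what keeps the closing cycle short enough to contradict chordality $4$ rather than a weaker bound, and it relies on Corollary~\ref{cor:diam-bag} ensuring the bags are cliques.
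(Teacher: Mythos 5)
Your proposal is correct, and it takes a genuinely different route to the contradiction than the paper does. You begin identically: take a minimal structure, invoke Corollary~\ref{cor:diam-bag} so every bag is a clique, and use minimality to find three distinct private bags $B_1,B_2,B_3$ for the three vertices $u_1,u_2,u_3$ of a hypothetical big bag, together with the observation that each $B_i$ is adjacent only to $u_i$ within $W(x)$. From there the two arguments part ways. The paper fixes, for each pair $i\neq j$, an induced $(u_i,u_j)$-path $P_{ij}$ of length $3$ routed through $B_i\cup B_j$, chooses the three paths to maximize their total number of shared edges, and shows by an extremal argument (a chord forced by $\chord(G)\leq 4$ yields a strictly better choice) that each $P_{ij}$ passes through a single vertex of $B_i$ and a single vertex of $B_j$, whereupon the six vertices induce $\overline{C}_6$ and one is done. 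You instead analyze the sets $A_i=N_G(u_i)\cap B_i$ directly: step one shows they are pairwise adjacent (else a shortest $A_i$--$A_j$ path in $G[B_i\cup B_j]$ together with $u_i,u_j$ and the edge $u_iu_j$ closes into an induced $C_5$ or $C_6$), and step two case-splits on whether some vertex of some $A_i$ sees both other $A$'s (giving $\overline{C}_6$ or an induced $C_5$) or not (giving an induced $C_6$ routed through two distinct vertices of $A_1$). Your argument is longer and uncovers all three forbidden structures, but it avoids the extremal ``maximize common edges'' bookkeeping and makes the role of the private-bag property more transparent; the paper's argument is more compact and funnels everything into a single appearance of $\overline{C}_6$. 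Both are valid, and I verified the adjacency and non-adjacency claims in each of your cases, including that $a_1\neq a_1^*$ so the six vertices in the last subcase are distinct.
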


\begin{proof}
Let ${\cal W}$ be a minimal $K_p$-induced minor structure in $G$. By Corollary~\ref{cor:diam-bag}, each bag of ${\cal W}$ is a clique. Hence, in order to prove the lemma, it suffices to show that each bag contains at most two vertices.

For contradiction, suppose there exists a bag $W(x)$ that contains at least three distinct vertices $u_1,u_2,u_3$. Notice that because $W(x)$ is a clique, for any $u_i\in W(x)$, $G[W(x)]-u_i$ is connected. Because $\cal W$ is minimal, there are 3 bags $W(y_1)$, $W(y_2)$ and $W(y_3)$ for distinct $y_1,y_2,y_3\in V(H)$ such that  $u_i$ is the unique vertex of $W(x)$ adjacent to a vertex of $W(y_i)$ for $i\in\{1,2,3\}$. For any distinct $i,j\in\{1,2,3\}$, there is an induced $(u_i,u_j)$-path $P_{ij}$ in $G$
whose inner vertices all belong to $W(y_i)\cup W(y_j)$, because the graphs $G[W(y_i)]$ and $G[W(y_j)]$ are connected and the sets $W(y_i)$ and $W(y_j)$ are adjacent. 
Since $W(x)$ is a clique and $G$ has chordality at most $4$, path $P_{ij}$ has length 3.

Let $P_{12}=u_1v_1w_2u_2$, $P_{23}=u_2v_2w_3u_3$ and $P_{31}=u_3v_3w_1u_1$.
We select the paths $P_{ij}$ for $i,j\in\{1,2,3\}$ in such a way that they have the maximum number of common edges. We claim that any two distinct paths have a common edge.

Notice that $v_i,w_i\in W(y_i)$ and if $v_i\neq w_i$, then $v_iw_i\in E(G)$ because each $W(y_i)$ is a clique by Corollary~\ref{cor:diam-bag}. Suppose that for some index $i\in\{1,2,3\}$, say for $i=1$, $v_i\neq w_i$. Consider the cycle $u_1u_2v_2w_3v_3w_1u_1$ if $v_3\neq w_3$ and the cycle  $u_1u_2v_2v_3w_1u_1$ if $v_3=w_3$. Because $G$ has chordality at most 4, these cycles are not induced. Since $u_1,u_2$ are not adjacent to $v_3,w_3$, 
it implies that $v_2w_1\in E(G)$.  Then the path $P_{12}$ could be replaced by $u_1w_1v_2u_2$ and we would get more common edges in the paths. Therefore, $v_i=w_i$ for $i\in\{1,2,3\}$. 

It remains to observe that $G[\{u_1,u_2,u_3,v_1,v_2,v_3\}]$ is isomorphic to $\overline{C}_6$.
\end{proof}

\section{Computing the Hadwiger Number}\label{sec:main}
In this section we show that {\sc Hadwiger Number} problem can be solved in polynomial time on cographs and bipartite permutation graphs. We complement these results by showing that the problem is \NP-complete on co-bipartite graphs, another well-known subclass of the class of $4$-chordal graphs.

\subsection{Hadwiger number of cographs}\label{sec:cographs}
We need some additional terminology.

Let $G_1$ and $G_2$ be two graphs with $V(G_1) \cap V(G_2) = \emptyset$. The {\it
(disjoint) union} of $G_1$ and $G_2$ is defined as 
$$G_1\oplus G_2=(V(G_1)\cup V(G_2),E(G_1)\cup E(G_2)),$$
and the {\it join} of $G_1$ and $G_2$ is defined as 
$$G_2\otimes G_2=(V(G_1)\cup V(G_2), E(G_1)\cup E(G_2)\cup \{uv\mid u\in V(G_1), v\in V(G_2)\}).$$
It is well-known (see, e.g., \cite{BrandstadtLS1999,Golumbic04}) that every cograph can be constructed  recursively from isolated vertices using these two operations. Equivalently, cographs can be defined as follows.
A \emph{cotree} $T$ of a cograph $G$ 
is a rooted tree
with two types of interior nodes: $0$-nodes (corresponding to disjoint unions) and $1$-nodes (corresponding to joins). 
The vertices of $G$ are assigned to the leaves of $T$ in a 
one-to-one manner. Two vertices $u$ and $v$ are adjacent 
in $G$ if and only if the lowest common ancestor of the leaves 
$u$ and $v$ in $T$ is a $1$-node. A graph is a cograph if
and only if it has a cotree~\cite{CorneilLS81}.
Cographs can be recognized and their corresponding cotrees
can be generated in linear time \cite{CorneilPS85,HabibP05}.

\begin{theorem}\label{thm:cograph}
The {\sc Hadwiger Number} problem can be solved in $O(n^3)$ time on cographs.
\end{theorem}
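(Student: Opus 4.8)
The plan is to exploit the recursive structure of cographs via their cotree together with Lemma~\ref{lem:clique}, which tells us that for connected graphs the Hadwiger number equals the largest $p$ such that $K_p$ is a \emph{contraction} of $G$. I would first observe that if $G$ is disconnected (its cotree root is a $0$-node), then $h(G)=\max_i h(G_i)$ over the connected components $G_i$, since a $K_p$-minor lives entirely inside one component. Hence the interesting case is when the root of the cotree is a $1$-node, i.e.\ $G=G_1\otimes G_2$ for cographs $G_1,G_2$ corresponding to the two subtrees (grouping children into two blocks if the root has more than two children). I would design a dynamic-programming algorithm over the cotree that computes, for each node, enough information to reconstruct $h$ at the parent.

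**What to compute at each node.** The key quantity I would track for the subgraph $G_v$ rooted at a cotree node $v$ is not just $h(G_v)$ but, for each possible ``number of parts used'' $t$, the minimum number of vertices of $G_v$ needed to realize a $K_t$-contraction structure inside $G_v$ — equivalently, for each budget $b$ of vertices, the maximum $t$ such that $G_v$ contains $t$ pairwise-adjacent connected bags using at most $b$ vertices total. Call this function $f_v(b)$. For a join $G_v=G_1\otimes G_2$, any bag that uses vertices from both sides is automatically connected and every pair of bags straddling the two sides is automatically adjacent; moreover all of $G_1$ is adjacent to all of $G_2$. So a $K_t$-contraction structure of $G_1\otimes G_2$ can be assembled by: taking a partial structure in $G_1$ realizing $t_1$ bags on $b_1$ vertices, a partial structure in $G_2$ realizing $t_2$ bags on $b_2$ vertices, and then a third group of bags each of which grabs one leftover vertex from $G_1$ and one from $G_2$ (or we can be cleverer — a single leftover vertex of $G_1$ together with a single leftover vertex of $G_2$ forms a valid bag, and such bags are adjacent to everything). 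Using Lemma~\ref{l-big} / Lemma~\ref{lem:diam-bag} one argues that in an optimal structure the bags can be taken to be very simple (on a cograph of chordality~$4$, Corollary~\ref{cor:diam-bag} forces each bag to be a clique, and one can push further), which bounds $b$ by $O(n)$ and keeps the tables polynomial in size. For a disjoint union $G_v=G_1\oplus G_2$, a connected bag lies in one side, so $f_v$ is obtained from $f_{G_1}$ and $f_{G_2}$ by the observation above that a $K_t$-minor lives in one component — here $f_v(b)=\max(f_{G_1}(b),f_{G_2}(b))$, and we never need to combine.

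**Running time.** There are $O(n)$ cotree nodes. At a $1$-node we convolve two tables indexed by a vertex budget in $[0,n]$, which costs $O(n^2)$ per node if done naively by ranging over the split $(b_1,b_2)$ of the budget and the split $(t_1,t_2)$ of parts — but in fact, because the ``third group'' of mixed bags contributes one part per unit of min$(b_1^{\mathrm{left}},b_2^{\mathrm{right}})$ of leftover, the recurrence collapses to a single $O(n)$-time pass once the two child tables are sorted/prefix-processed, giving $O(n)$ per internal node and $O(n^2)$ overall; the stated $O(n^3)$ bound leaves comfortable slack, so even the crude quadratic-per-node convolution is fine. Finally $h(G)=\max\{t : f_{\text{root}}(n)\ge t\}$, read off in $O(n)$ time.

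**Main obstacle.** The real work is the structural lemma needed to justify that at a join node we only ever have to consider bags that are singletons, edge-bags spanning the two sides, or small cliques — i.e.\ that the mixed bags can be taken to have exactly one vertex on each side, so that the ``third group'' contributes its parts at the cheapest possible rate of two vertices per part. This is where Lemma~\ref{lem:diam-bag} and Corollary~\ref{cor:diam-bag} (cographs have chordality at most~$2$, so every bag in a minimal induced-minor structure is a clique) and a minimality argument in the spirit of the proof of Lemma~\ref{lem:prism} come in: one shows any minimal $K_p$-contraction structure of a cograph has bags of size at most two, after which the combination rule at a join becomes the clean numerical recurrence sketched above. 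Proving this size bound cleanly — handling the condition~(iv) that every vertex lie in some bag, which is what distinguishes a contraction structure from an induced-minor structure — is the step I expect to require the most care.
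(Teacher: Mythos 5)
Your plan follows the same high-level route as the paper --- a dynamic program over the cotree, after reducing to witness structures whose bags have size at most two --- but there is a concrete gap in the step you flag as the ``main obstacle.'' You set out to prove that every minimal $K_p$-\emph{contraction} structure of a cograph has bags of size at most two, worrying about condition~(iv). That worry is fatal: the statement is false. For the cograph $K_{1,n}$ the only $K_2$-contraction structure puts the center together with $n-1$ leaves in one bag and the remaining leaf in the other, so a bag of size $n$ is forced; no minimality argument will remove it, because condition~(iv) makes every leaf live in some bag and leaves can only connect through the center. The way out --- and what the paper actually does --- is to use the \emph{other} direction of Lemma~\ref{lem:clique}: for connected $G$, having $K_p$ as a contraction is equivalent to having $K_p$ as an \emph{induced minor}. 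One then works entirely with induced-minor structures (which are not required to cover $V(G)$), so Lemma~\ref{lem:prism}, which is stated precisely for induced-minor structures and which you already cite, gives the size-two bound directly, using that cographs are $\overline{C}_6$-free and $4$-chordal. You invoke Lemma~\ref{lem:clique} at the start but only read off the contraction direction; reading off the induced-minor direction instead dissolves your obstacle. (Minor slip: cographs have chordality $4$, not $2$; $C_4$ is a cograph. This does not affect the argument since Corollary~\ref{cor:diam-bag} needs only chordality at most~$4$.)

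Beyond that, your choice of state --- a vertex budget $b$ --- is a legitimate alternative to the paper's choice (the number $r$ of edge-bags; for a nice structure with $p$ bags one has $b = p + r$), and your union/join recurrences are in the right spirit. But your running-time claim is asserted rather than argued. The paper's join recurrence is kept one-dimensional per value of $r$ by their Claim~1, which shows that internal edge-bags of $G_1$ and of $G_2$ can be traded for mixed edge-bags, so one side can be assumed to contribute no internal edge-bags; this is what yields $O(n)$ per $r$ and $O(n^3)$ overall. Your join is a three-way split $b=b_1+b_2+2s$ with coupled caps $b_1+s\le n_1$ and $b_2+s\le n_2$; the natural triple loop is $O(n^3)$ per join node and $O(n^4)$ in total, overshooting the stated bound, and the ``collapses to a single $O(n)$-time pass'' step would need either an analogue of Claim~1 or a concavity argument for the $f_v$ tables to be made rigorous.
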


\begin{proof}
Let $G$ be a cograph on $n$ vertices. We may assume that $G$ is connected, as otherwise we can simply consider the connected components of $G$ one by one. 
By Lemma~\ref{lem:clique}, it is sufficient to find the maximum $p$ such that $K_p$ is an induced minor of $G$. Because cographs are $\overline{C}_6$-free, we can use Lemma~\ref{lem:prism}.

For a non-negative integer $r$,
denote by $c_r(G)$ the largest integer $p$ 
such that $G$ has a nice $K_p$-induced minor structure with exactly~$r$ edge-bags.
If $G$ has no such structure for any~$p$, then $c_r(G)=0$. Notice that $c_0(G)$ is the size of a maximum clique in $G$. 

If $G$ has one vertex, then $c_r(G)=1$ if $r=0$ and $c_r(G)=0$ otherwise. It is also straightforward to see that 
$$c_r(G_1\oplus G_2)=\max\{c_r(G_1),c_r(G_2)\},$$
for any two disjoint graphs $G_1$ and $G_2$. 

We need the following observation about joins.

\medskip
\noindent
{\bf Claim 1. }{\it Let $G_1$ and $G_2$ be disjoint graphs, and suppose that $G=G_1\otimes G_2$ has a nice $K_p$-induced minor structure with $r>0$ edge-bags.
Then $G$ has a $K_p$-induced minor structure ${\cal W}=\{W(x)\mid x\in V(K_p)\}$ with $r$ edge-bags such that either $|V(G_1)\cap W(x)|\leq 1$ for every edge-bag $W(x)\in{\cal W}$, or $|V(G_2)\cap W(x)|\leq 1$ for every edge-bag $W(x)\in{\cal W}$.
}

\begin{proof}[Proof of Claim~1] 
Let ${\cal W}=\{W(x)\mid x\in V(K_p)\}$  be a nice  $K_p$-induced minor structure in $G$ that has $r$ edge-bags.
Suppose that $W(x)=\{u_1,v_1\}$ for $u_1,v_1\in V(G_1)$ and $W(y)=\{u_2,v_2\}$ for $u_2,v_2\in V(G_2)$. Because $G=G_1\otimes G_2$, 
the set $\{u_1,v_1,u_2,v_2\}$ induces $K_4$. 
We replace $W(x)$ and $W(y)$ by 
$W'(x)=\{u_1,u_2\}$ and $W'(y)=\{v_1,v_2\}$. Because $u_1,v_1$ are adjacent to the vertices of $G_2$ and $u_2,v_2$ are adjacent to the vertices of $G_1$, the bags  $W'(x)$ and $W'(y)$ are adjacent to every bag $W(z)$ for $z\neq x,y$. Therefore, we obtain a new nice $K_p$-induced minor structure with $r$ edge-bags. By doing such replacement recursively, we obtain a  nice $K_p$-induced minor structure with the desired property. This completes the proof of Claim~1. 
\renewcommand{\qedsymbol}{$\diamond$}
\end{proof}

Now we obtain the formula for $c_r(G_1\otimes G_2)$.

\medskip
\noindent
{\bf Claim 2. }{\it Let $G_1$ and $G_2$ be disjoint graphs, $n_1=|V(G_1)|$ and $n_2=|V(G_2)|$, and let $r$ be a non-negative integer. 
For a non-negative integer $s\leq r$ and $i=1,2$, let
$$c_{s,i}=
\begin{cases}
s+\min\{c_{r-s}(G_i),n_i-r\}+\min\{n_{3-i}-s,c_0(G_{3-i})\} &\mbox{if } n_i-2r+s\geq 0,\\
0&\mbox{if }n_i-2r+s<0.
\end{cases}
$$
Then
$$c_r(G_1\otimes G_2)=\max\{c_{s,i}\mid 0\leq s\leq \min\{n_1,n_2,r\},1\leq i\leq 2\}.$$
}

\begin{proof}[Proof of Claim~2] 
Let $G=G_1\otimes G_2$. Let $p=c_r(G)>0$ and let ${\cal W}=\{W(x)\mid x\in V(K_p)\}$  be a nice  $K_p$-induced minor structure in $G$ that has $r$ edge-bags. 
Let ${\cal W}_1^{(1)}=\{W(x)\mid W(x)=\{u\},u\in V(G_1)\}$, ~${\cal W}_1^{(2)}=\{W(x)\mid W(x)=\{u,v\},u,v\in V(G_1)\}$,
${\cal W}_2^{(1)}=\{W(x)\mid W(x)=\{u\},u\in V(G_2)\}$, ${\cal W}_2^{(2)}=\{W(x)\mid W(x)=\{u,v\},u,v\in V(G_2)\}$,
and ${\cal W}_3=\{W(x)\mid W(x)=\{u,v\},u\in V(G_1),v\in V(G_2)\}$. 
By Claim~1,  we can assume that ${\cal W}_1^{(2)}=\emptyset$ or ${\cal W}_2^{(2)}=\emptyset$. 
Suppose that ${\cal W}_2^{(2)}=\emptyset$; the case ${\cal W}_1^{(1)}=\emptyset$ is symmetric. Let $s=|{\cal W}_3|$.
Clearly, $s\leq\min\{n_1,n_2,r\}$. The set ${\cal W}_1^{(2)}$ has $r-s$ edge-bags, and these bags are disjoint with the bags of ${\cal W}_3$. Because each bag of ${\cal W}_3$ has one vertex in $V(G_1)$, it holds that $2(r-s)+s\leq n_1$. 
Hence,  
$$c_{s,1}=s+\min\{c_{r-s}(G_1),n_1-r\}+\min\{n_2-s,c_0(G_{2})\}.$$
Observe that $|{\cal W}_1^{(1)}|+|{\cal W}_1^{(2)}|\leq c_{r-s}(G_1)$.
Also because the bags of ${\cal W}_1^{(1)}$,  ${\cal W}_1^{(2)}$ and ${\cal W}_3$ are disjoint sets, $|{\cal W}_1^{(1)}|+2|{\cal W}_1^{(2)}|+s\leq n_1$.
We have that  $n_1\geq |{\cal W}_1^{(1)}|+2|{\cal W}_1^{(2)}|+s=|{\cal W}_1^{(1)}|+|{\cal W}_1^{(2)}|+(r-s)+s$ and $|{\cal W}_1^{(1)}|+|{\cal W}_1^{(2)}|\leq n_1-r$.
Clearly, $|{\cal W}_2^{(1)}|\leq c_{0}(G_2)$, and because the bags of ${\cal W}_2^{(1)}$  are disjoint with the bags of ${\cal W}_3$,
$|{\cal W}_2^{(1)}|+s\leq n_2$. Therefore, 
$|{\cal W}_2^{(1)}|\leq \min\{n_2-s,c_{0}(G_2)\}$.
We have that $p=|{\cal W}_1^{(1)}|+|{\cal W}_1^{(2)}|+|{\cal W}_2^{(1)}|+|{\cal W}_3|\leq 
s+\min\{c_{r-s}(G_1),n_1-r\}+\min\{n_2-s,c_0(G_{2})\}=c_{s,1}$. 
We conclude that 
$$c_r(G)\leq \max\{c_{s,i}\mid 0\leq s\leq \min\{n_1,n_2,r\},1\leq i\leq 2\}. $$

For the other direction, let $0\leq s\leq \min\{n_1,n_2,r\}$ and assume that $c_{s,1}\geq c_{s,2}$ (the other case is symmetric). Let $K$ be a maximum clique in $G_2$. Recall that $c_0(G_2)=|K|$. 

If $c_{s,1}=0$, then it is trivial to see that $c_r(G)\geq c_{s,1}$. Assume that $c_{s,1}>0$.
Then  $n_1-2r+s\geq 0$. Let ${\cal W}_1$ be a nice $K_q$-induced minor structure in $G_1$ with $r-s$ edge-bags for $q=c_{r-s}(G_1)$. 

Suppose that $c_{r-s}(G_1)\leq n_1-r$, i.e., $\min\{c_{r-s}(G_1),n_1-r\}=c_{r-s}(G_1)$. 
 We select a set $R_1$ of $s$ vertices in $V(G_1)$ such that
the vertices of $R_1$ are not included in the bags of ${\cal W}_1$. 
We always can do it because the bags of ${\cal W}_1$ contain $c_{r-s}(G_1)+r-s\leq n_1-s$ vertices of $G_1$. 
 Let $R_2$ be a set of $s$ vertices in $V(G_2)$ such that $R_2\cap K$ has minimum size. 
We consider a nice $K_p$-induced minor structure in $G$ that has $s$ edge-bags containing pairs of vertices $\{u,v\}$ for $u\in R_1$ and $v\in R_2$,
$r-s$ edge-bags that are edge-bags of ${\cal W}_1$, the singletons of ${\cal W}_1$, and 
$|K\setminus R_2|$ singletons $\{v\}$ for $v\in K\setminus R_2$. Here $p=s+c_{r-s}(G_1)+|K\setminus R_2|$. 
If $c_0(G_2)\leq n_2-s$, then $|K\setminus R_2|=c_0(G_2)$. If $c_0(G_2)\geq n_2-s$, then $|K\setminus R_2|=n_2-s$.
Then  
$c_r(G)\geq p=s+c_{r-s}(G_1)+|K\setminus R_2|\geq s+\min\{c_{r-s}(G_1),n_1-r\}+\min\{n_2-s,c_0(G_2)\}$. 

Finally, assume that $c_{r-s}(G_1)> n_1-r$, i.e.,  $\min\{c_{r-s}(G_1),n_1-r\}=n_1-r$.
Let $S$ be the set of vertices $u\in V(G_1)$ such that $\{u\}$ is a singleton in ${\cal W}$. We select a set $R_1$ of $s$ vertices in $V(G_1)$ such that
the vertices of $R_1$ are not included in the edge-bags of ${\cal W}_1$ and $R_1\cap S$ has minimum size.
We can find such a set $R_1$ because  ${\cal W}_1$ has $2(r-s)\leq n_1-s$ vertices in the edge-bags. 
Because the total number of vertices of $G_1$ in the bags of ${\cal W}_1$ is $c_{r-s}(G_1)+r-s>n_1-s$, $R_1\cap S\neq\emptyset$.
Also, $|S\setminus R_1|=n_1-2(r-s)-s=n_1-2r+s$. 
 Let $R_2$ be a set of $s$ vertices in $V(G_2)$ such that $R_2\cap K$ has minimum size. 
We consider a nice $K_p$-induced minor structure in $G$ that has $s$ edge-bags containing pairs of vertices $\{u,v\}$ such that $u\in R_1$ and $v\in R_2$,
$r-s$ edge-bags that are edge-bags of ${\cal W}_1$, 
$|S\setminus R_1|$ singletons $\{u\}$ such that $\{u\}\in {\cal W}_1$ and $u\in S\setminus R_1$, and 
$|K\setminus R_2|$ singletons $\{v\}$ such that $v\in K\setminus R_2$. Here $p=r+|S\setminus R_1|+|K\setminus R_2|$. 
Then  
$c_r(G)\geq p=r+|S\setminus R_1|+|K\setminus R_2|=r+(n_1-2r+s)+|K\setminus R_2|=s+(n_1-r)+|K\setminus R_2|
\geq s+\min\{c_{r-s}(G_1),n_1-r\}+\min\{n_2-s,c_0(G_2)\}$.

In all cases $c_r(G)\geq c_{s,1}$. By our assumption,  $c_{s,1}\geq c_{s,2}$. Hence,
$$c_r(G)\geq \max\{c_{s,i}\mid 0\leq s\leq \min\{n_1,n_2,r\},i=1,2\}.$$
and this completes the proof of Claim~2.
\renewcommand{\qedsymbol}{$\diamond$}
\end{proof}

In order to find the maximum~$p$ such that $K_p$ is an induced minor of $G$, we first compute a cotree of $G$, which can be done in linear time~\cite{CorneilPS85,HabibP05}. We then compute $c_r(G)$ for all $r\in \{0,\ldots,n\}$ using the obtained formulas for $c_r(G_1\oplus G_2)$ and $c_r(G_1\otimes G_2)$ in $O(n^3)$ time. Let $p=\max_{0\leq r\leq n}c_r(G)$. It remains to observe that by Lemma~\ref{lem:prism}, $K_p$ is the complete graph of maximum size that is an induced minor of~$G$.
\end{proof}

\subsection{Hadwiger number of bipartite permutation graphs}\label{sec:bip-perm}
Let us for a moment consider the class of chordal bipartite graphs. Recall that these are exactly the bipartite graphs that have chordality at most~$4$. It is well-known that chordal bipartite graphs form a proper superclass of the class of bipartite permutation graphs. Since chordal bipartite graphs have chordality at most~$4$ and are $\overline{C}_6$-free due to the absence of triangles, we can apply Lemma~\ref{lem:prism} to this class. Let us additionally observe that the number of singletons in any $K_p$-induced minor structure of a bipartite graph is at most~$2$. 

The above observations allow us to reduce the {\sc Hadwiger Number} problem on chordal bipartite graphs to a special matching problem as follows. We say that a matching~$M$ in a graph $G$ is a \emph{clique-matching} if for any two distinct edges 
$e_1,e_2\in M$, there is an edge in $G$ between an end-vertex of $e_1$ and an end-vertex of $e_2$. Now consider the following decision problem:

\medskip
\indent {\sc Clique-Matching}\\
\indent {\it Instance:} \hspace*{-.01cm} A graph $G$ and a positive integer $k$.\\
\indent {\it Question:} Is there a clique-matching of size at least $k$ in $G$?

\begin{lemma}\label{lem:match} 
If the {\sc Clique-Matching} problem can be solved in $f(n,m)$ time on chordal bipartite graphs, then the {\sc Hadwiger Number} problem can be solved in $O((n+m)\cdot f(n,m))$ time on this graph class.
\end{lemma}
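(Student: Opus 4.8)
The plan is to establish a two-way correspondence between $K_p$-induced minor structures of a chordal bipartite graph $G$ and clique-matchings in~$G$, up to an additive constant of at most~$2$ coming from singleton bags. By Lemma~\ref{lem:prism} (applicable since chordal bipartite graphs have chordality at most~$4$ and contain no triangle, hence no $\overline{C}_6$), whenever $K_p$ is an induced minor of~$G$ there is a \emph{nice} $K_p$-induced minor structure ${\cal W}$, in which every bag is a singleton or an edge-bag. As noted in the excerpt, the number of singleton bags in such a structure is at most~$2$: any two singletons would be adjacent (since $K_p$ is complete), so they would form a clique of size~$2$, and a third singleton together with these two would be pairwise adjacent, impossible in a triangle-free graph. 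Thus a nice $K_p$-induced minor structure consists of $r$ edge-bags and $t\le 2$ singletons with $p=r+t$, the $r$ edges forming a matching $M$ (bags are disjoint), and by condition~(ii) any two of these edge-bags are adjacent in~$G$ — which is exactly the defining property of a clique-matching. Conversely, from any clique-matching $M$ of size $r$ one gets a nice $K_r$-induced minor structure by taking the edges of $M$ as bags; condition~(i) holds as each bag induces an edge, and condition~(ii) holds by the clique-matching property.

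The subtlety is handling the up-to-two extra singletons correctly so that the reduction is tight. First I would argue the easy direction: if $G$ has a clique-matching of size $k$, then $K_k$ is an induced minor of $G$, so $h(G)\ge k$; and if moreover $G$ (being connected, which we may assume by treating components separately as in Theorem~\ref{thm:cograph}) is not complete but has some structure allowing extra singleton bags, we may be able to add them — but rather than chase this, the clean statement is: $h(G) \ge k$ whenever there is a clique-matching of size $k$, and $h(G) \le M^*(G) + 2$ where $M^*(G)$ is the maximum clique-matching size. The algorithm then is: compute $M^*(G)$ by one call to the {\sc Clique-Matching} solver; this gives $h(G) \in \{M^*(G), M^*(G)+1, M^*(G)+2\}$; to pin down the exact value, for each candidate value of the number $t\in\{0,1,2\}$ of singletons and each choice of which $t$ vertices serve as singletons, delete those vertices (or rather, reserve them) and ask whether the remaining graph admits a clique-matching of size $p-t$ whose edges are all non-adjacent to... — here one must be a little careful, because the singletons must be adjacent to every edge-bag and (for an \emph{induced} minor) non-adjacent to... no, actually for a minor via Lemma~\ref{lem:clique} we only need a minor, so only condition~(ii) matters. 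So: $K_p$ is a minor of $G$ iff there exist $t\in\{0,1,2\}$, a set $S$ of $t$ vertices forming a clique with $G[S]$ complete and each vertex of $S$ adjacent to..., and a clique-matching $M$ of size $p-t$ in $G - N'$ such that every edge of $M$ is adjacent to every vertex of $S$.

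Concretely, the order of steps I would carry out: (1) reduce to connected $G$; (2) for each of the $O(1 + n + n^2) = O(n+m)$ choices of a candidate singleton set $S$ of size $0$, $1$, or $2$ (when $|S|=2$ the two vertices must be adjacent, giving $O(m)$ choices), form the subgraph $G_S$ on those vertices that are adjacent to all of $S$ (and not in $S$), observing $G_S$ is again chordal bipartite as an induced subgraph; (3) call the {\sc Clique-Matching} solver on $G_S$ to get its maximum clique-matching size $q_S$; then the candidate $K_{q_S+|S|}$ is a minor of $G$, because the $q_S$ matching edges plus the $|S|$ singletons satisfy condition~(ii) — each matching edge is adjacent to every $S$-vertex by construction of $G_S$, any two matching edges are adjacent by the clique-matching property, and the $S$-vertices are pairwise adjacent and adjacent to every matching edge; (4) output the maximum of $q_S+|S|$ over all choices of $S$. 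Correctness of optimality follows because a nice $K_p$-induced minor structure with singleton set $S$ restricts to a clique-matching of size $p-|S|$ inside $G_S$, so $p \le q_S + |S|$ for that particular $S$. The running time is $O(n+m)$ calls to the solver, each on an induced subgraph with at most $n$ vertices and $m$ edges, giving $O((n+m)\cdot f(n,m))$ as claimed. The main obstacle — and the step deserving the most care — is verifying that in the correspondence we genuinely need only property~(ii) (so that it suffices to produce a \emph{minor}, invoking Lemma~\ref{lem:clique} to conclude the induced-minor/contraction statement), and that restricting to $G_S$ neither loses the adjacency of matching edges to $S$ nor spuriously creates structures that fail to extend to a genuine $K_p$-minor of the whole graph; both follow once one notes $G_S$ is an induced subgraph of $G$ and that membership in $G_S$ is precisely "adjacent to all of $S$."
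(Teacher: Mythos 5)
Your overall strategy — pass to a nice $K_p$-induced minor structure via Lemma~\ref{lem:prism}, observe there are at most two singleton bags by bipartiteness, and reduce each choice of singleton set $S$ to a {\sc Clique-Matching} query on an auxiliary graph — is exactly the paper's plan. However, your definition of the auxiliary graph $G_S$ (``vertices adjacent to all of $S$'') is incorrect and breaks the reduction. The adjacency requirement between a singleton $\{u\}$ and an edge-bag $\{a,b\}$ concerns the \emph{bag}, not each of its vertices: it suffices that $a$ or $b$ be adjacent to $u$. In a bipartite graph with $u\in V_1$, the $V_1$-endpoint $a$ of an edge-bag can never be adjacent to $u$, so requiring both endpoints to lie in $N_G(u)$ makes $G_S=G[N_G(u)]$ a subset of $V_2$ and hence edgeless, giving $q_S=0$. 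For $|S|=2$ the situation is worse: no vertex of a bipartite graph is adjacent to both endpoints of an edge $uv$, so $G_S=\emptyset$. Your algorithm would therefore return $\max\{M^*(G),2\}$ and miss every structure that actually needs one or two singletons, which is wrong in general.

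The fix is a one-sided restriction. For $|S|=1$, $S=\{u\}$ with $u\in V_1$, keep all of $V_1\setminus\{u\}$ unrestricted and restrict only the other side: set $G'=G[(V_1\setminus\{u\})\cup N_G(u)]$. Then every edge of $G'$ has its $V_2$-endpoint in $N_G(u)$, so its bag is adjacent to $\{u\}$, and conversely every edge-bag adjacent to and disjoint from $\{u\}$ lies inside $G'$. For $|S|=2$, $S=\{u,v\}$ with $uv\in E(G)$, $u\in V_1$, $v\in V_2$, the constraint on an edge-bag $\{a,b\}$ (with $a\in V_1$, $b\in V_2$) is $b\in N_G(u)$ and $a\in N_G(v)$, so one takes $G'=G[(N_G(v)\setminus\{u\})\cup(N_G(u)\setminus\{v\})]$. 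These are exactly the auxiliary graphs the paper uses, and with them the rest of your argument — $O(1)+O(n)+O(m)$ solver calls on induced (hence still chordal bipartite) subgraphs of $G$, for a total of $O((n+m)\cdot f(n,m))$ time — goes through.
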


\begin{proof}
Suppose that {\sc Clique-Matching} can be solved in $f(n,m)$ time on chordal bipartite graphs with $n$ vertices and $m$ edges. 
Let $(G,p)$ be an instance of the {\sc Hadwiger Number} problem, where $G$ is a chordal bipartite graph with $n$ vertices and $m$ edges. We assume that $G$ is connected, as otherwise we can simply consider the connected components of $G$ one by one. Let $V_1,V_2$ be a bipartition of~$V(G)$.

By Lemma~\ref{lem:clique}, it holds that $(G,p)$ is a yes-instance if and only if $K_p$ is an induced minor of $G$. Moreover, as a result of Lemma~\ref{lem:prism}, $K_p$ is an induced minor of $G$ if and only if $G$ has a nice $K_p$-induced minor structure $\cal W$.
Hence it remains to prove that we can decide in $O((n+m)\cdot f(n,m))$ time whether such a structure ${\cal W}$ exists. 
Recall that any $K_p$-induced minor structure of $G$ has at most two singletons due to the fact that $G$ is bipartite, and that any two bags in such a structure are adjacent. 

It is straightforward to see that $G$ has a nice $K_p$-induced minor structure without singletons if and only if $G$ has a clique-matching of size at least $p$. Hence, the existence of such a structure can be checked by solving {\sc Clique-Matching} for the instance $(G,p)$ in $f(n,m)$ time.
 
To verify the existence of a nice $K_p$-induced minor structure with one singleton $\{u\}$, we do as follows for every possible candidate vertex $u\in V(G)$. If $u\in V_1$, then we construct the graph $G'=G[(V_1\setminus\{u\})\cup N_G(u)]$, and we construct $G'=G[(V_2\setminus\{u\})\cup N_G(u)]$ if $u\in V_2$. We then solve {\sc Clique-Matching} for the instance $(G',p-1)$.

Finally, to check whether $G$ has a nice $K_p$-induced minor structure with two singletons, we check all edges $uv\in E(G)$ with $u\in V_1$ and $v\in V_2$. For each such edge $uv$, we construct $G'=G[(N_G(u)\setminus\{v\})\cup(N_G(v)\setminus\{u\}) ]$ and solve {\sc Clique-Matching} for the instance $(G',p-2)$.

It is clear that the total running time is $O((n+m)\cdot f(n,m))$.
\end{proof}

We will use the following characterization of bipartite permutation graphs given by  Spinrad, Brandst{\"a}dt, and Stewart \cite{SpinradBS87} (see also~\cite{BrandstadtLS1999}). 
Let $G$ be a bipartite graph and let $V_1,V_2$ be a bipartition of $V(G)$. An ordering of vertices of $V_2$ has the \emph{adjacency property} if for every $u\in V_1$, $N_G(u)$ consists of vertices which are consecutive in the ordering of $V_2$. An ordering of vertices of $V_2$ has the \emph{enclosure property} if for every pair  of vertices $u,v\in V_1$ such that $N_G(u)\subseteq N_G(v)$, vertices in $N_G(v)\setminus N_G(u)$ occur consecutively in the ordering of $V_2$. 

\begin{lemma}[\cite{SpinradBS87}]\label{lem:ordering}
Let $G$ be a bipartite graph with bipartition $V_1,V_2$. The graph $G$ is a bipartite permutation graph if and only there is an ordering of $V_2$ that has the adjacency and enclosure properties. Moreover, bipartite permutation graphs can be recognized and the corresponding ordering of $V_2$ can be constructed in linear time. 
\end{lemma}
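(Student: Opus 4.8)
The plan is to route the proof through the geometric model of permutation graphs and then read off the linear-time algorithm from the argument. Recall that a graph is a \emph{permutation graph} exactly when it admits a \emph{permutation diagram}: each vertex is a straight segment joining a point on a top line to a point on a parallel bottom line, and two vertices are adjacent if and only if their segments cross. Hence a bipartite graph $G$ with bipartition $V_1,V_2$ is a bipartite permutation graph if and only if it has such a diagram.

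\textbf{Necessity.} Given a permutation diagram of $G$: since $V_1$ is an independent set, the segments of $V_1$ are pairwise non-crossing and therefore appear in the same left-to-right order on the top line as on the bottom line, and likewise for $V_2$. I would take the ordering of $V_2$ induced by this common order and check the adjacency property by fixing $u\in V_1$ and scanning the segments of $V_2$ in order: their top endpoints and their bottom endpoints both move monotonically, and a segment of $V_2$ crosses that of $u$ precisely when its two endpoints straddle the endpoints of $u$ in opposite senses, which happens exactly on an interval of the ordering. For the enclosure property, if $u,v\in V_1$ and $N_G(u)\subseteq N_G(v)$, then the non-crossing of the segments of $u$ and $v$ forces their endpoints to be consistently ordered on the two lines; a short case analysis then shows that the intervals $N_G(u)$ and $N_G(v)$ share one endpoint, so $N_G(v)\setminus N_G(u)$ is again an interval.

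\textbf{Sufficiency.} Given an ordering $v_1,\dots,v_q$ of $V_2$ with both properties, the adjacency property makes every $N_G(u)$ an interval $\{v_{l(u)},\dots,v_{r(u)}\}$. The crucial step is to build an ordering of $V_1$ compatible with the given ordering of $V_2$, i.e.\ such that the pair of orderings forms a \emph{strong ordering}: whenever $ub$ and $u'b'$ are edges with $u$ before $u'$ in $V_1$ and $b'$ before $b$ in $V_2$, both $ub'$ and $u'b$ are edges. I would order $V_1$ first by the left endpoint $l(u)$, break ties using the enclosure property, and then verify the strong-ordering condition directly. A strong ordering is finally converted into a permutation diagram by the standard interleaving of the two vertex orders along the two lines so that two segments cross iff the corresponding vertices are adjacent; this exhibits $G$ as a permutation graph, hence as a bipartite permutation graph.

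\textbf{Algorithm and main obstacle.} Finding an ordering of $V_2$ for which every $N_G(u)$ is consecutive is exactly the consecutive-ones problem for the bipartite adjacency matrix with $V_2$ indexing the columns, solvable in linear time (e.g.\ via a PQ-tree); the enclosure property restricts the remaining freedom and can be enforced, and then checked, in linear time, after which one rejects if no valid ordering survives. (Alternatively one can appeal to an existing linear-time recognition algorithm for permutation graphs.) I expect the hardest part to be sufficiency: arguing that the enclosure-based tie-breaking really does yield one linear order of $V_1$ that works simultaneously with the prescribed order of $V_2$, and converting the resulting strong ordering into an explicit permutation diagram. The monotonicity arguments in the necessity part and the consecutive-ones computation are comparatively routine.
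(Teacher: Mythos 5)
The paper does not prove this lemma; it is imported wholesale from Spinrad, Brandst\"adt, and Stewart~\cite{SpinradBS87} and used as a black box in Section~3.2. There is therefore no in-paper proof to compare your argument against, and on a strict reading the expected ``proof'' here is simply a pointer to the reference.

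That said, your sketch does follow the approach actually taken in~\cite{SpinradBS87}: pass through the permutation-diagram model, observe that the segments within each side of the bipartition are pairwise non-crossing and hence have a well-defined common order, extract the adjacency and enclosure properties from monotonicity of the endpoints, and for the converse build a \emph{strong ordering} (their terminology) of $V_1$ compatible with the prescribed $V_2$-ordering and convert it into a diagram. A few places in your sketch carry more weight than the prose suggests. In necessity, the claim that the two intervals ``share one endpoint'' needs the observation that for a fixed $u\in V_1$ the crossing condition has a single sign pattern across all of $V_2$ (two adjacent $w,w'$ crossing $u$ in opposite senses would force $w,w'$ out of order on one line), and then a genuine case split on which of $u,v$ is ``left of'' the other. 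In sufficiency, the verification that ordering $V_1$ lexicographically by $(l(u),r(u))$ yields a strong ordering hinges precisely on enclosure: given $u<_{V_1}u'$, $b<_{V_2}b'$, $ub',u'b\in E$, the edge $ub$ is immediate, while $u'b'$ fails only if $r(u')<b'\le r(u)$, which together with $l(u)\le l(u')$ would give $N(u')\subsetneq N(u)$ with $N(u)\setminus N(u')$ split into two nonempty intervals --- contradicting enclosure unless $l(u)=l(u')$, in which case the tie-break gives $u'<_{V_1}u$. Spelling this out is the core of the converse and should not be left at ``verify directly.'' Finally, the algorithmic claim via PQ-trees is a reasonable alternative route to linear time, but it is not how~\cite{SpinradBS87} obtains it (they give a direct sweep-based recognition algorithm), so if you keep the PQ-tree route you also owe an argument that the enclosure property can be enforced and tested in linear time on top of the consecutive-ones structure. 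None of these are wrong ideas, but they are the parts where a full proof would have to do real work.
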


\begin{theorem}\label{thm:perm}
The {\sc Clique-Matching} problem can be solved in $O(mn^4)$ time on bipartite permutation graphs.
\end{theorem}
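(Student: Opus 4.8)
The plan is to exploit the strong‑ordering structure of bipartite permutation graphs. Using Lemma~\ref{lem:ordering} I would first compute, in linear time, an ordering of $V_2$ with the adjacency and enclosure properties, and then order $V_1$ so that, writing $N_G(u)$ as the interval $[\ell(u),r(u)]$ it forms in the ordering of $V_2$ (this interval exists by the adjacency property), both $\ell(\cdot)$ and $r(\cdot)$ are non‑decreasing along the ordering of $V_1$; that such a simultaneous ``staircase'' ordering exists is the classical strong‑ordering property of bipartite permutation graphs and follows from the enclosure property. Write $u\prec u'$ and $v\prec v'$ for these two orderings. The next step is to recast the problem combinatorially: a matching $\{u_1v_1,\dots,u_kv_k\}$, with the $u_i$ listed in $\prec$‑order, is a clique‑matching if and only if every two of its edges are linked, and a short case analysis using the staircase property shows that the edges $u_iv_i$ and $u_jv_j$ with $i<j$ are linked exactly when $v_i\succ v_j$, or $v_j\in N_G(u_i)$, or $v_i\in N_G(u_j)$. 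The useful consequence is that a ``crossing'' pair of edges (whose $V_2$‑endpoints occur in the opposite $\prec$‑order to its $V_1$‑endpoints) is automatically linked, so only ``parallel'' pairs impose a genuine constraint, namely an interval‑membership constraint between one edge's $V_2$‑endpoint and the other edge's neighbourhood.

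The crux is then a normalisation lemma: among all maximum clique‑matchings there is one whose edge sequence, read in $\prec$‑order of the $V_1$‑endpoints, has $V_2$‑endpoints of a restricted shape --- I would aim for a ``valley'', i.e.\ a strictly decreasing initial segment followed by a tail that is pinned down by a constant amount of interface data (essentially the $V_2$‑endpoint of the last chosen edge together with the smallest ``reach'' that future edges still have to meet). Such a lemma is needed because the clique‑matching condition is a condition on \emph{all} pairs of edges and is genuinely non‑local: if three vertices of $V_1$ have neighbourhoods $\{1,2,3\}$, $\{2,3,4,5\}$, $\{4,5,6,7\}$ in the ordering of $V_2$, then the edges to the $V_2$‑vertices $2$, $3$, $4$ respectively are linked for the two consecutive pairs but the first and third are not linked, so a greedy left‑to‑right construction that only remembers the previously chosen edge is doomed. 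What makes a bounded summary possible at all is precisely the monotonicity of $\ell$ and $r$: whether a new edge may still be added to a partial clique‑matching depends on the already chosen edges only through a lower envelope which, after normalisation, is described by $O(1)$ indices. Establishing this lemma --- fixing the exact constant‑size summary and proving that every optimal solution can be rearranged to respect it without losing any edges --- is the hard part, and where almost all of the effort lies.

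Granted the normalisation lemma, the algorithm is a dynamic program along the ordering of $V_1$. A state consists of the current $V_1$‑index together with the $O(1)$ boundary indices in the ordering of $V_2$ that form the summary, and the table entry is the maximum number of edges of a valid partial clique‑matching compatible with that summary. A transition advances to the next vertex of $V_1$ and either skips it or selects one incident edge, checking in $O(1)$ time --- thanks to the monotonicity of $\ell$ and $r$ --- that the new edge is linked to everything chosen so far, and updating the summary. With $O(n)$ choices of index and $O(n^2)$ combinations of (two) boundary indices there are $O(n^3)$ states, each transition examines $O(n)$ candidate $V_2$‑endpoints, and an outer loop over the $O(m)$ possible ``pivot'' edges (the valley edge of the normal form) resolves the remaining ambiguity in the shape; multiplying these factors yields the claimed $O(mn^4)$ running time. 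The remaining degenerate cases --- clique‑matchings with at most two edges, and the boundary bookkeeping of the normal form --- are routine once the lemma is available, and correctness of the whole procedure rests on the normalisation lemma together with the linkedness criterion derived in the first step.
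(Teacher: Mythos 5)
Your high-level plan is the same as the paper's: exploit the staircase/strong ordering, fix a pivot edge, then run a dynamic program over $V_1$ with a bounded-size state. But there is a genuine gap, and you have named it yourself: the ``normalisation lemma'' that would give a constant-size summary of the partial solution is left entirely unproven, and it is not at all clear that the ``valley'' normal form you gesture at is achievable. Without that lemma the algorithm is not specified, and the correctness argument is not present.

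The paper gets around this not by proving a general normal-form lemma about valley shapes, but by a different and more concrete observation that makes the remaining problem essentially one-dimensional. Having guessed an edge $u1$ so that $1$ is the smallest saturated $V_2$-vertex (and deleted everything below $1$ and everything above the rightmost neighbour $r$ of $u$ after some greedy preprocessing), the adjacency and enclosure properties force every other matched $V_1$-vertex to lie in one of two families: the set $X$ of vertices whose neighbourhood contains $[1,2]$, or the set $Y$ of vertices whose neighbourhood misses $1$ but contains $r$. The neighbourhoods within $X$ are nested intervals of the form $[1,\cdot]$ and within $Y$ are nested intervals of the form $[\cdot,r]$. Consequently any two $X$-edges are automatically linked, any two $Y$-edges are automatically linked, and the only nontrivial constraint is between an $X$-edge and a $Y$-edge, which reduces to whether their $V_2$-endpoints are compatible with the overlap interval $N_G(x_i)\cap N_G(y_j)$. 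This is what makes a three-index DP state $(i,j,\ell)$ --- how far along in $X$, how far along in $Y$, and a budget $\ell$ of saturated vertices allowed inside the current overlap interval --- both correct and small. Your ``crossing pairs are automatically linked'' observation is in the right spirit but much weaker than the $X$/$Y$ decomposition, and the DP you sketch (current $V_1$ index plus two boundary indices, with a monotone lower envelope) does not obviously remember enough to decide linkedness against \emph{all} previously chosen edges; the three-vertex example you give yourself is exactly the sort of obstruction that the paper's overlap-budget parameter $\ell$ is designed to handle, and you have not shown how your summary handles it. So the proposal is an accurate map of where the difficulty lies, but it does not cross it.
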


\begin{proof}
Let $G$ be a bipartite permutation graph and let $V_1,V_2$ be a bipartition of the vertex set. We assume without loss of generality that $G$ has no isolated vertices. Let $n_1=|V_1|$ and $n_2=|V_2|$. We present a dynamic programming algorithm for the problem. For simplicity, the algorithm we describe only finds the size of a maximum clique-matching $M$ in $G$, but the algorithm can be modified to find a corresponding clique-matching as well.

Our algorithm starts by constructing an ordering $\sigma_2$ of $V_2$ that has the adjacency and enclosure properties, which can be done in linear time due to Lemma~\ref{lem:ordering}.
From now on, we denote the vertices of $V_2$ by their respective rank in $\sigma_2$, that is $V_2=\{1,\ldots,n_2\}$. Observe that for every vertex $u\in V_1$, $N_G(u)$ forms an interval of $\sigma_2$. The {\em rightmost} (resp.~{\em leftmost}) neighbor of $u$ in $\sigma_2$ is the vertex of $N_G(u)$ which is the largest (resp.~smallest) in $\sigma_2$.

Let $uv\in E(G)$ with $u\in V_1$ and $v\in V_2$ be an edge in $G$ such that $uv$ belongs to some maximum clique-matching in $G$ and there is no $v'\in V_2$ with $v'<v$ such that $v'$ is saturated by a maximum clique-matching in~$G$. Our algorithm guesses the edge $uv$ by trying all different edges of~$G$. For each guess of $uv$, it does as follows.

By the definition of $uv$, we can safely delete all vertices $v'\in V_2$ with $v'<v$. To simplify notation, we assume without loss of generality that $v=1$, so $uv=u1$. Denote by $r$ the rightmost neighbor of $u$. Then, by the adjacency property of $\sigma_2$, we have that $N_G(u)=[1,r]$.

The algorithm now performs the following preprocessing procedure.

\begin{itemize}
\item Find the vertices $v_1,\ldots,v_l\in V_1\setminus\{u\}$ (decreasingly ordered with respect to their rightmost neighbor) such that $[1,r]\subseteq N_G(v_i)$. 
By consecutively checking the intervals $N_G(v_1),\ldots, N_G(v_l)$ and selecting the rightmost available (i.e., not selected before) vertex in the considered interval, find the maximum set $S=\{j_1,\ldots,j_{h}\}$ of integers such that $j_1>\ldots>j_{h}>r$ and $j_i\in N_G(v_i)$ for $i\in \{1,\ldots,h\}$. Delete $v_1,\ldots,v_h$ from $G$.
\item Find the vertices $x_1,\ldots,x_s\in V_1\setminus\{u\}$  (decreasingly ordered with respect to their rightmost neighbor)
such that $[1,2]\subseteq N_G(x_i)$.
\item Find the vertices $y_1,\ldots,y_t\in V_1$ (increasingly ordered with respect to their leftmost neighbor)
 such that $1\notin N_G(y_i)$ and $r\in N_G(y_i)$.
\item Delete the vertices $r+1,\ldots,n_2$ from $V_2$.
\end{itemize}
The structure of the neighborhoods of $u$, $x_1,\ldots,x_s$ and $y_1,\ldots,y_t$ after this preprocessing procedure is shown in Fig~\ref{fig:struct}. 

\begin{figure}[ht]
\centering\scalebox{0.65}{\input{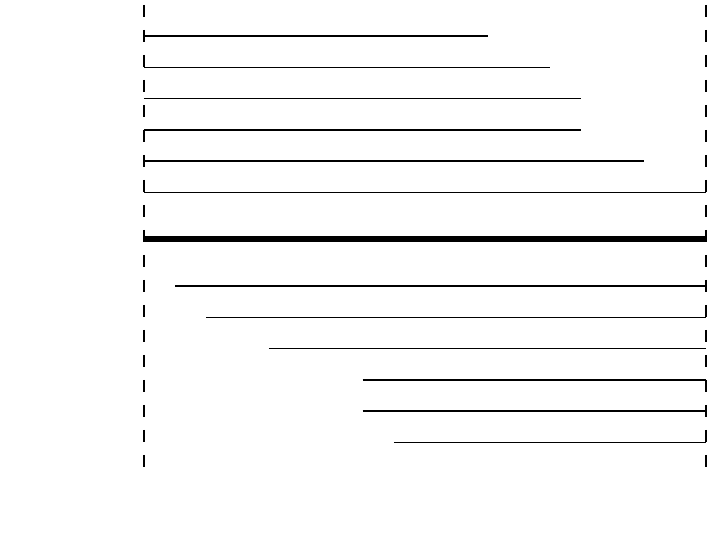_t}}
\caption{Structure of the neighborhoods of $u$, $x_1,\ldots,x_s$ and $y_1,\ldots,y_t$ after the preprocessing procedure.
\label{fig:struct}}
\end{figure}

We prove that the preprocessing procedure is safe in the following claim.

\medskip
\noindent
{\bf  Claim~1.} {\it Let $M$ be a clique-matching of maximum size in $G$ such that $u1\in M$. Then there is a clique-matching $M'$ of maximum size such that $u1\in M'$ and 
\begin{itemize} 
\item[i)] $v_1j_1,\ldots,v_hj_h\in M'$,
\item[ii)] for any $vj\in M'$ such that $vj\neq u1$ and $v\notin \{v_1,\ldots v_h\}$, it holds that $v\in\{x_1,\ldots,x_s\}\cup\{y_1,\ldots,y_t\}$ and $j\in[2,r]$.
\end{itemize}
}

\begin{proof}[Proof of Claim~1]
Let $X=\{x\in V_1\mid [1,2]\subseteq N_G(x)\}$ and let $Y=\{y\in V_1\mid 1\notin N_G(y),r\in N_G(y)\}$. By the adjacency and enclosure property of $\sigma_2$, $X\cap Y=\emptyset$.

We show that for any $vj\in M$ such that $vj\neq u1$, $v\in X\cup Y$, and if $v\in Y$, then $j\leq r$. Because $u1\in M$, $j\geq 2$ and $v1\in E(G)$ or $uj\in E(G)$. If $v1\in E(G)$, then by the 
adjacency property, $[1,2]\subseteq N_G(v)$ and $v\in X$. If $uj\in E(G)$ and $1\in N_G(v)$, then $v\in X$ as well. Suppose that  $uj\in E(G)$ and $1\notin N_G(v)$. By the enclosure property applied to $u$ and $v$, we have that $r\in N_G(v)$ and $v\in Y$. Finally, because $uj\in E(G)$, $j\in N_G(u)$ and, therefore, $j\leq r$.

Let $p=\max\{j\mid j\in N_G(v), v\in X,p>r\}$ and $v\in X$ such that $p\in N_G(v)$. 
Notice that $p\geq j$ for any $xj\in M$.
 We prove that  there is a clique-matching $M'$ of maximum size such that $u1\in M'$ and $vp\in M'$. 

If $vp\in M$, then the statement trivially holds. Let $vp\notin M$. 
Assume that  $xp\notin M$ for any $x\in V_1$, i.e., assume $p$ is not saturated.  If $vj\in M$ for some $j\in V_2$, then we construct $M'$ by replacing $vj$ by $vp$ in $M$. If $vj\notin M$ for any $j\in V_2$, that is, if $v$ is not saturated, then we construct $M'$ by adding $vp$ to $M$. For any $xj\in M$, $1\leq j\leq p$ and, therefore, $j\in N_G(v)$. Hence, $vj\in E(G)$ and $M'$ is a clique-matching of maximum size.  
Suppose now that $xp\in M$ for some $x\in V_1$. Notice that $x\in X$ and $[1,p]\subseteq N_G(x)$. If  $vj'\in M$ for some $j'\in V_2$, then we construct $M'$ by replacing $vj',xp$ by $xj',vp$ in $M$. If $vj'\notin M$ for any $j'\in V_2$, then we construct $M'$ by replacing $xp$ by $vp$ in $M$.
Because $[1,p]\subseteq N_G(v)$,  $[1,p]\subseteq N_G(x)$ and for any $yj\in M$, $1\leq j\leq p$, we have that $M'$ is a clique-matching of maximum size.

We apply this statement inductively to obtain a clique-matching $M'$ of maximum size such that $u1\in M'$ and  $v_1j_1,\ldots,v_hj_h\in M'$. Assume that we have a maximum clique-matching $M'$ size such that $u1\in M'$ and  $v_1j_1,\ldots,v_ij_i\in M'$. We delete $v_1,\ldots,v_i$ and $j_1,\ldots,j_i$ from the graph. Then we find 
$p=\max\{j\mid j\in N_G(v), v\in X,p>r\}$. If such $p$ does not exist, we stop. Otherwise, we set $j_{i+1}=p$ and find $v_{i+1}\in X$ that is adjacent to $j_{i+1}$ and find a matching that contains $v_{i+1}j_{i+1}$.

To show ii), we observe that $\{x_1,\ldots,x_s\}=X\setminus \{v_1,\ldots,v_h\}$ and $\{y_1,\ldots,y_t\}=Y$. If $vj\in M'$ such that $v\notin \{v_1,\ldots v_h\}$ and $j>r$, then $x\in X$. But then at least one additional element should be included in the set $S$ constructed in the preprocessing procedure, contradicting its maximality. Hence, $j\in[2,r]$. This completes the proof of Claim~1.
\renewcommand{\qedsymbol}{$\diamond$}
\end{proof}

In the next stage of the algorithm we apply dynamic programming. For every $i\in\{0,\ldots,s\}$, $j\in\{0,\ldots,t\}$ and non-negative integer $\ell$, let $c(i,j,\ell)$ denote the size of a maximum clique-matching $M$ such that
\begin{itemize}
\item[a)] $u1\in M$,
\item[b)] for any $vp\in M$ such that $vp\neq u1$, it holds that $v\in\{x_1,\ldots,x_i\}\cup\{y_1,\ldots,y_j\}$, and
\item[c)] there are at most $\ell$ vertices in $[a_{i,j},b_{i,j}]=(\bigcap_{p=1}^i N_G(x_p))\cap(\bigcap_{q=1}^j N_G(y_q))$ saturated by $M$.
\end{itemize}

Recall that the vertices of $X$ and $Y$ are ordered with respect to their rightmost and leftmost neighbors, respectively. Hence,
for any $1\leq p<q\leq i$, we have $1\in N_G(x_q)\subseteq N_G(x_p)\subseteq [1,r]$, and for any $1\leq p<q\leq j$, we have $1\notin N_G(y_q)\subseteq N_G(y_p)\subseteq [2,r]$.
In particular, $[a_{i,j},b_{i,j}]=N_G(x_i)\cap N_G(y_j)$ for $i,j>0$. In other words, if $[a_{i,j},b_{i,j}]\neq \emptyset$, then $a_{i,j}$ is the left end-point of the interval $N_G(y_j)$ and
$b_{i,j}$ is the right end-point of the interval $N_G(x_j)$.
Observe that 
it can happen that $[a_{i,j},b_{i,j}]=\emptyset$. Observe also that $c(i,j,\ell)=c(i,j,b_{i,j}-a_{i,j}+1)$ if $[a_{i,j},b_{i,j}]\neq \emptyset$ and $\ell>b_{i,j}-a_{i,j}+1$. 
Hence, it is sufficient to compute $c(i,j,\ell)$ for $\ell\leq b_{i,j}-a_{i,j}+1\leq n_2$. 

Because all the vertices in $[a_{i,j},b_{i,j}]$ have the same neighbors in $\{x_1,\dots, x_i\}\cup\{y_1,\dots, y_j\}$, we can make the following observation.

\medskip
\noindent
{\bf Claim~2.} {\it Let $M$ be a clique-matching of maximum size such that $M$ satisfies a)--c) and $M$ has exactly $f$ saturated vertices in $[a_{i,j},b_{i,j}]$, and let $W\subseteq [a_{i,j},b_{i,j}]$ be a set of size $f$. Then there is a clique-matching $M'$ of maximum size that satisfies a)--c) such that $W$ is the set of vertices of $[a_{i,j},b_{i,j}]$ saturated by $M'$.
}
\medskip

If $i=j=0$, then we set $c(i,j,\ell)=1$ taking into account the matching with the unique edge $u1$. For other values of $i,j$, $c(i,j,\ell)$ is computed as follows.
To simplify notation, we assume that $x_0=y_0=u$.

\medskip
\noindent
{\bf Computation of $c(i,j,\ell)$ for $i>0,j=0$.} Because $1\in N_G(x_q)\subseteq N_G(x_p)\subseteq [1,r]$ for every $1\leq p<q\leq i$, any matching with edges incident to $x_1,\ldots,x_i$ is a clique-matching.  This observation also implies that a maximum matching can be obtained in greedy way.
 Notice that $[a_{i,0},b_{i,0}]=N_G(x_i)$.  By consecutively checking the intervals $N_G(x_1),\ldots, N_G(x_i)$ and selecting the rightmost available (i.e., not selected before) vertex in the considered interval, we find the maximum set $\{p_1,\ldots,p_{q}\}$ of integers such that $t\geq p_1>\ldots>p_{q}>1$, $p_f \in N_G(x_f)$ for $f\in \{1,\ldots,q\}$, and $|\{p_1,\ldots,p_q\}\cap [a_{i,0},b_{i,0}]|\leq \ell-1$.  
Taking into account the edge $u1$, we observe that $M=\{u1,x_1p_1,\ldots,x_qp_q\}$ is a required matching, and 
we have that $c(i,j,\ell)=q+1$.

\medskip
\noindent
{\bf Computation of $c(i,j,\ell)$ for $i=0,j>0$.} Now we have that $r\in N_G(y_q)\subseteq N_G(y_p)\subseteq [2,r]$ for every $1\leq p<q\leq j$. Hence, any matching with edges incident to $y_1,\ldots,y_j$ is a clique-matching and a maximum matching can be obtained in greedy way.
 Notice that $[a_{0,j},b_{0,j}]=N_G(y_j)$.  By consecutively checking the intervals $N_G(y_1),\ldots, N_G(y_j)$ and selecting the leftmost available (i.e., not selected before) vertex in the considered interval, we find the maximum set $\{p_1,\ldots,p_{q}\}$ of integers such that $1<p_1<\ldots<p_{q}\leq r$, $p_f \in N_G(y_f)$ for $f\in \{1,\ldots,q\}$, and $|\{p_1,\ldots,p_q\}\cap [a_{0,j},b_{0,j}]|\leq \ell$. It is straightforward to see that $M=\{u1,y_1p_1,\ldots,y_qp_q\}$ is a required matching, and we have that $c(i,j,\ell)=q+1$. 

\medskip
\noindent
{\bf Computation of $c(i,j,\ell)$ for $i>0,j>0$.} We compute  $c(i,j,\ell)$ using the tables of already computed values $c(i-1,j',\ell')$ for $j'\leq j$. We find the size of a maximum clique-matching $M$ by considering all possible choices for the vertex $x_i$ and then take the maximum among the obtained values. We distinguish three cases. Recall that $[a_{i,j},b_{i,j}]=N_G(x_i)\cap N_G(y_j)$.

\medskip
\noindent
{\em Case 1.} The vertex $x_i$ is not saturated by $M$. We have that $[a_{i-1,j},b_{i-1,j}]=N_G(x_{i-1})\cap N_G(y_j)\subseteq [a_{i,j},b_{i,j}]$ and $|[a_{i,j},b_{i,j}]\setminus [a_{i-1,j},b_{i-1,j}]|=b_{i-1,j}-b_{i,j}$. By Claim~2 implies that for any maximum clique-matching $M$ that satisfies a)--c) and has no edge incident to $x_i$, it holds that a clique-matching $M'$ of maximum size that satisfies a)--b), has no edge incident to $x_i$, and has at most $\ell'=\ell+b_{i-1,j}-b_{i,j}$ saturated vertices in $[a_{i-1,j},b_{i-1,j}]$ has the same size as $M$. Hence $c(i,j,\ell)=c(i-1,j,\ell')$.

\medskip
Now we consider the cases when $x_i$ is saturated by $M$. Denote by $p\in N_G(x_i)$ the vertex  such that $x_ip\in M$. 

\begin{figure}[ht]
\centering\scalebox{0.65}{\input{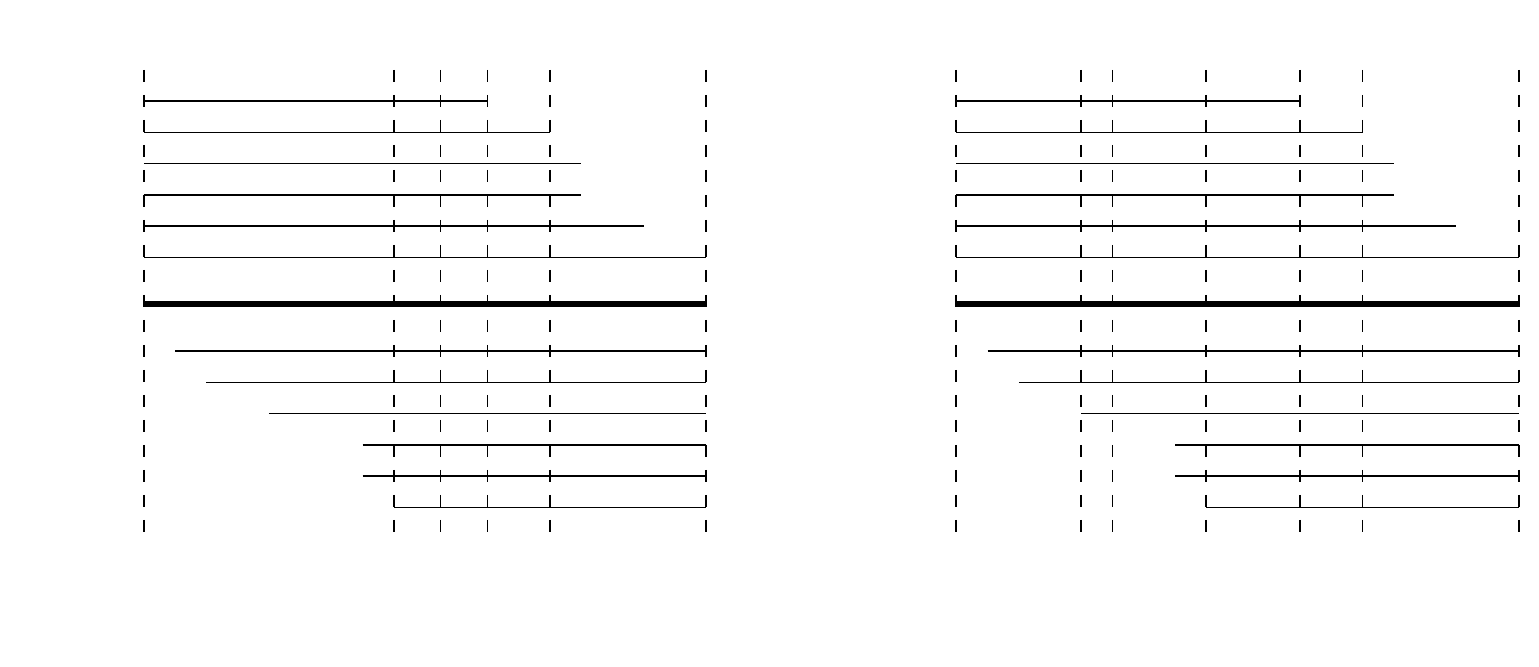_t}}
\caption{Structure of the neighborhoods of $u$, $x_1,\ldots,x_i$ and $y_1,\ldots,y_j$ in Cases~2 and 3. 
\label{fig:cases}}
\end{figure}

\medskip
\noindent
{\em Case 2.} Vertex $p\in [a_{i,j},b_{i,j}]$ (see Fig.~\ref{fig:cases}). Observe that $p$ is adjacent to every vertex in $\{x_1,\ldots,x_{i-1}\}\cup \{y_1,\ldots,y_j\}$. Hence, for any edge $vq$ such that $v\in \{u\}\cup\{x_1,\ldots,x_{i-1}\}\cup\{y_1,\ldots,y_j\}$ and $q\neq p$, $x_ip$ and $vq$ have adjacent end-vertices, i.e., this choice of $p$ does not influence the selection of other edges of $M$ except that we can have at most $\ell-1$ other saturated vertices in $[a_{i,j},b_{i,j}]$. 
We have that $[a_{i-1,j},b_{i-1,j}]=N_G(x_{i-1})\cap N_G(y_j)\subseteq [a_{i,j},b_{i,j}]$ and $|[a_{i,j},b_{i,j}]\setminus [a_{i-1,j},b_{i-1,j}]|=b_{i-1,j}-b_{i,j}$. By Claim~2, we obtain that for any maximum clique-matching $M$ that satisfies a)--c) and $x_ip\in M$,  a clique-matching $M'$ of maximum size that satisfies a)--b), has no edge incident to $x_i$ and has at most $\ell'=\ell+b_{i-1,j}-b_{i,j}-1$ saturated vertices in $[a_{i-1,j},b_{i-1,j}]$ has the same size as $M$. Hence $c(i,j,\ell)=c(i-1,j,\ell')$.

\medskip
\noindent
{\em Case 3.} Vertex $p\notin [a_{i,j},b_{i,j}]$, i.e., $p<a_{i,j}$ (see Fig.~\ref{fig:cases}). Let $j'=\max\{f\mid p\in N_G(y_f),0\leq f\leq j\}$. As $p<a_{i,j}$, $j'<j$. 

Let $f\in\{j'+1,\ldots,j\}$, $g\in N_G(y_f)$ and $g>b_{i,j}$. Recall that $b_{i,j}$ is the right end-point of $N_G(x_i)$. Hence,  $x_ig\notin E(G)$. Because $f>j'$, $x_fp\notin E(G)$. We conclude that such edges cannot be in $M$. Similarly, let $f\in\{j'+1,\ldots,j\}$, $g\in N_G(y_f)$ and $g\leq b_{i,j}$. Then for any $v\in\{x_1,\ldots,x_i\}\cup\{y_1,\ldots,y_{j'}\}$, it holds that $vg\in E(G)$. Also if $j'+1\leq f<f'\leq j$, then for any $g\in N_G(x_{f'})$, $x_fg\in E(G)$. We have that it is safe to include in a clique-matching edges $x_fq$ for 
$f\in\{j'+1,\ldots,j\}$, $g\in N_G(y_f)$ and $g\leq b_{i,j}$. We select such edges in a greedy way.
By consecutively checking the intervals $N_G(y_{j'+1}),\ldots, N_G(y_j)$ and selecting the leftmost available (i.e., not selected before) vertex in the considered interval, we find the maximum set 
$\{g_1,\ldots,g_{q}\}$ of integers such that $p<g_1<\ldots<g_{q}\leq b_{i,j}$, $g_f \in N_G(y_{f+j'})$ for $f\in \{1,\ldots,q\}$ and $|\{g_1,\ldots,g_q\}\cap [a_{i,j},b_{i,j}]|\leq \ell$. 

\medskip
\noindent
{\bf Claim~3.} {\it Let $M$ be a clique-matching of maximum size that satisfies a)--c) and $x_ip\in M$.
 Then there is a clique-matching $M'$ of maximum size that satisfies a)--c) and $x_ip\in M'$ such that $y_{j'+1}g_1,\ldots,y_{j'+q}g_q\in M'$ and for any $vf\in M'$, it holds that 
$v\in\{y_{j'+1},\ldots,y_{j'+q}\}\cup\{x_1,\ldots,x_i\}\cup\{y_1,\ldots,y_{j'}\}$.
}

\begin{proof}[Proof of Claim~3]
We first show inductively that for every $f\in \{1,\ldots,q\}$,
there is a clique-matching $M'$ of maximum size that satisfies a)--c) and $x_ip\in M'$ such that $y_{j'+1}g_1,\ldots,y_{j'+f}g_f\in M'$.
Assume that  $y_{j'+1}g_1,\ldots,y_{j'+f-1}g_{f-1}\in M$.
If $y_{j'+f}g_{f}\in M$ then the statement trivially holds. Let $y_{j'+f}g_{f}\notin M$.
Suppose that $g_f$ is not saturated by $M$. If $y_{j'+f}h$ for some $j\in N_G(y_{j'+f})$, then we construct $M'$ by replacing $y_{j'+f}h$ by $y_{j'+f}g_f$ in $M$. 
If $y_{j'+f}$ is not saturated, then we construct $M'$ by adding $y_{j'+f}g_f$ to $M$. We already proved that it is safe to include $y_{j'+f}g_f$ in a clique-matching, i.e., the obtained matching $M'$ is a clique-matching.
Suppose now that $vg_f\in M$ for some $v\in \{x_1,\ldots,x_i\}\cup\{y_1,\ldots,y_j\}$. Notice that $v\notin \{y_{j'+1},\ldots,y_{j'+f-1}\}$ because 
$g_1<\ldots<g_{f-1}<g_f$. 
Suppose that  $y_{j'+f}h\in M$ for some $h\in N_G(y_{j'+f})$. Then $h\leq b_{i,j}$. By the selection of $g_1,\ldots,g_{f-1}$, it holds that $g_{f}<h$. 
It follows that $h$ is adjacent to $v$.
We construct $M'$ by replacing $vg_f,y_{j'+f}h$ by $y_{j'+f}g_f,vh$  in $M$. 
If $y_{j'+f}$ is not saturated, then we construct $M'$ by replacing $vg_f$ by $y_{j'+f}g_f$ in $M$. We again obtain a clique-matching.

We get a clique-matching $M'$ of maximum size that satisfies a)--c) and $x_ip\in M'$ such that $y_{j'+1}g_1,\ldots,y_{j'+q}g_q\in M'$.
It remains to show that  for any $vf\in M'$, it holds that 
$v\in\{y_{j'+1},\ldots,y_{j'+q}\}\cup\{x_1,\ldots,x_i\}\cup\{y_1,\ldots,y_{j'}\}$, i.e., $y_{j'+q+1},\ldots,y_j$ are not saturated. To see this, it is sufficient to observe that otherwise our greedy procedure would have added one more element to $\{g_1,\ldots,g_{q}\}$, contradicting the maximality of this set. This completes the proof of Claim~3.
\renewcommand{\qedsymbol}{$\diamond$}
\end{proof}

Observe that the total number of saturated vertices in 
$[a_{i-1,j'},b_{i-1,j'}]$ should be at most $(a_{i,j}-a_{i-1,j'})+(b_{i-1,j'}-b_{i,j})+\ell$.
Using Claims~2 and 3 and taking into account that $x_ip\in M$, we obtain that  
$c(i,j,\ell)=c(i-1,j',\ell')$ for $\ell'=(a_{i,j}-a_{i-1,j'})+(b_{i-1,j'}-b_{i,j})+\ell-(q+1)$.

\medskip
By our dynamic programming algorithm we eventually compute $c(s,t,\ell)$ for $\ell=0$ if $[a_{i,j},b_{i,j}]=\emptyset$ or $\ell=b_{i,j}-a_{i,j}+1$ if $[a_{i,j},b_{i,j}]\neq\emptyset$. Then
$c(s,t,\ell)$ is the size of a maximum clique-matching $M$ such that
\begin{itemize}
\item[a)] $u1\in M$,
\item[b)] for any $yp\in M$ such that $vp\neq u1$, it holds that $v\in\{x_1,\ldots,x_i\}\cup\{y_1,\ldots,y_j\}$.
\end{itemize}
By Claim~1, the size of a maximum clique-matching $M$ in $G$ such that $u1\in M$ is $c(s,t,\ell)+|S|$, where $S$ is the set of vertices constructed during the preprocessing procedure. 
Recall that the algorithm tries all possible choices for the edge $uv$, implying that our algorithm indeed computes the size of a maximum clique-matching in $G$.

It remains to evaluate the running time to complete the proof. Constructing the ordering $\sigma_2$ of $V_2$ can be done in $O(n+m)$ time by Lemma~\ref{lem:ordering}. The algorithm considers $m$ choices for the edge $uv$. For each of these choices, the preprocessing procedure can be performed in $O(n)$ time given the orderings of $V_1$ and $V_2$ (notice that Lemma~\ref{lem:ordering} is symmetric with respect to $V_1,V_2$, so we can obtain an ordering of $V_1$ with the adjacency and enclosure properties, too). Each step of the dynamic programming can be done in $O(n^2)$ time using the orderings of $V_1,V_2$. Observe that in this time we can compute $c(i,j,\ell)$ for all values of $\ell$. Hence, the dynamic programming algorithm runs in time $O(n^4)$. We conclude that the total running time is $O(mn^4)$.
\end{proof}

Combining Lemma~\ref{lem:match} and Theorem~\ref{thm:perm} yields the following result.

\begin{corollary}\label{cor:perm}
The {\sc Hadwiger Number} problem can be solved in $O((n+m)\cdot mn^4)$ time on bipartite permutation graphs.
\end{corollary}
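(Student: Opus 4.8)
The plan is to obtain the corollary by feeding Theorem~\ref{thm:perm} into Lemma~\ref{lem:match}; the only point requiring attention is to check that the reduction behind Lemma~\ref{lem:match} never leaves the class of bipartite permutation graphs.

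First I would recall, as noted at the start of Section~\ref{sec:bip-perm}, that every bipartite permutation graph is chordal bipartite, so Lemma~\ref{lem:match} applies: solving {\sc Hadwiger Number} on a bipartite permutation graph $G$ amounts to $O(n+m)$ calls to {\sc Clique-Matching}, one for $G$ itself (no singletons), one for each candidate singleton vertex $u$ on an induced subgraph $G'$, and one for each candidate pair of singletons $uv\in E(G)$ on an induced subgraph $G'$. Inspecting the proof of Lemma~\ref{lem:match}, every instance handed to {\sc Clique-Matching} is of the form $(G',p')$ with $G'$ an \emph{induced subgraph} of $G$ and $p'\in\{p,p-1,p-2\}$. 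Since the class of bipartite permutation graphs is hereditary (closed under taking induced subgraphs), each such $G'$ is again a bipartite permutation graph with at most $n$ vertices and at most $m$ edges, so Theorem~\ref{thm:perm} is legitimately applicable to it.

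Next I would invoke Theorem~\ref{thm:perm}: {\sc Clique-Matching} on a bipartite permutation graph with $n$ vertices and $m$ edges is solvable in $O(mn^4)$ time, and in particular each of the instances $G'$ above is solved in $O(mn^4)$ time (fewer vertices and edges in $G'$ only help, by monotonicity of the bound). Multiplying by the $O(n+m)$ calls produced by the reduction gives a total running time of $O((n+m)\cdot mn^4)$. Equivalently, one simply substitutes $f(n,m)=O(mn^4)$ into the bound $O((n+m)\cdot f(n,m))$ furnished by Lemma~\ref{lem:match}.

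There is no genuinely hard step here: the whole content is the combination of the two earlier results. The one subtlety worth stating explicitly—and the reason the corollary is not a literal one-liner—is the mismatch in scope, namely that Lemma~\ref{lem:match} is phrased for all chordal bipartite graphs while Theorem~\ref{thm:perm} covers only the bipartite permutation subclass; closing this gap is exactly the observation that the auxiliary graphs $G'$ arising in the reduction, being induced subgraphs of a bipartite permutation graph, are themselves bipartite permutation graphs. After that, the running-time arithmetic is routine.
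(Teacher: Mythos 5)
Your proof is correct and follows the same route as the paper, which simply combines Lemma~\ref{lem:match} with Theorem~\ref{thm:perm}. The extra care you take in noting that the auxiliary graphs $G'$ produced by the reduction are induced subgraphs of $G$, hence again bipartite permutation graphs, is exactly the (implicit) observation needed to make the combination legitimate.
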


\subsection{Hadwiger number of co-bipartite graphs }
\label{sec:cobip}

To conclude this section, we show that  the {\sc Hadwiger Number} problem is \NP-complete on co-bipartite graphs.

\begin{theorem}
\label{t-co-bipartite}
The {\sc Hadwiger Number} problem is \NP-complete on co-bipartite graphs.
\end{theorem}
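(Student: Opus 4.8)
The plan is to establish $\classNP$-membership trivially (guess a $K_h$-witness structure as a partition of a vertex subset and verify conditions (i)--(ii) in polynomial time) and then to give a polynomial-time reduction from a known $\classNP$-hard problem to {\sc Hadwiger Number} on co-bipartite graphs. The natural source problem is a variant of maximum clique or, more conveniently, a constraint-satisfaction-flavoured problem whose structure matches the bipartite complement. I would reduce from a problem such as \textsc{Maximum Edge Biclique} or, more cleanly, from \textsc{Balanced Complete Bipartite Subgraph} / a suitable \textsc{3-Sat}-like problem; the key is to exploit Lemma~\ref{lem:clique}, which tells us that on the (connected) co-bipartite input $G$, having $K_h$ as a minor is equivalent to having $K_h$ as a contraction, so we only need to reason about $K_h$-contraction structures.

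First I would fix the skeleton of the construction: given an instance of the source problem, build a co-bipartite graph $G$ with vertex set $A \cup B$ where $G[A]$ and $G[B]$ are cliques, and the bipartite graph $\overline{G}$ between $A$ and $B$ encodes the instance. The point is that a $K_h$-contraction structure $\mathcal{W}$ of $G$ with bags $W(1),\dots,W(h)$ must have all pairs of bags adjacent in $G$; since non-adjacency in $G$ only occurs between $A$ and $B$, a bag of size $\geq 2$ is ``cheap'' only when it pairs up one vertex of $A$ with one vertex of $B$ that are adjacent in $G$, i.e.\ that are \emph{non}-adjacent in $\overline{G}$ — but then to make such a combined bag adjacent to a singleton we need help, and two such combined bags $\{a,b\},\{a',b'\}$ are automatically adjacent via $aa'\in E(G[A])$. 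The combinatorics then reduces to: how large an independent-set-like / matching-like structure can we find in $\overline{G}$ so that the leftover vertices of $A$ (or of $B$) can all be used as singletons while still being adjacent to everything. This is exactly the kind of counting already carried out in Claim~2 of Theorem~\ref{thm:cograph}, so I expect the target value $h$ to come out as $n_A + n_B - (\text{optimum of the source problem})$ or a similar affine expression, with the worst case forcing a particular matching structure between $A$ and $B$ corresponding to edges of $\overline{G}$.

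The two directions of correctness then split as follows. For the forward direction, from a solution of the source instance I explicitly build a $K_h$-contraction structure: combined bags $\{a_i,b_i\}$ for the pairs dictated by the solution (adjacent in $G$ because non-adjacent in $\overline G$), plus singletons for every remaining vertex of one side, checking conditions (i)--(iv) hold — connectivity of bags is immediate since every bag has size at most $2$ and the two vertices are adjacent, and pairwise adjacency of bags follows from the clique structure of $G[A]$, $G[B]$ together with the chosen pairs. For the reverse direction, given any $K_h$-contraction structure $\mathcal{W}$ of $G$, I would first normalize it using the bound from Lemma~\ref{l-big} together with Corollary~\ref{cor:diam-bag} (co-bipartite graphs have chordality at most $4$, so each bag of a minimal induced-minor structure is a clique, hence each bag lies entirely in $A$ or in $B$ unless it is an edge-bag crossing the cut), argue as in Claim~1/Claim~2 of Theorem~\ref{thm:cograph} that without loss of generality all crossing edge-bags pair $A$ with $B$ and there is essentially one big side, and finally read off from $\mathcal{W}$ a solution of the source instance of the required quality.

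The main obstacle, I expect, is pinning down the reverse direction's normalization precisely — showing that an \emph{arbitrary} optimal $K_h$-contraction structure can be massaged into the canonical ``matching plus leftover singletons'' form without decreasing $h$, since in principle $\mathcal{W}$ could use bags of size $\geq 3$ inside $A$ or $B$, or edge-bags inside a single side, in ways that do not obviously correspond to anything in the source instance. Corollary~\ref{cor:diam-bag} rules out bags that are not cliques, and a bag entirely inside $A$ (or $B$) of size $\geq 2$ is ``wasteful'' (it uses up $\geq 2$ vertices to contribute one bag while an edge inside a clique is free for adjacency but costly for the count), so an exchange argument — replacing an in-side big bag together with some crossing structure by singletons plus one extra crossing pair — should show such configurations are never needed in an optimum; carrying this exchange out carefully, and handling the at-most-two ``free'' singletons that can always be afforded on each side, is where the real work lies, but it parallels the accounting already done for cographs and bipartite permutation graphs.
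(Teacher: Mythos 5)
There is a genuine gap here, and also a concrete error in the key structural claim you rely on.

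First, the error: you argue that because co-bipartite graphs are $4$-chordal, Corollary~\ref{cor:diam-bag} makes every bag of a minimal $K_p$-induced-minor structure a clique, ``hence each bag lies entirely in $A$ or in $B$ unless it is an edge-bag crossing the cut.'' That inference is false. A clique in a co-bipartite graph $G=A\cup B$ can freely mix vertices from both sides and have arbitrary size (e.g.\ $\{a_1,a_2,b_1\}$ with $a_1b_1,a_2b_1\in E(G)$). The tool that would limit bags to singletons and edge-bags is Lemma~\ref{lem:prism}, and it requires $\overline{C}_6$-freeness — but $\overline{C}_6$ is itself co-bipartite, so that lemma simply does not apply to the class you are working with. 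Consequently your whole ``matching plus leftover singletons'' normalization is unavailable; a $K_h$-contraction structure of a co-bipartite graph may have bags of size $3$ or more spanning both sides, and nothing in your argument controls them. You flag the reverse-direction normalization yourself as ``where the real work lies,'' but that work is exactly what is missing, and the route you suggest for carrying it out is blocked.

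Second, the proposal never commits to a source problem or a concrete gadget, so there is no reduction to check. The paper takes a genuinely different route: it passes to the parametric dual ({\sc Clique Contraction}, i.e.\ $K_{n-k}$ as a contraction) and reduces from {\sc Not-All-Equal-3-SAT}. The crucial device is a \emph{multiplicity} gadget: for each variable and each clause it introduces $4n-3=2k+1$ copies of a checking vertex. Lemma~\ref{l-big} says a $k$-contraction has at most $k$ big bags, each of size at most $k+1$, totalling at most $2k$ vertices in big bags; with $2k+1$ identical copies present, at least one copy must be a singleton, and that singleton then forces the adjacent big bags to meet both $x_i$ and $\overline{x_i}$ (for dummy vertices) or a true and a false literal (for clause vertices). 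This is what pins down exactly two bags inside the variable side and lets one read off an NAE assignment. Nothing in your sketch supplies a mechanism of this kind, and the counting you borrow from Claim~2 of Theorem~\ref{thm:cograph} lived in a setting (cographs, hence $\overline{C}_6$-free) where nice structures were guaranteed — a guarantee you do not have here.
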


\begin{proof}
First observe that, as a result of Lemma~\ref{lem:clique} and the observation that every edge contraction reduces the number of vertices by exactly~$1$, an $n$-vertex graph $G$ has $K_p$ as a minor if and only if $G$ is $(n-p)$-contractible to a complete graph. Hence, it suffices to prove that the {\sc Clique Contraction} problem is \NP-complete on co-bipartite graphs. In order to do so, we give a reduction from {\sc Not-All-Equal-3-SAT} (NAE-3-SAT), which is the problem of deciding, given boolean formula $\varphi$ in 3-CNF, whether there exists a satisfying truth assignment for $\varphi$ that does not set all the literals of any clause to true. Let $\varphi$ be an instance of this problem, and let $x_1,\ldots,x_n$ and $c_1,\ldots,c_m$ denote the variables and clauses of $\varphi$, respectively. 

We construct a graph $G$ as follows. For each $i\in \{1,\ldots,n\}$, we create two {\em variable vertices} $x_i$ and $\overline{x_i}$, as well as the edge $x_i\overline{x_i}$. Let $X= \{x_1,\overline{x_1},\ldots,x_n,\overline{x_n}\}$. For each clause $c_j$, we create $4n-3$ {\em clause vertices} $c_j^{1},\ldots,c_j^{4n-3}$ each of which is made adjacent to $x_i$ (respectively $\overline{x_i}$) if variable $x_i$ appears positively (respectively negatively) in clause $c_j$. For each $i\in \{1,\ldots,n\}$, we create $4n-3$ {\em dummy vertices} that are made adjacent to both $x_i$ and $\overline{x_i}$ but not adjacent to $x_j$ and $\overline{x_j}$ for every $j\neq i$. Finally, we add edges to make $X$ into a clique and to make $V(G)\setminus X$ into a clique. This completes the construction of $G$. 

Let $k=2n-2$ and $N=|V(G)\setminus X|=(4n-3)(n+m)$. Observe that $G$ is a co-bipartite graph on $2n+N$ vertices. We claim that $G$ is $k$-contractible to a complete graph if and only if $\varphi$ is a yes-instance of NAE-3-SAT. Note that by the definition of $k$ and $N$, graph $G$ is $k$-contractible to a complete graph if and only if $K_{N+2}$ is a contraction of $G$.

First suppose there exists a satisfying truth assignment $t$ for $\varphi$ that sets at least one literal to false in each clause. Let $W_0$ and $W_1$ denote the literals that $t$ sets to false and true, respectively. Let $G'$ denote the graph obtained from $G$ by contracting $W_i$ into a single vertex $w_i$, for $i\in \{0,1\}$. We claim that $G'$ is isomorphic to $K_{N+2}$. Observe that all the vertices of $V(G)\setminus X$ form a clique of size $N$ in $G$, and hence also in $G'$. Moreover, each of the dummy vertices is adjacent to both $w_0$ and $w_1$ due to the fact that $|W_i\cap \{x_j,\overline{x_j}\}|=1$ for every $i\in \{0,1\}$ and $j\in \{1,\ldots,n\}$. Finally, each of the clause vertices is adjacent to both $w_0$ and $w_1$, since $t$ sets at least one literal to true and at least one literal to false in each clause.

For the reverse direction, suppose $G$ has a $K_{N+2}$-contraction structure ${\cal W}$. Recall that for each $i\in \{1,\ldots,n\}$, there exist $4n-3=2k+1$ dummy vertices that are adjacent to both $x_i$ and $\overline{x_i}$, but to no other vertex in $X$. Hence Lemma~\ref{l-big} implies that for each $i\in \{1,\ldots,n\}$, there is a dummy vertex $d_i$ such that $\{d_i\}$ is a singleton of ${\cal W}$ and $N_G(d_i)\cap X=\{x_i,\overline{x_i}\}$. Using this, we now show that there are exactly two bags $W_0,W_1\in {\cal W}$ that are included in $X$.

For contradiction, first suppose that at most one bag of ${\cal W}$ does not contain a vertex from $V(G)\setminus X$. Then at least $N+1$ bags must contain a vertex of $V(G)\setminus X$. This is not possible, since $|V(G)\setminus X|=N$ and bags are disjoint by definition. Now suppose, again for contradiction, that there are three bags of ${\cal W}$ that do not intersect $V(G)\setminus X$. Then one of them, say $W$, contains neither $x_1$ nor $\overline{x_1}$. But then $W$ is not adjacent to the singleton $\{d_1\}$, contradicting the fact that ${\cal W}$ is a $K_p$-contraction structure of $G$. We conclude that there are exactly two bags $W_0,W_1\in {\cal W}$ that do not contain any vertex from $V(G)\setminus X$.

Since each of the singletons $\{d_1\},\ldots,\{d_n\}$ is adjacent to both $W_0$ and $W_1$, it holds that $|W_i\cap \{x_j,\overline{x_j}\}|=1$ for every $i\in \{0,1\}$ and $j\in \{1,\ldots,n\}$. Hence we can obtain a truth assignment $t$ for $\varphi$ by setting the literals in $W_0$ to false and the literals in $W_1$ to true. It remains to argue that for each clause $c_j$, at least one literal in $c_j$ is set to true and at least one literal is set to false by $t$. This follows from the fact that for every $j\in \{1,\ldots,m\}$, at least one of the $4n-3=2k+1$ clause vertices $c_j^{1},\ldots,c_j^{4n-3}$ forms a small bag due to Lemma~\ref{l-big}, and hence must be adjacent to both $W_0$ and $W_1$. This completes the proof.
\end{proof}

\section{Minors of Bounded Diameter}
\label{sec:diam}

In this section, we consider a generalization of the {\sc Hadwiger Number} problem where the aim is to obtain a minor of bounded diameter. 
Let $s$ be a positive integer. An {\em $s$-club} is a graph that has diameter at most~$s$. We consider the following problem:

\medskip
\noindent
\indent {\sc Maximum $s$-Club Minor}\\
\indent {\it Instance:} \hspace*{-.01cm} A graph $G$ and a non-negative integer $h$.\\
\indent {\it Question:} Does $G$ have a minor with $h$ vertices and diameter at most $s$?

\medskip
When $s=1$, the above problem is equivalent to the {\sc Hadwiger Number} problem. Recall that, due to Lemma~\ref{lem:clique}, the {\sc Hadwiger Number} problem can be seen as the parametric dual of the {\sc Clique Contraction} problem. The following straightforward lemma, which generalizes Lemma~\ref{lem:clique}, will allow us to formulate the parametric dual of the {\sc Maximum $s$-Club Minor} problem in a similar way.

\begin{lemma}\label{lem:diameterequiv}
For every connected graph $G$ and non-negative integers~$p$ and~$s$, the following statements are equivalent:
\begin{itemize}
\item $G$ has a graph with $p$ vertices and diameter at most $s$ as a contraction;
\item $G$ has a graph with $p$ vertices and diameter at most $s$ as an induced minor;
\item $G$ has a graph with $p$ vertices and diameter at most $s$ as a minor.
\end{itemize}
\end{lemma}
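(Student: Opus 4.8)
The plan is to prove the three equivalences by a cyclic chain of implications, exactly as one would for Lemma~\ref{lem:clique}, since the only new feature here is the diameter constraint, which is a property of the contracted graph rather than of the witness structure. The implications ``contraction $\Rightarrow$ induced minor $\Rightarrow$ minor'' are immediate from the definitions: an $H$-contraction structure is in particular an $H$-induced minor structure, and an $H$-induced minor structure is in particular an $H$-minor structure, and in all three cases the graph $H$ being contracted down to is the same, so if it has $p$ vertices and diameter at most $s$ in one reading it does so in all of them. Thus the whole content is in closing the cycle: if $G$ has a minor $H$ with $p$ vertices and diameter at most $s$, then $G$ has some graph $H'$ with $p$ vertices and diameter at most $s$ as a contraction.

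For that direction, first I would start from an $H$-minor structure $\{W(x)\mid x\in V(H)\}$ of $G$, i.e.\ a collection of disjoint vertex sets inducing connected subgraphs, with $W(x)$ and $W(y)$ adjacent whenever $xy\in E(H)$. Since $G$ is connected, I would enlarge this structure to cover all of $V(G)$: repeatedly take a vertex $v$ of $G$ not yet in any bag but adjacent to some bag $W(x)$ (such a vertex exists as long as the bags do not cover $V(G)$, by connectivity of $G$), and add $v$ to $W(x)$; this keeps every bag connected and only adds adjacencies between bags, so conditions (i), (ii), (iv) continue to hold. Contracting each bag to a single vertex now yields a graph $H'$ on $p$ vertices that is a contraction of $G$ and has $H$ as a spanning subgraph (the new inter-bag edges only add edges to $H$). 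The key point is that adding edges to a graph cannot increase its diameter, so $\diam(H')\le\diam(H)\le s$. Hence $H'$ is a graph with $p$ vertices and diameter at most $s$ that is a contraction of $G$, completing the cycle.

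The only step that requires a moment's care is the claim that $\diam(H')\le\diam(H)$: one must note that $H$ and $H'$ have the same vertex set and $E(H)\subseteq E(H')$, so every path in $H$ survives in $H'$ and distances can only shrink; in particular $H'$ is connected since $H$ is (being an $s$-club on $p\ge 1$ vertices, $H$ is connected), so the diameter of $H'$ is well defined and bounded by $s$. I do not expect any real obstacle here — the lemma is deliberately ``straightforward'' as stated — and the write-up should be only a few lines; the main thing to get right is the cover-completion argument and the monotonicity of diameter under edge addition. (One degenerate case worth a word: if $p=0$ the statement is vacuous, and if $p=1$ every bag-completion argument is trivial.)
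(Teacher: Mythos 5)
The paper supplies no proof of Lemma~\ref{lem:diameterequiv}; it is simply asserted to be ``straightforward,'' generalizing Lemma~\ref{lem:clique}, which is itself stated without proof. Your argument is correct and is surely the intended reasoning: the two forward implications are immediate from the nesting of conditions (i)--(iv), (i)--(iii), (i)--(ii), and for the nontrivial direction you absorb uncovered vertices into adjacent bags (valid since $G$ is connected) to obtain a contraction structure, then observe that the resulting contraction $H'$ has $H$ as a spanning subgraph on the same $p$ vertices, so $\diam(H')\le\diam(H)\le s$.
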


Lemma~\ref{lem:diameterequiv} implies that for any non-negative integer~$s$, the parametric dual of the {\sc Maximum $s$-Club Minor} problem can be formulated as follows:

\medskip
\noindent
\indent {\sc $s$-Club Contraction}\\
\indent {\it Instance:} \hspace*{-.01cm} A graph $G$ and a positive integer $k$.\\
\indent {\it Question:} Does there exist a graph $H$ with diameter at most~$s$ such that $G$\\
\hspace*{2.3cm} is $k$-contractible to $H$?

\medskip
\noindent
Observe that {\sc $1$-Club Contraction} is \NP-complete on AT-free graphs as a result of Theorem~\ref{t-co-bipartite}. We show that when $s\geq 2$, the problem becomes tractable on this graph class, even if $s$ is given as part of the input. On chordal graphs, the situation turns out to be opposite. Recall that the {\sc Hadwiger Number} problem, and hence the {\sc $1$-Club Contraction} problem, can be solved in linear time on chordal graphs. In contrast, we show that the $s$-{\sc Club Contraction} problem is \NP-complete on chordal graphs for every fixed $s\geq 2$, and the problem remains \NP-complete even when restricted to split graphs in case $s=2$.

\subsection{$s$-Club Contraction for AT-free graphs}\label{sec:AT}

We need some additional terminology and technical results.

For two paths $P=x_1\ldots x_s$ and $Q=y_1\ldots y_t$ such that $x_s=y_1$ and $V(P)\cap V(Q)=\{y_1\}$,  $P+Q$ is the \emph{concatenation} of $P$ and $Q$, i.e., the path $x_1\ldots x_sy_2\ldots y_t$.
For a $(u,v)$-path $P$, we write $x\preceq_P y$ if $\dist_P(u,x)\leq \dist_P(u,y)$, and $x\prec_P y$ if $x\preceq_P y$ and $x\neq y$. Respectively, for $xy,x'y'\in E(P)$, $xy\preceq_P x'y'$ if $x,y\preceq_P x',y'$. Notice that we always assume that the first vertex $u$ of $P$ is specified whenever we use this notation.

Let $G$ be a graph. For $u,v\in V(G)$, we say that $\{u,v\}$ is a \emph{diameter pair} if $\dist_G(u,v)=\diam(G)$.
A path $P$ in $G$ is a \emph{dominating path} if $V(P)$ is a \emph{dominating set} of $G$, i.e., each vertex of $G$ is either in $V(P)$ or adjacent to a vertex of $V(P)$. A pair of vertices $\{u,v\}$ is a \emph{dominating pair} if any $(u,v)$-path in $G$ is a dominating path.  Respectively, $\{u,v\}$ is a \emph{diameter dominating pair} if $\{u,v\}$ is both a diameter pair and a dominating pair.    
Corneil, Olariu and Stewart~\cite{CorneilOS97,CorneilOS99} proved that every connected AT-free graph has a diameter dominating pair.

\begin{lemma}[\cite{CorneilOS97,CorneilOS99}]\label{lem:dom_pair} 
Every connected AT-free graph has a diameter dominating pair, and such a pair can be found in $O(n^3)$ time. 
\end{lemma}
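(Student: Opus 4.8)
This is precisely the Dominating Pair Theorem of Corneil, Olariu and Stewart, so the plan is to recover their two ingredients in turn. \textbf{Existence of a dominating pair.} Fix an arbitrary vertex $r$, run a LexBFS from $r$, and let $x$ be the last vertex in the resulting order; then run a second LexBFS from $x$, obtaining distance layers $L_0=\{x\},L_1,\ldots,L_k$, and let $y$ be the last vertex of this second sweep (so $y\in L_k$). The AT-free hypothesis is what makes such a rigid choice work, and I would prove that $(x,y)$ is a dominating pair by contradiction. Suppose some induced $(x,y)$-path $P$ fails to dominate a vertex $z$, i.e.\ no vertex of $P$ equals or is adjacent to $z$. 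Since $G$ is connected there is an induced path $Q$ from $z$ meeting $V(P)$ for the first time at a vertex $w$. Then $x,y,z$ are pairwise non-adjacent (we may assume $G$ is not complete, so $\diam(G)\ge 2$ and $\dist_G(x,y)\ge 2$, while $z$ is non-adjacent to every vertex of $P$), the path $P$ joins $x$ and $y$ avoiding $N_G[z]$, and splicing the $x$--$w$ (resp.\ $y$--$w$) portion of $P$ with $Q$ should produce an $x$--$z$ path avoiding $N_G[y]$ and a $y$--$z$ path avoiding $N_G[x]$; this would exhibit $\{x,y,z\}$ as an asteroidal triple, a contradiction. Showing that the two spliced paths really avoid the required closed neighbourhoods — which is where the LexBFS/layering choice of $y$ is actually used — is the technical heart of this step.

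\textbf{Making the pair diametral.} Starting from a dominating pair $(x,y)$, I would ``slide'' its endpoints outward: if $\dist_G(x,y)<\diam(G)$, replace $x$ by a vertex strictly further from $y$ along a suitable induced path (and symmetrically for $y$), halting once $\dist_G(x,y)=\diam(G)$. The point is that in an AT-free graph such a move can be made without destroying the dominating-pair property; formally one works with the betweenness/``interval'' relation on vertices and argues, via AT-freeness, that the relevant components of $G-N_G[\cdot]$ do not split so as to leave some vertex undominated. Equivalently, one re-runs the argument of the previous paragraph under the extra constraint that the selected endpoint sits in the last BFS layer of a vertex of maximum eccentricity, forcing the output pair to realize the diameter. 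I expect this sliding/constrained re-run to be the main obstacle: it is where the AT-free structure is exploited most heavily and where the case analysis on neighbourhood components is most delicate.

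\textbf{Running time.} Running BFS from every vertex costs $O(n(n+m))=O(n^3)$ and yields all pairwise distances, every eccentricity, the diameter, and all BFS layerings needed above; a constant number of LexBFS sweeps costs $O(n+m)$ each; and a dominating pair can be recognized, and if necessary adjusted to a diametral one, in polynomial time using the COS characterization evaluated on the already-computed distance table. All of this fits comfortably within $O(n^3)$, so the time bound is routine once the structural claim is in place. The degenerate cases $\diam(G)\in\{0,1\}$, i.e.\ $G$ a single vertex or a complete graph, are immediate: every pair of vertices is simultaneously a diameter pair and a dominating pair.
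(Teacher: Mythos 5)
The paper does not prove this lemma at all: it is invoked as a black box and credited to Corneil, Olariu and Stewart \cite{CorneilOS97,CorneilOS99}. So there is no ``paper's proof'' to compare against; what you have written has to be judged as a stand-alone reconstruction of the COS theorem, and as such it is a plan, not a proof. You explicitly leave the two hardest steps open --- you write that the technical heart of step one, and the main obstacle of step two, are not carried out --- and those are precisely the places where a naive attempt fails.

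Concretely, the asteroidal-triple construction in your first step does not go through as stated. You take an induced $(x,y)$-path $P$ missing $N_G[z]$, let $Q$ reach $z$ from the first contact point $w$ on $P$, and claim the spliced $x$--$w$--$z$ walk avoids $N_G[y]$ and the spliced $y$--$w$--$z$ walk avoids $N_G[x]$. Neither claim holds in general: if $w$ is near the $y$-end of $P$, the $x$--$w$ prefix of $P$ already contains a neighbour of $y$ (indeed $w$ itself may be adjacent to $y$), and $Q$ is a path chosen only for connectivity, so nothing prevents $Q$ from meeting $N_G[x]$ or $N_G[y]$. This is where the real content of the COS argument lives (it exploits properties of the LexBFS layering, such as monotone behaviour of layers and the role of the last visited vertex), and that content is exactly what you are asking the reader to supply. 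Likewise, the ``sliding'' step to turn a dominating pair into a diameter dominating pair is described only as ``replace $x$ by a vertex strictly further from $y$ along a suitable induced path,'' with no argument that the dominating-pair property is preserved under this move; this is not a routine perturbation, since the set of paths that must dominate changes. Finally, your running-time paragraph quietly relies on being able to \emph{recognize} a dominating pair ``using the COS characterization'' --- but that characterization is again an unproved ingredient of the very theorem you are reconstructing; a direct check (``is every induced $(x,y)$-path dominating?'') is not obviously polynomial. In short, the skeleton is the right one (two-sweep LexBFS plus a diameter argument), but the proof as written reduces the lemma to unproved structural facts about LexBFS on AT-free graphs rather than establishing them.
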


We need the following lemmas.

\begin{lemma}\label{lem:path-contr}
Let  $\{u,v\}$ be a diameter pair in a connected graph $G$, let $s\geq 2$ be an integer, and let $d=\diam(G)>s$. If a graph $H$ of diameter at most $s$ can be obtained from $G$ by contracting at set $S$ of at most $k=d-s$ edges, then $|S|=k$ and there is a $(u,v)$-path $P$ of length $d$ such that $S\subseteq E(P)$. 
\end{lemma}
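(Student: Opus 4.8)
Let $\mathcal{W}$ be the $H$-contraction structure of $G$ corresponding to the edge set $S$; so the bags of $\mathcal{W}$ partition $V(G)$, each bag induces a connected subgraph, and contracting each bag yields $H$. Since each edge contraction reduces the number of vertices by exactly one, a bag of size $t$ must contain a spanning tree with $t-1$ edges from $S$; summing over bags gives $\sum_x(|W(x)|-1)\le|S|\le k$. The plan is to first argue $|S|=k$: if $|S|\le k-1$, the number of bags is at least $n-k+1$, and a shortest $(u,v)$-path of length $d=\diam(G)$ in $H$ would have length at most\ldots actually the cleaner route is to observe that contracting a single edge decreases the diameter by at most one, so after contracting $|S|$ edges the diameter is at least $d-|S|\ge d-k+1=s+1>s$, contradiction. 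Hence $|S|=k$ and moreover each of the $k$ contractions must strictly decrease the diameter (otherwise the same counting gives diameter $>s$ at the end).

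\textbf{Main step: finding the path $P$.} I would work in $H$ directly. Let $a$ and $b$ be the vertices of $H$ whose bags contain $u$ and $v$ respectively (so $u\in W(a)$, $v\in W(b)$; possibly $a=b$, but that is impossible here since $\dist_H(a,b)\ge\dist_G(u,v)-|S|\ge d-k=s\ge 2$, so in fact $a\ne b$ and $\dist_H(a,b)\le s$). Take a shortest $(a,b)$-path $Q=a=z_0z_1\cdots z_\ell b$ in $H$ with $\ell=\dist_H(a,b)\le s$. Now lift $Q$ to $G$: for consecutive bags $W(z_i),W(z_{i+1})$ choose an edge of $G$ between them, and inside each bag $W(z_i)$ choose a path in $G[W(z_i)]$ joining the two (or one) relevant attachment vertices — at the ends, the attachment vertex is $u$ in $W(a)$ and $v$ in $W(b)$. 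Concatenating these gives a $(u,v)$-walk $P'$ in $G$ whose vertex set is contained in $W(z_0)\cup\cdots\cup W(z_\ell)$; take $P$ to be a shortest $(u,v)$-path inside the subgraph $G[V(P')]$, which is connected. The edges of $P$ lie in $\bigcup_i E(G[W(z_i)])\cup\{\text{the chosen inter-bag edges}\}$; the inter-bag edges survive in $H$ (they are not in $S$), so $E(P)\cap S\subseteq\bigcup_i E(G[W(z_i)])\cap S$, which has size $\sum_{i=0}^\ell(|W(z_i)|-1)$ at most $\sum_{i}(|W(z_i)|-1)$.

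\textbf{Forcing length exactly $d$ and $S\subseteq E(P)$.} Here is the crux. We have $\dist_G(u,v)=d$, so $|E(P)|\ge d$. On the other hand, contracting the edges of $E(P)\cap S$ turns $P$ into a path in $H$ from $a$ to $b$ of length $|E(P)|-|E(P)\cap S|$, which is at least $\dist_H(a,b)=\ell$; hence $|E(P)\cap S|\le|E(P)|-\ell$. But also $|E(P)\cap S|\le\sum_{i=0}^\ell(|W(z_i)|-1)$ and $|E(P)|\le\sum_{i=0}^\ell(|W(z_i)|-1)+\ell$ (the path $P$ has at most that many intra-bag edges plus $\ell$ inter-bag edges). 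Combining with $|E(P)|\ge d$ and $|S|=k=d-s$ and $\ell\le s$: we get $d\le|E(P)|\le|E(P)\cap S|+\ell\le|S|+\ell\le(d-s)+s=d$, so every inequality is tight. In particular $|E(P)|=d$, $\ell=s$, and $|E(P)\cap S|=|S|=k$, which forces $S\subseteq E(P)$. I expect the bookkeeping in this chain of inequalities — making sure the lifted path $P$ really has at most $\ell$ inter-bag edges and that the shortest-path choice inside $G[V(P')]$ does not accidentally re-enter a bag and inflate the count — to be the main obstacle; the key observation that keeps it clean is that $P'$ visits the bags $W(z_0),\dots,W(z_\ell)$ in order, so $G[V(P')]$ looks like a "chain" of connected pieces glued at single edges, and any shortest path in it crosses each gluing edge exactly once.
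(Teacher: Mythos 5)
Your plan is the right plan and is close to the paper's proof, but the step you flag as ``the main obstacle'' is in fact a genuine gap, and the paper closes it with a specific observation that you notice in passing but never use.

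The chain
\[
d\ \le\ |E(P)|\ \le\ |E(P)\cap S|+\ell\ \le\ |S|+\ell\ \le\ (d-s)+s\ =\ d
\]
needs the middle inequality $|E(P)|\le|E(P)\cap S|+\ell$, i.e., that $P$ contains at most~$\ell$ edges outside~$S$. The edges of~$P$ outside~$S$ come in two kinds: inter-bag edges (which are never in~$S$, since every edge of~$S$ has both endpoints in a single bag), and intra-bag edges of $G[W(z_i)]$ that happen not to be contraction edges. Your construction gives no control over the second kind: you lift $Q$ using \emph{arbitrary} intra-bag paths, and then replace the resulting walk by a shortest $(u,v)$-path in $G[V(P')]$ --- which actively works against you, since a shortest path will happily take a chord inside a bag that is not in~$S$. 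So the bound on non-$S$ edges, and with it the tightness argument that forces $S\subseteq E(P)$, does not go through.

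The paper's proof takes your exact route but resolves the obstacle by using the fact (which you state in your opening paragraph and then forget): for every bag $W(x)$, the graph $(W(x), E(G[W(x)])\cap S)$ is connected, in fact a spanning tree. This lets one choose each intra-bag path $P_i$ with $E(P_i)\subseteq S$. The concatenation $P=P_0+z_0y_1+P_1+\cdots+z_{t-1}y_t+P_t$ is then automatically a path (the bags $W(z_i)$ are pairwise disjoint), has exactly $t\le s$ inter-bag edges, and all its intra-bag edges lie in~$S$, so $|E(P)|\le|S|+t\le k+s=d$, while $|E(P)|\ge\dist_G(u,v)=d$. Equality then gives $t=s$, $|S|=k$, and $S\subseteq E(P)$ all at once, with no need to bound inter-bag crossings of a shortest path in the induced subgraph or to track the shape of $G[V(P')]$. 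Your separate argument that $|S|=k$ (contraction of one edge decreases diameter by at most one) is correct, but it is not needed once the intra-bag paths are taken inside~$S$.
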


\begin{proof}
Let $S$ be a set of at most $k$ edges of $G$ such that $H=G/S$ and let $\mathcal{W}=\{W(x)\mid x\in V(H)\}$ be the corresponding $H$-contraction structure. 
Because $\diam(H)\leq s$, $H$ has a path $Q=x_0\ldots x_t$ such that $u\in W(x_0)$, $v\in W(x_t)$ and $t\leq s$. By the definition of an $H$-contraction structure, for $i\in\{0,\ldots,t\}$, there are $y_i,z_i\in W(x_i)$ with the following properties: $y_0=u$, $z_t=v$, and
$z_{i-1}y_i\in E(G)$ for $i\in\{1,\ldots,t\}$, where $y_i=z_i$ is possible. Each $W(x_i)$ induces a connected subgraph of $G$. Moreover, the edges of $S$ in $G[W(x_i)]$ compose a spanning subgraph, i.e., $G_i'=(W(x_i), E(G[W(x_i)])\cap S)$ is connected.
Hence, for $i\in\{0,\ldots,t\}$, there is a  $(y_i,z_i)$-path $P_i$ in $G[W(x_i)]$ with $E(P_i)\subseteq S$.  
Denote by $P$ the $(u,v)$-path $P_0+z_0y_1+P_1+\ldots+z_{t-1}y_t+P_t$. Observe that $\sum_{i=0}^t|E(P_i)|\leq |S|\leq k$ and $P$ has length at most $k+t\leq k+s$. On the other hand, because $\dist_G(u,v)=d$, it holds that $P$ has length at least $d=k+s$. This implies that $t=s$ and $P$ has length exactly $k+t=d$. It also implies that $|S|=k$ and $S\subseteq E(P)$.
\end{proof}

\begin{lemma}\label{lem:AT-bounds}
Let $G$ be a connected AT-free graph and let $s\geq 2$ be an integer, and suppose that $\diam(G)>s$. If $k$ is the minimum number of edges that needs to be contracted in order to obtain a graph of diameter at most $s$ from $G$,
then $\diam(G)-s\leq k\leq \diam(G)-s+2$.
\end{lemma}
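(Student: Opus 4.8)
The plan is to establish the two inequalities separately. The lower bound $k \geq \diam(G) - s$ is immediate: if $H$ has diameter at most $s$ and is obtained from $G$ by contracting $k$ edges, then since each contraction decreases the diameter by at most one (contracting an edge $e$ can only identify two vertices, and a shortest path in $G/e$ lifts to a walk in $G$ of length at most one more), we get $\diam(G) \leq \diam(H) + k \leq s + k$. So the real content is the upper bound $k \leq \diam(G) - s + 2$, which I would prove by exhibiting an explicit contraction.

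For the upper bound, first I would invoke Lemma~\ref{lem:dom_pair} to obtain a diameter dominating pair $\{u,v\}$ in $G$, so that $\dist_G(u,v) = \diam(G) =: d$ and every $(u,v)$-path is a dominating path. Fix a shortest $(u,v)$-path $P = p_0 p_1 \cdots p_d$ with $p_0 = u$, $p_d = v$. The idea is to contract a contiguous segment of the interior of $P$ down to a single vertex, turning $P$ into a short path. Concretely, I would contract the subpath on vertices $p_1, p_2, \ldots, p_{d-1}$ — that is, the $d-2$ edges $p_1p_2, \ldots, p_{d-2}p_{d-1}$ — into a single vertex $w$. This uses $d - 2$ contractions and produces a graph $H$ in which the image of $P$ is the path $u\, w\, v$ of length $2$. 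Since $P$ was a dominating path in $G$ and contraction preserves domination, the set $\{u, w, v\}$ is a dominating set in $H$, and moreover $w$ is adjacent to every vertex of $H$ that is not $u$ or $v$ (any vertex dominated by some $p_i$ with $1 \le i \le d-1$ becomes adjacent to $w$; a vertex dominated only by $u$ or only by $v$ is within distance $1$ of $u$ or $v$). Hence every vertex of $H$ is within distance $1$ of $\{u, w, v\}$, which gives $\diam(H) \leq 1 + 2 + 1 = 4$ in the crudest estimate — but in fact, routing through $w$, any two vertices are at distance at most $1 + 1 + 1 = 3$, and when $s \geq 2$ one more contraction (say contracting $uw$ into the path, leaving $w'v$) brings the diameter down to at most $2$. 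Since $s \geq 2$, I would use $d - 2 + 1 = d - s + (s-2) + 1 \le d - s + 1$ contractions when $s = 2$, and generally contract $p_{s-1}, p_s, \ldots, p_{d-1}$... — here I need to be careful to match the bound $d - s + 2$ exactly.

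More precisely, for general $s \geq 2$: contract the $d - s$ edges $p_{s-1}p_s, p_sp_{s+1}, \ldots, p_{d-2}p_{d-1}$ so that the image of $P$ becomes the path $u = p_0\, p_1 \cdots p_{s-2}\, w\, v$ of length $s$, where $w$ is the contracted vertex. This vertex $w$ is adjacent in the resulting graph $H$ to every vertex dominated by some $p_i$ with $s-1 \le i \le d-1$. A vertex dominated only by $p_j$ for $j \le s-2$ is at distance at most $s-2+1 = s-1$ from $u$ along the image of $P$, hence at distance at most $s-1$ from $u$; and $w$ itself is at distance $s$ from $u$ and distance $2$ from $u$ via... — at this point the diameter of $H$ might be as large as $s+1$ (a vertex hanging off $p_1$ and a vertex hanging off $v$), so I would contract one or two additional edges incident to $w$ along the path (e.g. $p_{s-2}w$ and, if needed, $v w$) to absorb the slack, spending at most $2$ extra contractions, for a total of at most $d - s + 2$. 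I expect the main obstacle to be the bookkeeping in this last step: verifying precisely that after $d - s + 2$ well-chosen contractions every pair of vertices is within distance $s$, using the dominating-path property to control vertices not on $P$, and checking the small cases $s = 2$ and $d = s+1$ separately. The rest is routine.
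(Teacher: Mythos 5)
Your proposal follows the paper's approach: obtain a diameter dominating pair via Lemma~\ref{lem:dom_pair} and contract edges of the resulting dominating path $P$. The bookkeeping you defer at the end becomes trivial once you state the key invariant directly --- a graph with a dominating path of length $\ell$ has diameter at most $\ell+2$, since every vertex is within distance $1$ of the path --- so contracting any $d-s+2 \leq d$ edges of $P$ (an arbitrary set, not necessarily contiguous) leaves a dominating path of length $s-2$ and hence $\diam(H)\leq s$, with no case analysis on $s$ or $d$ required; this is exactly what the paper does. Your lower-bound observation (each contraction drops the diameter by at most one) is a more elementary replacement for the paper's appeal to Lemma~\ref{lem:path-contr}, and both are fine.
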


\begin{proof}
The bound $\diam(G)-s\leq k$ is a straightforward corollary of Lemma~\ref{lem:path-contr}.  By Lemma~\ref{lem:dom_pair}, 
$G$ has a diameter dominating pair $\{u,v\}$. Let $P$ be a $(u,v)$-path in $G$. This path has length $\diam(G)\geq \diam(G)-s+2$. Let $S$ be an arbitrary set of $\diam(G)-s+2$ edges of $P$.
Let $H=G/S$ and $Q=P/S$.  Because $P$ is a dominating path in $G$, $Q$ is a dominating  path in $H$. Since $Q$ has length at most $s-2$, it immediately implies that $\diam(H)\leq s$. Hence, $k\leq \diam(G)-s+2$. 
\end{proof}

Let us point out that the bounds in Lemma~\ref{lem:AT-bounds} are tight; if $G$ is a path, then it is sufficient to contract $\diam(G)-s$ edges to obtain a graph of diameter at most $s$. If $G$ is a  graph shown in Fig.~\ref{fig:ladder} for $r\geq 2$, then $\diam(G)=2r$ and it is necessary to contract $2r-s+2$ edges to obtain a graph of diameter $2\leq s\leq r$. 

\begin{figure}[ht]
\centering\scalebox{0.65}{\input{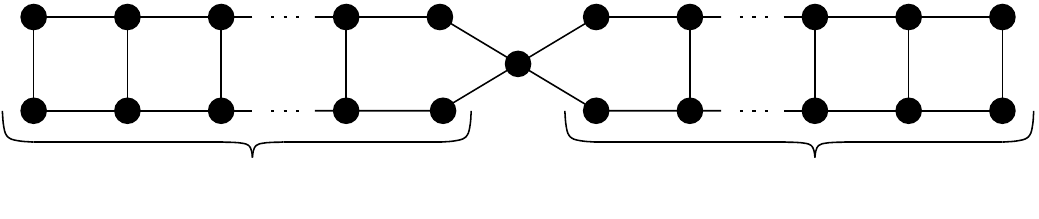_t}}
\caption{A graph $G$ for which the upper bound in Lemma~\ref{lem:AT-bounds} is tight. 
\label{fig:ladder}}
\end{figure}

Let $G$ be a connected graph of diameter $d$. For a diameter pair $\{u,v\}$, let $X_u=\{x\in V(G)\mid \dist_G(u,x)=d\}$, $Y_u=\{x\in V(G)\mid \dist_G(u,x)=d-1\}$, $X_v=\{x\in V(G)\mid \dist_G(v,x)=d\}$, and $Y_v=\{x\in V(G)\mid \dist_G(v,x)=d-1\}$. We say that a path $Q=x_0\ldots x_k$ in $G$ is \emph{$(u,v)$-satisfying} if 
 \begin{itemize}
\item[i)] $Q$ is a subpath of some $(u,v)$-path $P$ of length $d$, and $x_0\prec_P x_k$;
\item[ii)] for any $z\in X_v$, $\dist_G(z,x_0)=\dist_G(u,x_0)$, and for any $z\in X_u$, $\dist_G(z,x_k)=\dist_G(v,x_k)$;
\item[iii)] for any  $z\in Y_v$,  $\dist_G(z,x_0)\leq \dist_G(u,x_0)$ or $\dist_G(z,x_1)\leq \dist_G(u,x_0)$, and for any  $z\in Y_u$,  $\dist_G(z,x_k)\leq \dist_G(v,x_k)$ or $\dist_G(z,x_{k-1})\leq \dist_G(v,x_k)$.
\end{itemize}

We need the following straightforward observation.

\begin{lemma}\label{lem:shift}
Let $\{u,v\}$ be a diameter pair in a connected graph $G$, and let $P$ be a $(u,v)$-path of length $d=\diam(G)$.
If for two vertices $w_1,w_2\in V(P)$ such that $w_1\prec_P w_2$, it holds that 
\begin{itemize}
\item[i)] $\dist_G(z,w_1)=\dist_G(u,w_1)$ for any $z\in X_v$, and 
\item[ii)] $\dist_G(z,w_1)\leq\dist_G(u,w_1)$ or $\dist_G(z,w_2)\leq \dist_G(u,w_2)-1$ for any $z\in Y_v$, 
\end{itemize}
then for any two vertices $w_1',w_2'\in V(P)$ such that $w_1'\prec_P w_2'$, $w_1\prec_P w_1'$, and $w_2\prec_P w_2'$, it holds that
\begin{itemize}
\item[i)] $\dist_G(z,w_1')=\dist_G(u,w_1')$ for any $z\in X_v$, and 
\item[ii)] $\dist_G(z,w_1')\leq\dist_G(u,w_1')$ or $\dist_G(z,w_2')\leq \dist_G(u,w_2')-1$ for any $z\in Y_v$. 
\end{itemize}
\end{lemma}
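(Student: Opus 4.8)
The plan is to prove the two conclusions (i) and (ii) separately, each by a short argument using the triangle inequality along the path $P$ together with the hypotheses at $w_1,w_2$, and crucially the fact that $\{u,v\}$ is a diameter pair so that distances along $P$ are ``extremal''. Throughout I fix the orientation of $P$ from $u$ to $v$, so that for any $w\in V(P)$ we have $\dist_G(u,w)+\dist_G(w,v)=d$ (this holds because $P$ is a shortest $(u,v)$-path, being of length $d=\diam(G)$), and for $w_1\prec_P w_2$ the portion of $P$ between them is a shortest $(w_1,w_2)$-path, so $\dist_G(w_1,w_2)=\dist_G(u,w_2)-\dist_G(u,w_1)$. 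All the ``monotonicity'' we need comes from these identities.

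First I would handle conclusion (i). Fix $z\in X_v$ and suppose $w_1'\in V(P)$ with $w_1\prec_P w_1'$. Since $z\in X_v$, we have $\dist_G(z,v)=d$, which is the maximum possible distance; combined with $\dist_G(u,w_1')+\dist_G(w_1',v)=d$ this forces, via the triangle inequality $\dist_G(z,v)\le \dist_G(z,w_1')+\dist_G(w_1',v)$, that $\dist_G(z,w_1')\ge \dist_G(u,w_1')$. For the reverse inequality I use the hypothesis $\dist_G(z,w_1)=\dist_G(u,w_1)$ together with the triangle inequality $\dist_G(z,w_1')\le \dist_G(z,w_1)+\dist_G(w_1,w_1')=\dist_G(u,w_1)+\dist_G(w_1,w_1')=\dist_G(u,w_1')$, the last equality because the $P$-segment from $w_1$ to $w_1'$ is shortest. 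Hence $\dist_G(z,w_1')=\dist_G(u,w_1')$, which is (i). (Note this direction does not even use $w_2\prec_P w_2'$ or $w_1'\prec_P w_2'$ — (i) is purely a statement about shifting $w_1$ rightward.)

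Now conclusion (ii). Fix $z\in Y_v$, so $\dist_G(z,v)=d-1$, and take $w_1'\prec_P w_2'$ with $w_1\prec_P w_1'$ and $w_2\prec_P w_2'$. By hypothesis at $w_1,w_2$, either (a) $\dist_G(z,w_1)\le \dist_G(u,w_1)$, or (b) $\dist_G(z,w_2)\le \dist_G(u,w_2)-1$. In case (a), exactly as in the computation for (i) I get $\dist_G(z,w_1')\le \dist_G(z,w_1)+\dist_G(w_1,w_1')\le \dist_G(u,w_1)+\dist_G(w_1,w_1')=\dist_G(u,w_1')$, which is the first disjunct of the desired conclusion. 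In case (b), I propagate the bound from $w_2$ to $w_2'$: $\dist_G(z,w_2')\le \dist_G(z,w_2)+\dist_G(w_2,w_2')\le (\dist_G(u,w_2)-1)+\dist_G(w_2,w_2')=\dist_G(u,w_2')-1$, using again that the $P$-segment from $w_2$ to $w_2'$ is a shortest path; this is the second disjunct. In either case conclusion (ii) holds, completing the proof.

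The argument is essentially mechanical once the right viewpoint is in place, so I do not expect a genuine obstacle; the one point demanding care is justifying that consecutive vertices of $P$ (and more generally any sub-segment of $P$) realize their graph distance, i.e.\ that $P$ is an \emph{isometric} path in $G$. This is where the hypothesis $d=\diam(G)$ is used: a $(u,v)$-path of length equal to the diameter is necessarily a shortest $(u,v)$-path, and any subpath of a shortest path is shortest. A secondary small point is the bookkeeping of the ``$-1$'' in the $Y_v$ condition across the shift in case (b); this is handled exactly as shown above, with the $-1$ surviving because lengths along $P$ add up exactly. Everything else is the triangle inequality applied twice.
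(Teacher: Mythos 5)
Your proof is correct, and since the paper states Lemma~\ref{lem:shift} without proof (calling it a ``straightforward observation''), your triangle-inequality argument---using that a $(u,v)$-path of length $d=\diam(G)$ is isometric so that $\dist_G(w,w')=\dist_G(u,w')-\dist_G(u,w)$ for $w\prec_P w'$---is exactly the intended reasoning. Both the lower bound in~(i) via $\dist_G(z,v)=d$ and the case split in~(ii) with the $-1$ bookkeeping are handled correctly.
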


Now we need the structural results given in the following three lemmas. 

\begin{lemma}\label{lem:struct-diam}
Let $\{u,v\}$ be a diameter pair in a connected graph $G$, let $s\geq 2$ be an integer, and let $d=\diam(G)\geq s+3$. 
If a graph $H$ of diameter at most $s$ can be obtained from $G$ by contracting a set $S$ of at most $k=d-s$ edges, then
$G$ has a $(u,v)$-satisfying path of length $k$.
Moreover, if for any $(u,v)$-satisfying path  $Q=x_0\ldots x_k$, it holds that $x_0=u$ or $x_k=v$, then 
\begin{itemize}
\item[i)] $x_i$ is the unique vertex at distance $i$ from $u$ in $G$ for $i\in\{0,\ldots,k-2\}$ if $x_0=u$, and 
$x_{k-i}$ is the unique vertex at distance $i$ from $v$ in $G$ for $i\in\{0,\ldots,k-2\}$ if $x_k=v$; and
\item[ii)] $x_{i-1}x_i\in S$ for $i\in\{1,\ldots,k-2\}$ if $x_0=u$, and  
 $x_{i-1}x_i\in S$ for $i\in\{1,\ldots,k-2\}$ if $x_k=v$.
\end{itemize}
\end{lemma}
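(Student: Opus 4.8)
The plan is to start from the set $S$ of at most $k$ contracted edges and invoke Lemma~\ref{lem:path-contr}, which tells us $|S|=k$ and there is a $(u,v)$-path $P$ of length $d$ with $S\subseteq E(P)$. The first task is to carve out of $P$ a $(u,v)$-satisfying subpath $Q=x_0\ldots x_k$ of length exactly $k$. The natural candidate is the subpath consisting of the endpoints of the edges in $S$: since $|S|=k$ and $S\subseteq E(P)$, the edges of $S$ together with the ``gaps'' between them partition $P$, and contracting $S$ collapses $P$ into a path of length $d-k=s$. The point is that the $H$-contraction structure $\mathcal W=\{W(x)\mid x\in V(H)\}$ forces the endpoints $x_0,x_k$ of the contracted segment to satisfy conditions (ii) and (iii) in the definition of $(u,v)$-satisfying: if some $z\in X_v$ (a vertex at distance $d$ from $v$) had $\dist_G(z,x_0)$ strictly less than $\dist_G(u,x_0)$, then routing through $z$ and then through the contracted image of $Q$ would give a $(v,\,\text{bag of }z)$-walk in $H$ that is too short or too long relative to $\diam(H)\le s$ — the distances in $H$ between the bag of $u$ and the bag of $v$ must be exactly $s$, and $z$ lies at the extreme, so its bag must also be at distance $s$ from the bag of $v$, and this pins $\dist_G(z,x_0)=\dist_G(u,x_0)$. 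Condition (iii) for $Y_v$ is the same argument with a one-vertex slack, reflecting the fact that a vertex at distance $d-1$ from $v$ gives a bag at distance $s$ or $s-1$. So the first paragraph of the real proof is: pick $Q$ as the minimal subpath of $P$ containing all of $S$, check $|E(Q)|=k$, and verify (i)--(iii) by distance bookkeeping in $H$ using $\mathcal W$.

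For the ``moreover'' part, suppose every $(u,v)$-satisfying path $Q=x_0\ldots x_k$ has $x_0=u$ or $x_k=v$; by symmetry assume we are in the case $x_0=u$ for the particular $Q$ extracted above (the argument for $x_k=v$ is mirror-image). The plan here is a contradiction-via-shifting argument built on Lemma~\ref{lem:shift}. If some $x_i$ with $i\le k-2$ were \emph{not} the unique vertex at distance $i$ from $u$, or if some edge $x_{i-1}x_i$ with $i\le k-2$ were \emph{not} in $S$, I want to produce a \emph{different} $(u,v)$-satisfying path that does not start at $u$ and does not end at $v$, contradicting the hypothesis. The mechanism: because $x_0=u$, the distance-$\preceq_P$ order on $P$ near $u$ agrees with graph distance from $u$ for the first several vertices, and Lemma~\ref{lem:shift} says that once the endpoint conditions (i)--(ii) of that lemma hold at some pair $(w_1,w_2)$ on $P$, they persist for all later pairs. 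Concretely, the satisfying conditions (ii)--(iii) at $x_0=u$ are exactly the hypotheses of Lemma~\ref{lem:shift}; so we may slide the left endpoint of the window forward along $P$ by one edge (replacing $x_0=u$ by the next vertex $u'$ of $P$) and the conditions at the left end are preserved — unless doing so breaks condition (i), i.e., unless $u'$ is not an inner vertex strictly between the endpoints, which cannot happen for $k\le d-s$ with $d\ge s+3$ giving enough room. Symmetrically slide the right endpoint back by one edge off $v$. The result is a $(u,v)$-satisfying path of length $k$ whose endpoints are $u'\ne u$ and $v'\ne v$ — contradiction. Hence no such ``slack'' exists, which is precisely saying each relevant $x_i$ is the unique vertex at its distance from $u$ (uniqueness: a second such vertex would let us build an alternative satisfying window) and each relevant edge lies in $S$ (if $x_{i-1}x_i\notin S$ for small $i$, then $S$ is concentrated further along $P$, and the minimal subpath containing $S$ would have started later than $u$, again contradicting that our extracted $Q$ must begin at $u$).

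The step I expect to be the main obstacle is making the shifting argument for the ``moreover'' part fully rigorous — in particular, checking that when we slide the window endpoints we do not simultaneously violate condition (i) (being a subpath of a length-$d$ $(u,v)$-path with $x_0\prec_P x_k$) and that the one-vertex slack in condition (iii) interacts correctly with Lemma~\ref{lem:shift}'s hypothesis (ii), which has the asymmetric ``$\le \dist_G(u,w_2)-1$'' form. One has to be careful that the $Y_v$-condition, which allows a vertex $z$ to be ``caught'' either at $x_0$ or at $x_1$, translates precisely into the shiftable form; and one must handle the boundary case where sliding would push an endpoint past where $S$ lives. The uniqueness claims in part (i) and the membership claims in part (ii) should then fall out as restatements of ``there is exactly one way to choose the contracted window,'' but spelling out why a non-unique vertex or a non-$S$ edge yields an \emph{honestly different} satisfying path (not just a reparametrization) needs the room provided by $d\ge s+3$, i.e., $k\ge 3$, so that indices $0,\ldots,k-2$ form a genuine nonempty initial segment.
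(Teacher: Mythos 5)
Your first step is correct (invoke Lemma~\ref{lem:path-contr} to get a $(u,v)$-path $P$ of length $d$ with $|S|=k$ and $S\subseteq E(P)$), but the construction of the $(u,v)$-satisfying path that follows is broken. You propose taking $Q$ to be the \emph{minimal subpath of $P$ containing all of $S$} and assert that $|E(Q)|=k$. This is false in general: the minimal subpath from the $u$-side endpoint of $e_1$ to the $v$-side endpoint of $e_k$ has length $\dist_P(y_1,z_1)$, which is only bounded below by $k$ and can be as large as $d$ when the $S$-edges are spread out along $P$. Extracting a window of length \emph{exactly} $k$ is precisely what requires work. The paper does this by first proving, at the outer endpoints $y_1,z_1$ and their neighbours $y_2,z_2$, the two ``distance'' claims that underlie conditions (ii) and (iii) --- Claim~1 by re-applying Lemma~\ref{lem:path-contr} to the diameter pair $\{z,v\}$ (if $\dist_G(z,y_1)>\dist_G(u,y_1)$ then $e_1$ lies on no $(z,v)$-path of length $d$, so $e_1\notin S$), and Claim~2 by contracting just $e_1,e_2$ and noting $\dist_{G/\{e_1,e_2\}}(z,v)=d-1>s$ --- and then using Lemma~\ref{lem:shift} to slide those conditions inward until the window has length $k$. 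Your ``routing through $z$ and the contracted image of $Q$ gives a walk that is too short or too long'' is not a substitute for these two specific arguments; in particular it does not explain the asymmetric ``$-1$'' slack in condition (iii) or why the claims can be pushed inward from $y_1,z_1$ to a shorter window.

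The ``moreover'' part is also approached differently from the paper, and the route you propose has a gap you cannot close. You want to argue: if some $x_i$ ($i\le k-2$) were not the unique vertex at distance $i$ from $u$, or some $x_{i-1}x_i$ with $i\le k-2$ were not in $S$, produce a \emph{different} $(u,v)$-satisfying path avoiding both $u$ and $v$, contradiction. But a second vertex $w\ne x_i$ with $\dist_G(u,w)=i$ need not lie on any $(u,v)$-path of length $d$, so there is no recipe for building a satisfying window through $w$; ``a second such vertex would let us build an alternative satisfying window'' is asserted, not proved. The paper's argument for (i) is entirely different: after showing that the hypothesis forces $\dist_P(u,z_2)\le k-1$ (or symmetrically $\dist_P(v,y_2)\le k-1$), it contracts the first $k-1$ edges of $P$ and observes that if such a $w$ existed, then $\dist_{G/\{x_0x_1,\ldots,x_{k-2}x_{k-1}\}}(w,v)\ge s+2$, hence $\dist_H(w,v)\ge s+1>\diam(H)$, a contradiction with $\diam(H)\le s$. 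And (ii) is not a contradiction-via-shifting at all: once $\dist_P(u,z_2)\le k-1$ is established, the edges $e_1,\ldots,e_{k-1}$ are forced to be the first $k-1$ edges of $P$, so $x_{i-1}x_i\in S$ for $i\le k-1$ immediately. Your sliding heuristic ``the minimal subpath would have started later than $u$'' cannot be made rigorous because (a) the minimal $S$-containing subpath is not the object being used, and (b) a slid window might end at $v$, in which case the hypothesis ``$x_0=u$ or $x_k=v$'' is still satisfied and no contradiction arises.
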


\begin{proof}
By Lemma~\ref{lem:path-contr}, there is a $(u,v)$-path $P$ of length $d$ in $G$ such that $S\subseteq E(P)$. Observe also that $|S|=k\geq 3$. 
Denote by $e_1,\ldots,e_k$, $e_1\preceq_P\ldots\preceq_P e_k$, the edges of $S$. Let $y_1$ be the end-vertex of $e_1$ closest to $u$ and let $y_2$ be the end-vertex of $e_2$ closest to $u$. Similarly, let $z_1$ be the end-vertex of $e_k$ closest to $v$ and let $z_2$ be the end-vertex of $e_{k-1}$ closest to $v$. Notice that $y_1\prec_P y_2\prec_P z_2\prec_P z_1$, and that $\dist_{G}(y_2,z_2)\geq k-2$. 

We show the following claims.

\medskip
\noindent
{\bf Claim 1.} {\it For any $z\in X_v$, $\dist_G(z,y_1)=\dist_G(u,y_1)$, and for any $z\in X_u$, $\dist_G(z,z_1)=\dist_G(v,z_1)$. }

\begin{proof}[Proof of Claim~1]
Clearly, it suffices to show that for any $z\in X_v$, $\dist_G(z,y_1)=\dist_G(u,y_1)$ as the second part of the claim is symmetric.  Because $P$ has  length $d=\diam(G)$, $\dist_G(z,y_1)\geq \dist_G(u,y_1)$ for $z\in X_v$. Suppose that there is $z\in X_v$ such that $\dist_G(z,y_1)>\dist_G(u,y_1)$. Then $e_1$ does not belong to any $(z,v)$-path of length $d$, contradicting Lemma~\ref{lem:path-contr} for the diameter pair $\{z,v\}$. 
\end{proof}

\medskip
\noindent
{\bf Claim 2.} {\it For any $z\in Y_v$, $\dist_G(z,y_1)\leq\dist_G(u,y_1)$ or $\dist_G(z,y_2)\leq \dist_G(u,y_2)-1$, and for any  $z\in Y_u$,  $\dist_G(z,z_1)\leq \dist_G(v,z_1)$ or $\dist_G(z,z_2)\leq \dist_G(v,z_2)-1$.}

\begin{proof}[Proof of Claim~2]
By symmetry, it is sufficient to show that for any $z\in Y_v$, $\dist_G(z,y_1)\leq\dist_G(u,y_1)$ or $\dist_G(z,y_2)\leq \dist_G(u,y_2)-1$.
Suppose that there is $z\in Y_v$ such that $\dist_G(z,y_1)>\dist_G(u,y_1)$ and  $\dist_G(z,y_2)> \dist_G(u,y_2)-1$. Because $k=d-s$, the contraction of $e_1,e_2$ should decrease the diameter of the graph by at least 2. But for $G'=G/\{e_1,e_2\}$,  we have that $\diam(G')\geq \dist_{G'}(z,v)=\dist_G(z,v)=d-1$; a contradiction.
\end{proof}

Observe that there exist four vertices $y_1',y_2',z_1',z_2'\in V(P)$ such that $y_1'\prec_P y_2'\prec_p z_2'\prec_P z_1'$, $y_1\preceq_P y_1'$, $y_2\preceq_P y_2'$, $z_1'\preceq_P z_1$, $z_2'\preceq_P z_2$, and $\dist_P(y_1',z_1')=k$. Let $x_0\ldots x_k$ be the $(y_1',z_1')$-subpath of $P$. Claims~1 and~2 together with Lemma~\ref{lem:shift} immediately imply that conditions ii) and iii) of the definition of a $(u,v)$-satisfying path are fulfilled.

To show the second part of the statement of the  lemma, observe that if  $\dist_P(u,z_2)\geq k$ and  $\dist_P(v,y_2)\geq k$, then $y_1',y_2',z_1',z_2'$ can always be chosen in such a way that $u\neq y_1'$ and $v\neq z_1'$. Hence,  if for any $(u,v)$-satisfying path $Q=x_0\ldots x_k$, $x_0=u$ or $x_k=v$, then 
either $\dist_P(u,z_2)\leq k-1$ and  $\dist_P(v,y_2)\leq k-1$. 

Suppose that $\dist_P(u,z_2)\leq k-1$ (the other case is symmetric). Assume that $G$ has a vertex $w\neq x_i$ at distance $i$ from $u$ in $G$ for some $i\in\{0,\ldots,k-2\}$. Observe that if $G'=G/S'$ for $S'=\{x_0x_1,\ldots,x_{k-2}x_{k-1}\}$, then $\dist_{G'}(w,v)\geq s+2$. Hence, $\dist_H(w,v)\geq s+1$ contradicting the condition that $\diam(H)=s$.  Therefore, i) follows. By the definition of $z_2$, $x_{i-1}x_i\in S$ for $i\in\{1,\ldots,k-2\}$ and ii) holds. 
\end{proof}

\begin{lemma}\label{lem:sufficient}
Let $\{u,v\}$ be a diameter dominating pair in a connected AT-free graph $G$ of diameter $d$, and let $s$ be an integer such that $2\leq s<d$.  
If $G$ has a $(u,v)$-satisfying path $Q=x_0\ldots x_k$ of length $k=d-s$ such that $u\neq x_0$ and $v\neq x_k$, then $\diam(G/E(Q))\leq s$.
\end{lemma}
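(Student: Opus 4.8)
The plan is to perform the contraction explicitly and control every pairwise distance in $G'=G/E(Q)$ by routing through the single contracted vertex. Fix the $(u,v)$-path $P$ of length $d$ supplied by condition~(i) of the definition of a $(u,v)$-satisfying path. Since $\{u,v\}$ is a diameter dominating pair, $P$ is a shortest $(u,v)$-path, so writing $P=p_0p_1\cdots p_d$ with $p_0=u$, $p_d=v$ we have $\dist_G(u,p_j)=j$ for all $j$, and moreover $V(P)$ is a dominating set of $G$. Write $Q=p_i\cdots p_{i+k}$; the hypotheses $x_0\ne u$, $x_k\ne v$ together with $k=d-s$ force $1\le i\le s-1$. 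Let $q$ be the vertex of $G'$ obtained by contracting the bag $W(q)=\{p_i,\ldots,p_{i+k}\}$, and let $P'=P/E(Q)=p_0\cdots p_{i-1}\,q\,p_{i+k+1}\cdots p_d$. Then $P'$ is a $(u,v)$-path of length $s$ in $G'$ that still dominates $G'$; in particular $\dist_{G'}(u,q)\le i$, $\dist_{G'}(v,q)\le s-i$, $\dist_{G'}(p_j,q)\le i-j$ for $j\le i-1$, and symmetric bounds hold on the $v$-side. The goal is to show $\dist_{G'}(a,b)\le s$ for all $a,b\in V(G')$.

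\textbf{Classifying vertices.} Since $V(P)$ dominates $G$, every $a\in V(G')\setminus\{q\}$ either has a $G$-neighbour inside $W(q)$, in which case I call $a$ \emph{central} and note $\dist_{G'}(a,q)=1$; or else $a$ lies on, or has a $G$-neighbour equal to, some $p_j$ with $j\le i-1$ (a \emph{left} vertex) or with $j\ge i+k+1$ (a \emph{right} vertex). I would then prove the three bounds $\dist_{G'}(a,q)\le 1$ for central $a$, $\dist_{G'}(a,q)\le i$ for left $a$, and $\dist_{G'}(a,q)\le s-i$ for right $a$. For a left vertex witnessed by $p_j$ with $j\ge1$, or by $p_j$ with $a=p_j$, the bound follows at once by walking to $q$ along $P'$. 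The only case needing more is when the sole left-witness must be taken to be $p_0$, i.e.\ $a$ is a neighbour of $u$ distinct from $u$: then $\dist_G(a,v)\ge d-1$, so $a\in X_v\cup Y_v$, and \emph{this is exactly where the $(u,v)$-satisfying hypotheses enter} — conditions~(ii) and~(iii) say that $a$ is within $\dist_G(u,x_0)=i$ of $x_0$ or of $x_1$, and since $k=d-s\ge1$ both $x_0,x_1\in W(q)$, whence $\dist_{G'}(a,q)\le i$. The right case is symmetric, using that a neighbour of $v$ distinct from $v$ lies in $X_u\cup Y_u$.

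\textbf{Conclusion.} A short case analysis on the classes of $a$ and $b$ then finishes the proof. If $a=q$ or $b=q$, the bounds above already give $\dist_{G'}(a,b)\le\max\{i,s-i\}\le s-1$. If $a$ is left-or-central and $b$ is right-or-central (or vice versa), then routing through $q$ gives $\dist_{G'}(a,b)\le\dist_{G'}(a,q)+\dist_{G'}(q,b)\le i+(s-i)=s$. The only remaining possibility is that $a$ and $b$ are both left, or both right; say both left. Pick left-witnesses $p_{j_a}$ and $p_{j_b}$ with $j_a,j_b\le i-1$ and route $a\to p_{j_a}\to\cdots\to p_{j_b}\to b$ through the $u$-side segment of $P'$, which has length at most $i-1$; this yields $\dist_{G'}(a,b)\le 1+(i-1)+1=i+1\le s$. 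Symmetrically, in the right--right case $\dist_{G'}(a,b)\le(s-i)+1\le s$ because $i\ge1$.

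\textbf{Main obstacle.} The one genuinely non-trivial point is the factor-of-two improvement over the obvious estimate: a dominating path of length $s$ in $G'$ only yields $\diam(G')\le s+2$. One unit is saved for free by routing through the single vertex $q$ rather than through an endpoint of $P'$; the second unit truly requires the $(u,v)$-satisfying conditions, which are precisely strong enough to keep the vertices hanging off $u$ and off $v$ within distance $i$ (resp.\ $s-i$) of the bag $W(q)$. I expect the handling of this boundary sub-case — in particular verifying $x_0,x_1\in W(q)$, which is where $k=d-s\ge 1$ (equivalently $s<d$) is used, and invoking the exact form of conditions~(ii) and~(iii) — to be the only place that demands care; everything else is routine routing along $P'$.
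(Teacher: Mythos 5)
Your proof is correct and takes essentially the same route as the paper's: route every pair of vertices through the contracted vertex, bound the distance from each vertex to that contracted vertex by the length of the corresponding side of $P'$ (your $i$ and $s-i$ are the paper's $s_1$ and $s_2$), and invoke conditions (ii) and (iii) of the $(u,v)$-satisfying path — together with $x_0,x_1,x_{k-1},x_k\in W(q)$, which needs $k\geq 1$ — precisely to handle vertices whose only witness on $P$ is $u$ (resp. $v$), i.e. those forced into $X_v\cup Y_v$ (resp. $X_u\cup Y_u$). Your explicit ``central/left/right'' classification is a cosmetic reorganization of the paper's case split (``$y$ adjacent to $w$?'', ``$y\in X_v$?'', ``$y\in Y_v$?'', ``else''), and your both-left/both-right case matches the paper's observation that two vertices dominated by $P_1\setminus\{w\}$ are already at $G$-distance at most $s_1+1\leq s$.
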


\begin{proof}
Let $H=G/E(Q)$. Denote by $w$ the vertex of $H$ obtained from $x_0,\ldots,x_k$. By the definition, $Q$ is a subpath of  a $(u,v)$-path $P$ of length $d$. Because $\{u,v\}$ is a dominating pair, $P$ is a dominating path in $G$ and $P'=P/E(Q)$ is a dominating path in $H$. Let $P_1$ be the $(u,w)$-subpath of $P'$ and let $P_2$ be the $(w,v)$-subpath of $P'$.
Denote by $s_1$ the length of $P_1$ and denote by $s_2$ the length of $P_2$. Clearly, $s_1,s_2\geq 1$ and $s_1+s_2=s$.   
To show that $\diam(H)\leq s$, we have to prove that for any two vertices $y,z\in V(H)$, $\dist_H(y,z)\leq s$.

Suppose that $y$ is a vertex of a $P_1$ or is adjacent to a vertex of $P_1$, and $z$ is a vertex of a $P_2$ or is adjacent to a vertex of $P_2$. 
If $y=w$ or $y$ is adjacent to $w$ in $H$, then $\dist_H(y,w)\leq 1\leq s_1$.
Suppose that $y\neq w$ and is not adjacent to $w$. If $y\in X_v$, then $\dist_G(y,x_0)\leq s_1$. Hence, $\dist_H(y,w)\leq s_1$. If $y\in Y_v$, then   $\dist_G(y,x_0)\leq s_1$ or $\dist_G(y,x_1)\leq s_1$. Because $x_0x_1\in E(Q)$, it immediately implies that $\dist_H(y,w)\leq s_1$. If $y\notin X_v\cup Y_v$, then $y$ is not adjacent to $u$. Then $y$ is adjacent to some other vertex of $P_1$ or is a vertex of $P_1$ and, therefore, $\dist_H(y,w)\leq s_1$. We have that $\dist_H(y,w)\leq s_1$ in all cases. By the same arguments, $\dist_H(z,w)\leq s_2$. Therefore, $\dist_H(y,z)\leq s_1+s_2=s$.

Assume now that each of $y,z$ is a vertex of $P_1$ or is adjacent to a vertex of $P_1$ but $y,z\neq w$.  
If $y$ or $z$ is adjacent to $w$, we observe that this vertex is at distance at most $s_2$ from $w$ and apply the same arguments as above.
If  $y,z$ are not adjacent to $w$, then  we immediately obtain that $\dist_H(y,z)\leq \dist_G(y,z)\leq s$ because $s_1\leq s-1$. 
The case when  each of $y,z$ is a vertex of a $P_2$ or is adjacent to a vertex of $P_2$ but $y,z\neq w$ is symmetric. Hence, we again have that $\dist_H(y,z)\leq s$. 
\end{proof}

Now we analyze the case when the minimum number of contracted edges is $\diam(G)-s+1$.

\begin{lemma}\label{lem:del-edges}
Let $G$ be a connected AT-free graph of diameter $d$,  and let $s$ be an integer such that $2\leq s\leq d-2$. Suppose that the minimum number of edges that needs to be contracted in order to obtain a graph of diameter at most $s$ from $G$ is $k=d-s+1$. Then for any set $S\subseteq E(G)$ of size $k$ such that $\diam(G/S)\leq s$, there is a set $S'\subseteq S$ of size at most~$2$ such that $\diam(G/S')\geq d-|S'|+1$.  
\end{lemma}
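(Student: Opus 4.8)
The argument is by contradiction, and rests on two elementary facts about edge contraction: contracting a single edge never increases the diameter, and decreases it by at most one. Hence for nested edge sets $T\subseteq T'\subseteq E(G)$ one has $\diam(G/T')\le\diam(G/T)\le\diam(G/T')+|T'\setminus T|$, so, writing $r(T)=d-\diam(G/T)$, the function $r$ is monotone and satisfies $r(T)\le|T|$ for every~$T$. I would first note that $\diam(G/S)=s$: the inequality $\diam(G/S)\le s$ is given, and if $\diam(G/S)\le s-1$ then for any $e\in S$ the set $S\setminus\{e\}$ consists of $k-1=d-s$ edges and $\diam(G/(S\setminus\{e\}))\le\diam(G/S)+1\le s$, contradicting that $k$ is the minimum number of edges whose contraction produces a graph of diameter at most~$s$. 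Thus $r(S)=d-s=k-1$.

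Assume now, for contradiction, that no $S'\subseteq S$ with $|S'|\le 2$ satisfies $\diam(G/S')\ge d-|S'|+1$. Together with $r(S')\le|S'|$ this forces $r(S')=|S'|$ for every such $S'$; in particular $\diam(G/e)=d-1$ for every $e\in S$ and $\diam(G/\{e,f\})=d-2$ for every two distinct $e,f\in S$. The plan is then to prove, by induction on $|T|$, that $r(T)=|T|$ for \emph{every} $T\subseteq S$, the cases $|T|\le 2$ being exactly what was just observed. In the inductive step, choosing $e\in T$ and putting $T^-=T\setminus\{e\}$ gives $r(T^-)=|T|-1$ by induction, hence $r(T)\in\{|T|-1,|T|\}$, and it suffices to exclude $r(T)=|T|-1$, i.e.\ to show that the last contraction strictly decreases the diameter. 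Once the induction is complete, taking $T=S$ gives $r(S)=|S|=d-s+1$, contradicting $r(S)=d-s$; this contradiction proves the lemma.

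\textbf{The main obstacle} is exactly this inductive step, and it is where AT-freeness is indispensable. My plan here is to fix a diameter dominating pair $\{u,v\}$ of $G$ by Lemma~\ref{lem:dom_pair}, and then to use Lemma~\ref{lem:path-contr} --- applied both to $G$ and to the graphs $G/e$ for $e\in T$, each of which has diameter $d-1$ --- to show that all edges of $T$ lie on a common $(u,v)$-path $P$ of length $d$. The key leverage is the hypothesis that every pair of edges of $S$ reduces the diameter by~$2$: this rules out two edges of $T$ lying too close together on $P$ or being positioned so that one contraction cancels the effect of another, and so forces the edges of $T$ into a ``spread out'' configuration along $P$ from which a $(u,v)$-satisfying path of length $|T|$ with endpoints distinct from $u$ and $v$ can be extracted. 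The analysis behind Lemma~\ref{lem:sufficient} (which treats the case of length $d-s$, but whose argument, using only the domination property and the distance layers $X_u,X_v,Y_u,Y_v$, applies verbatim to an arbitrary length~$\ell$) then shows that contracting such a path reduces the diameter by exactly $\ell=|T|$, so $\diam(G/T)\le d-|T|$, contradicting $\diam(G/T)=d-|T|+1$. Making the ``one common path, spread out'' claim precise --- a bookkeeping argument with conditions (i)--(iii) of $(u,v)$-satisfying paths and the distance layers around $u$ and $v$, in the spirit of the proof of Lemma~\ref{lem:struct-diam} --- is the technical heart of the proof.
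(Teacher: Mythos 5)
Your proof shares the same technical toolkit as the paper's — the diameter dominating pair of Lemma~\ref{lem:dom_pair}, Lemma~\ref{lem:path-contr}, the $(u,v)$-satisfying paths, and Lemma~\ref{lem:sufficient} — and the preliminary observations (that $\diam(G/S)=s$, that $r(S)=d-s$, and that the contradiction hypothesis forces $r(S')=|S'|$ for $|S'|\le 2$) are all correct. However, the inductive framework you build on top of these has a genuine gap at exactly the point you flag as the ``main obstacle.'' In the inductive step you need $\diam(G/T)\le d-|T|$. Your plan is to show that the edges of $T$ lie on a common $(u,v)$-path $P$, extract from $P$ a $(u,v)$-satisfying path $Q$ of length $|T|$ avoiding $u,v$, and invoke (a generalization of) Lemma~\ref{lem:sufficient} to get $\diam(G/E(Q))\le d-|T|$. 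But this bounds the diameter after contracting $E(Q)$, not after contracting $T$: the set $Q$ is a \emph{contiguous} subpath of $P$, whereas the edges of $T$ typically appear scattered along $P$ with gaps between them, so $E(Q)\neq T$ in general. Contracting a different set of $|T|$ edges gives no information about $\diam(G/T)$, and the step ``so $\diam(G/T)\le d-|T|$'' does not follow. There is also no obvious way to repair this while keeping the induction, because nothing in the hypotheses controls $\diam(G/T)$ directly for proper subsets $T\subsetneq S$.

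The paper's proof sidesteps the induction entirely and goes straight for the contradiction with minimality of $k$. Under the contradiction hypothesis, it shows (via a three-way case analysis: an edge of $S$ off every $(u,v)$-path of length $d$; a pair of edges of $S$ never on a common such path; all pairs on common such paths, which then merge into a single path $P\supseteq S$) that $S$ lies on one $(u,v)$-path of length $d$. It then establishes two claims about the distance layers $X_u,X_v,Y_u,Y_v$ around the first two and last two edges of $S$ on $P$, combines them with Lemma~\ref{lem:shift}, and extracts a $(u,v)$-satisfying subpath $Q$ of $P$ of length $k-1=d-s$ avoiding $u$ and $v$. Lemma~\ref{lem:sufficient} as stated then gives $\diam(G/E(Q))\le s$ with only $k-1$ contractions, contradicting the minimality of $k=d-s+1$. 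The contradiction thus comes from exhibiting a \emph{smaller} witness set, not from computing $\diam(G/T)$ for subsets $T$ of the given $S$. If you drop the induction, take $T=S$ from the outset, and aim for a path of length $k-1$ (not $k$) avoiding $u,v$ — and also handle the cases where $S$ does not sit on a single $(u,v)$-path of length $d$ — your sketch becomes essentially the paper's argument.
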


\begin{proof}
Let $\{u,v\}$ be a diameter dominating pair in $G$. 
Let $S=\{e_1,\ldots,e_k\}\subseteq E(G)$ be a set of edges such that $\diam(H)\leq s$ for $H=G/S$. To obtain a contradiction, assume that for any  $S'\subseteq S$ of size at most~$2$, $\diam(G/S')\leq d-|S'|$.

If there is an edge $e_i\in S$ such that $e_i$ is not an edge of any $(u,v)$-path of length $d$, then $\diam(G/e_i)=d$, contradicting the assumption. Hence, every edge of $S$ is an edge of some 
$(u,v)$-path of length $d$. 

Suppose that for any two edges $e_i,e_j\in S$,  there is a $(u,v)$-path of length $d$ that contains them.
We show that there is a $(u,v)$-path $P$ of length $d$ such that $S\subseteq E(P)$.

Let $e_i=a_ib_i$ and assume that $\dist_G(u,a_i)<\dist_G(u,b_i)$ for $i\in\{1,\ldots,k\}$.
If $\dist_G(u,a_i)=\dist_G(u,a_j)$ for some distinct $i,j\in\{1,\ldots,k\}$, then any $(u,v)$-path that contains $e_i,e_j$ has length at least $d+1$. Hence, we can assume without loss of generality that $\dist_G(u,a_1)<\ldots<\dist_G(u,a_k)$. Let $r\leq k$ be the maximum integer such that there is a $(u,v)$-path $P$ that contains $e_1,\ldots,e_r$. Notice that $r\geq 2$. If $r=k$, our claim holds. Let $r<k$. Then let $P_1$ be the $(u,b_r)$-subpath of $P$ and let $P_2$ be the $(b_r,v)$-subpath of a $(u,v)$-path of length  $d$ containing $e_r,e_{r+1}$. Because $\dist_G(u,a_r)<\dist_G(u,a_{r+1})$, $P_2$ contains $e_{r+1}$. It remains to observe that the path $P_1+P_2$ contains $e_1,\ldots,e_{r+1}$ and has length $d$, contradicting the maximality of $r$.

Recall that $|S|=k\geq 3$ and $e_1\preceq_P\ldots\preceq_P e_k$.  Let $y_1$ be the end-vertex of $e_1$ closest to $u$ and let $y_2$ be the end-vertex of $e_2$ closest to $u$. Similarly, let $z_1$ be the end-vertex of $e_k$ closest to $v$ and let $z_2$ be the end-vertex of $e_{k-1}$ closest to $v$. Notice that $y_1\prec_P y_2\prec_P z_2\prec_P z_1$. Notice also that $\dist_{G}(y_2,z_2)\geq k-2$. We show the following claims.

\medskip
\noindent
{\bf Claim 1.} {\it For any $z\in X_v$, $\dist_G(z,y_1)=\dist_G(u,y_1)$, and for any $z\in X_u$, $\dist_G(z,z_1)=\dist_G(v,z_1)$. }

\begin{proof}[Proof of Claim~1]
We only show that for any $z\in X_v$, $\dist_G(z,y_1)=\dist_G(u,y_1)$; the second part of the claim follows by symmetry.  
Because $P$ has  length $d=\diam(G)$, $\dist_G(z,y_1)\geq \dist_G(u,y_1)$ for $z\in X_v$. Suppose that there is $z\in X_v$ such that $\dist_G(z,y_1)>\dist_G(u,y_1)$. Then contracting $e_1$ does not decrease the diameter because $\dist_{G'}(z,v)=d$ for $G'=G/e_1$; a contradiction.
\renewcommand{\qedsymbol}{$\diamond$}
\end{proof}

\medskip
\noindent
{\bf Claim 2.} {\it For any $z\in Y_v$, $\dist_G(z,y_1)\leq\dist_G(u,y_1)$ or $\dist_G(z,y_2)\leq \dist_G(u,y_2)-1$, and for any  $z\in Y_u$,  $\dist_G(z,z_1)\leq \dist_G(v,z_1)$ or $\dist_G(z,z_2)\leq \dist_G(v,z_2)-1$.}

\begin{proof}[Proof of Claim~2]
By symmetry, it is sufficient to show that for any $z\in Y_v$, $\dist_G(z,y_1)\leq\dist_G(u,y_1)$ or $\dist_G(z,y_2)\leq \dist_G(u,y_2)-1$.
Suppose that there is $z\in Y_v$ such that $\dist_G(z,y_1)>\dist_G(u,y_1)$ and  $\dist_G(z,y_2)> \dist_G(u,y_2)-1$. Let $S'=\{e_1,e_2\}$ and $G'=G/S'$. We have that $\dist_{G'}(z,v)=d-1$, i.e., $\diam(G')\geq d-1>\diam(G)-2$; a contradiction.
\renewcommand{\qedsymbol}{$\diamond$}
\end{proof}

\medskip
Observe that there exist four vertices
 $y_1',y_2',z_1',z_2'\in V(P)$ such that $y_1'\prec_P y_2'\prec_p z_2'\prec_P z_1'$, $y_1\preceq_P y_1'$,
$y_2\preceq_P y_2'$, $z_1'\preceq_P z_1$, $z_2'\preceq_P z_2$, and $\dist_P(y_1',z_1')=k-1$. 
Moreover, because $\dist_{G}(y_2,z_2)\geq k-2$, we can select $y_1'\neq u$ and $z_1'\neq v$.
Let $Q=x_0\ldots x_{k-1}$ be the $(y_1',z_1')$-subpath of $P$. Claims~1 and 2 together with Lemma~~\ref{lem:shift} immediately imply that $Q$ is an $(u,v)$-
satisfying path of length $k-1=d-s$. Moreover,  $u\neq x_0$ and $v\neq x_{k-1}$. By Lemma~\ref{lem:sufficient}, contracting $k-1$ edges in $Q$ yields  a graph of diameter at most $s$. But this contradicts the condition that the minimum number of edges that needs to be contracted in order to obtain a graph of diameter at most $s$ from $G$ is $k=d-s+1$.

It remains to consider the case when there are two distinct edges $e_i,e_j\in S$ such that any $(u,v)$-path of length $d$ in $G$ does not contain $e_i$ or $e_j$. Let $S'=\{e_i,e_j\}$ and observe that $\diam(G/S')\geq d-1$. This contradicts the assumption that $\diam(G/S')\leq d-|S'|$.  
\end{proof}

Now we are ready to prove the main result of the section.

\begin{theorem}\label{thm:cc}
For any $s\geq 2$, the {\sc $s$-Club Contraction} problem can be solved in $O(m^4n^3)$ time on AT-free graphs. This result holds even if $s$ is given as a part of the input.
\end{theorem}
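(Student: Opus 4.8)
The plan is to reduce the problem to computing, for the input graph $G$, the minimum number $k^{*}$ of edge contractions needed to turn $G$ into a graph of diameter at most $s$, and then to answer ``yes'' exactly when $k\ge k^{*}$ (this equivalence is immediate, since $k$-contractibility allows \emph{up to} $k$ contractions). If $G$ is disconnected the answer is always ``no'', since every contraction of $G$ is disconnected; so assume $G$ is connected and compute $d:=\diam(G)$ by BFS from every vertex. If $d\le s$ then $k^{*}=0$; otherwise Lemma~\ref{lem:AT-bounds} gives $d-s\le k^{*}\le d-s+2$, so it suffices to decide, in turn, whether $d-s$ contractions suffice and whether $d-s+1$ contractions suffice, declaring $k^{*}=d-s+2$ if neither does. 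Note that the bound $k^{*}\le d-s+2$ is where AT-freeness first enters, through Lemmas~\ref{lem:AT-bounds} and~\ref{lem:dom_pair}.

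\emph{Deciding whether $d-s$ contractions suffice.} First compute a diameter dominating pair $\{u,v\}$ via Lemma~\ref{lem:dom_pair}, together with BFS layers from $u$ and from $v$ and the sets $X_u,Y_u,X_v,Y_v$. If $d\le s+2$ (so $d-s\le 2$) we enumerate all edge subsets $S$ with $|S|=d-s$ and test $\diam(G/S)\le s$; only $O(1)$-size subsets occur, so this is within budget. If $d\ge s+3$ we invoke Lemmas~\ref{lem:struct-diam} and~\ref{lem:sufficient}: if some set $S$ of $d-s$ edges works, then $G$ has a $(u,v)$-satisfying path of length $d-s$, and either (a) some such path $Q=x_{0}\dots x_{d-s}$ has $x_{0}\ne u$ and $x_{d-s}\ne v$, whence $\diam(G/E(Q))\le s$ by Lemma~\ref{lem:sufficient}; or (b) every $(u,v)$-satisfying path of length $d-s$ has $x_{0}=u$ or $x_{d-s}=v$, whence parts i)--ii) of Lemma~\ref{lem:struct-diam}, applied to the satisfying path associated to $S$, force all but two edges of $S$, namely the unique length-$(d-s-2)$ initial segment of a shortest path from $u$ (or the symmetric final segment towards $v$). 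The algorithm therefore decides in polynomial time, using the BFS layers together with conditions i)--iii), whether an interior $(u,v)$-satisfying path of length $d-s$ exists; if so, contracting its edges answers ``yes''. Otherwise it forms the (at most two) candidate forced segments, guesses the two remaining edges over the $O(m^{2})$ possibilities, and tests $\diam(G/S)\le s$ for each; ``yes'' iff some guess works, and if both checks fail then $d-s$ contractions do not suffice.

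\emph{Deciding whether $d-s+1$ contractions suffice}, given that $d-s$ do not. Here Lemma~\ref{lem:del-edges} applies: any optimal set $S$ of size $d-s+1$ contains a subset $S'\subseteq S$ with $|S'|\le 2$ and $\diam(G/S')\ge d-|S'|+1$. A short computation using the lower bound of Lemma~\ref{lem:path-contr} shows that $|S'|\in\{1,2\}$, that $\diam(G/S')=d-|S'|+1=:d'$, and that $S\setminus S'$ is an \emph{optimal} contraction set of size $d'-s$ bringing $G':=G/S'$ down to diameter at most $s$; conversely, any such $S'$ together with any optimal size-$(d'-s)$ set $T$ for $G'$ yields a valid set $S'\cup T$ of size $d-s+1$ for $G$. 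Hence the algorithm enumerates all $O(m^{2})$ candidates $S'$ (single edges and pairs), keeps those with $\diam(G/S')=d-|S'|+1$, and for each of them decides whether $d'-s$ contractions suffice on $G'$ by running exactly the procedure of the previous paragraph (diameter pair of $G'$, search for an interior satisfying path, else guess two edges on top of a forced segment, with the boundary case $d'\le s+2$ done by brute force). It answers ``yes'' iff some candidate $S'$ does.

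For the running time, the dominant contribution is the last step: $O(m^{2})$ candidates $S'$, each triggering a sub-procedure that computes a dominating/diameter pair and BFS data in $O(n^{3})$ time and then, in the forced case, guesses two further edges over $O(m^{2})$ possibilities with an $O(nm)$ diameter check each, for $O(m^{4}n^{3})$ altogether; nothing depends on $s$, so the bound holds even when $s$ is part of the input. The step I expect to be the main obstacle is the correctness of this last part, because $G'=G/S'$ need not be AT-free, so Lemma~\ref{lem:sufficient} — the only ingredient that uses AT-freeness to \emph{certify} that a guessed set works — is unavailable on $G'$. One must verify that this causes no harm: every set the algorithm accepts is accepted only after an explicit check that $\diam(G/S)\le s$, so there are no false positives; and the necessary-direction structural results, Lemmas~\ref{lem:path-contr} and~\ref{lem:struct-diam}, require no AT-freeness, so they still locate an optimal set of $G'$ up to a constant number of edges, which together with Lemma~\ref{lem:del-edges} should guarantee that whenever the true answer is ``yes'' one of the polynomially many candidates is a genuine witness. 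Making this last guarantee airtight — in particular handling the interior-satisfying-path case on a possibly non-AT-free $G'$ without recourse to Lemma~\ref{lem:sufficient} — is the technical heart of the proof, and is where most of the effort goes.
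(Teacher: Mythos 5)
Your proposal follows the paper's proof step-for-step: reduce to computing $k^*$, use Lemma~\ref{lem:AT-bounds} to narrow $k^*$ to $\{d-s,\,d-s+1,\,d-s+2\}$, settle $k=d-s$ via $(u,v)$-satisfying paths through Lemmas~\ref{lem:struct-diam} and~\ref{lem:sufficient}, and settle $k=d-s+1$ by Lemma~\ref{lem:del-edges} followed by recursion into the $k=d-s$ routine on $G'=G/S'$.

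The point you flag as unresolved, and describe as ``the technical heart of the proof'' --- that $G'=G/S'$ might not be AT-free, so that Lemma~\ref{lem:sufficient} and the diameter-dominating-pair machinery would be unavailable on $G'$ --- is indeed where your write-up stops short of a complete argument, but the fix is a single observation rather than the re-derivation you sketch: the class of AT-free graphs is closed under edge contraction, so $G'$ is AT-free and the $k=d-s$ procedure applies to it verbatim. To see the closure, let $e=uv\in E(G)$ be contracted to $w$ and suppose $\{a,b,c\}$ were an asteroidal triple of $G/e$. If $w\notin\{a,b,c\}$, then $a,b,c$ are also pairwise non-adjacent in $G$, and each witnessing path lifts to a $G$-path: a path avoiding $w$ is already a $G$-path avoiding $N_G(c)$ because it lies in $V(G)\setminus\{u,v\}$; and a path through $w$ forces $w\notin N_{G/e}(c)$, hence $c$ is non-adjacent to $u$ and $v$ in $G$ and $N_G(c)=N_{G/e}(c)$, so replacing $w$ by $u$, $v$, or the edge $uv$ yields a $G$-path still avoiding $N_G(c)$. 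This makes $\{a,b,c\}$ an AT of $G$. If instead $w=a$ (say), then $b$ and $c$ are non-adjacent to $u$ and $v$ in $G$; the $(b,c)$-witness avoids $N_{G/e}(w)\supseteq N_G(u)\setminus\{v\}$ and lies in $V(G)\setminus\{u,v\}$, while the $(w,b)$- and $(w,c)$-witnesses lift, after replacing $w$ by $u$ or $v$ and using the edge $uv$ to switch endpoints when necessary, to $(u,b)$- and $(u,c)$-paths avoiding $N_G(c)$ and $N_G(b)$ respectively, so $\{u,b,c\}$ is an AT of $G$. Either case contradicts AT-freeness of $G$. This closure is exactly what the paper uses implicitly when it re-runs the earlier procedure on $(G',k')$; once you insert it, your argument and the paper's coincide, and the attempt to certify correctness on a ``possibly non-AT-free $G'$'' without Lemma~\ref{lem:sufficient} becomes unnecessary.
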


\begin{proof}
Let $(G,k)$ be an instance of {\sc $s$-Club Contraction}. If $G$ is disconnected, then $G$ cannot be contracted to a graph of finite diameter.  
Notice that we can compute $d=\diam(G)$ in time $O(n^3)$.
If $d\leq s$, the problem is trivial. 
From now we assume that $G$ is connected and $d>s$. 
By Lemma~\ref{lem:AT-bounds}, if $k<d-s$, then we have a no-instance of the problem, and if $k\geq  d-s+2$, then we have a yes-instance.  
Hence, we can assume that $d-s\leq k\leq d-s+1$. 

Suppose that $k=d-s$.  
If $k\leq 2$, we solve the problem by brute force in time $O(m^2n^3)$ by checking all possible choices of $k$ edges. 
Let $k\geq 3$.  Using Lemma~\ref{lem:dom_pair}, we find a diameter dominating pair $\{u,v\}$.
By considering all possible choices for the edges $x_0x_1$ and $x_{k-1}x_k$, we check for the existence of $(u,v)$-satisfying paths $Q=x_0\ldots x_k$ in time $O(m^2(n+m))$. 
If such a path does not exist, then we have a no-answer due to Lemma~\ref{lem:struct-diam}. If there is a $(u,v)$-satisfying path $Q=x_0\ldots x_k$ with $u\neq x_0$ and $v\neq x_k$, then by Lemma~\ref{lem:sufficient}, by contracting the edges of $Q$ we obtain a graph of diameter at most $s$. If we have only a $(u,v)$-satisfying path $Q=x_0\ldots x_k$ with $u=x_0$ or $v=x_k$, then we apply Lemma~\ref{lem:struct-diam}. 
If $u=x_0$, then we check whether 
$x_i$ is the unique vertex at distance $i$ from $u$ in $G$ for $i\in\{0,\ldots,k-2\}$.
If this is not the case, then Lemma~\ref{lem:struct-diam} guarantees that we have a no-answer. Otherwise,
we contract   
the edges $x_{i-1}x_i\in S$ for $i\in\{1,\ldots,k-2\}$. We then try all possible choices for two remaining edges and check whether we obtain a graph of diameter at most $s$.  We use the symmetrical arguments if $v=x_k$. Observe that the entire procedure can be performed in $O(m^2n^3)$ time. 
 
Assume now that $k=d-s+1$. First, we solve the problem for the instance $(G,k-1)$ in $O(m^2n^3)$ time using the above procedure. Clearly, if we have a yes-instance, $(G,k)$ is a yes-instance as well. Suppose that it is not sufficient to contract $k-1$ edges to obtain a graph of diameter at most $s$. Then we apply Lemma~\ref{lem:del-edges} and check, in $O(m^2)$ time, all sets of edges $S'$ of size at most~$2$. For each set, we first check in time $O(n^2)$ whether $\diam(G/S')=d-|S'|+1$. If so, then we solve the problem for the instance $(G',k')$ where $G'=G/S'$ and $k'=k-|S'|$. Because $k'=\diam(G')-s$, this can be done in $O(m^2n^3)$ time as described earlier. If we have a yes-answer for one of the instances, then we return a yes-answer. Otherwise, we have a no-answer.

It remains to observe that the total running time is $O(m^4n^3)$.
\end{proof}

\subsection{$s$-Club Contraction for chordal graphs}\label{sec:chord}

We now turn our attention to chordal graphs.

\begin{theorem}
\label{t-split}
For any $s\geq 2$, the {\sc $s$-Club Contraction} problem on chordal graphs is $\NP$-complete as well as $\W[2]$-hard when parameterized by $k$. Moreover, {\sc $2$-Club Contraction} is $\NP$-complete and $\W[2]$-hard when parameterized by~$k$ even on split graphs.
\end{theorem}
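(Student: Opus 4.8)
The plan is to obtain hardness by reducing from {\sc Set Cover} (equivalently {\sc Dominating Set}), since that problem is $\NP$-complete and $\W[2]$-hard when parameterized by the solution size, and its structure matches contractions very well on split graphs. Given a {\sc Set Cover} instance with universe $U=\{a_1,\dots,a_n\}$, a family $\mathcal{F}=\{F_1,\dots,F_m\}$ of subsets of $U$, and a target $\ell$, I would first construct a split graph $G_2$ as follows. Take a clique $C$ whose vertices are $s_1,\dots,s_m$ (one per set of $\mathcal{F}$) together with one extra apex vertex $w$ adjacent to everything in the clique. Add an independent set $I$ containing one vertex $t_i$ for each element $a_i\in U$, and make $t_i$ adjacent to exactly those $s_j$ with $a_i\in F_j$. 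To rule out trivial or degenerate contractions, pad $I$ with a large number of ``private'' independent vertices attached to $w$ only (so that $w$ can never be swallowed and so that deleting/merging clique vertices is never helpful), and, if necessary, give each $t_i$ a private pendant-style gadget forcing it to be covered rather than itself contracted. The point of $w$ is that $\{w\}$ together with $C$ already forms an $s$-club except that the element vertices $t_i$ may be at distance $2$ from each other but at distance $2$ from $w$; what fails is only if some $t_i$ is at distance $>2$ from some vertex — impossible here since everything is within distance $2$ of the clique. So for $s=2$ the graph $G_2$ itself already has diameter $2$ unless we also add a vertex far away; the cleaner route is to instead demand that the contracted graph be $K_2$-like small, i.e.\ reduce from {\sc Dominating Set} via the equivalence ``$G$ is $k$-contractible to an $s$-club'' $\Leftrightarrow$ ``$G$ has an $s$-club minor on $n-k$ vertices'' (Lemma~\ref{lem:diameterequiv}) — so I will set $k$ so that $n-k$ forces almost the whole graph to survive, and the only freedom is which of the $s_j$-vertices get absorbed into $w$'s bag.

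Concretely, I would build $G_2$ so that the \emph{only} obstruction to diameter $2$ is a single designated vertex $z$ that is adjacent to $w$ only; then any vertex $t_i$ is at distance $3$ from $z$ unless $t_i$ lands in a bag adjacent to $w$'s bag. The contraction we are allowed to perform must, with budget $k=\ell$, merge $z$'s component down so that every $t_i$ becomes within distance $2$; the cheapest way is to merge $z$ (and $w$) with a small set of clique vertices $\{s_j : j\in J\}$ with $|J|=\ell$ that dominates all of $I$, i.e.\ $\{F_j : j\in J\}$ covers $U$. The padding vertices guarantee that no other contraction (e.g.\ merging two $t_i$'s, or merging clique vertices among themselves) ever achieves diameter $2$ more cheaply, because merging inside the clique does nothing for $z$, and merging $t_i$'s is blocked by their private gadgets. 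The formal equivalence then reads: $G_2$ is $\ell$-contractible to a $2$-club iff $\mathcal{F}$ has a subfamily of size $\ell$ covering $U$. Since $G_2$ is manifestly split (clique $C\cup\{w\}$, independent set $I\cup\{z\}\cup$ padding) and the reduction is polynomial and parameter-preserving in $\ell=k$, this gives $\NP$-completeness and $\W[2]$-hardness of {\sc $2$-Club Contraction} on split graphs; membership in $\NP$ is immediate (guess $S$, verify $\diam(G/S)\le s$ in polynomial time).

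For general fixed $s\ge 2$ and chordal (no longer split) graphs, I would take the split gadget above and replace the single ``far'' vertex $z$ by an induced path of length $s$ hanging off $w$, i.e.\ a chain $w=p_0,p_1,\dots,p_s$; then the element vertices $t_i$ sit at distance $s+1$ from $p_s$, and driving the diameter down to $s$ again requires contracting along this path together with exactly one block of $\ell$ clique vertices forming a cover — so the budget becomes $k=s-1+\ell$ or a similar affine function of $\ell$, and the same covering equivalence goes through; chordality is preserved since we only attach a path to a clique vertex. I would verify chordality of the final graph directly (every cycle uses at most two of the path vertices and the rest lies in the clique-plus-independent-set part, which is chordal, so no induced cycle of length $\ge 4$ survives), and re-run the ``no cheaper contraction'' argument now using the path: any contraction not using all $s-1$ path edges leaves some $p_i$–$t_j$ distance $>s$, and any contraction using more than $\ell$ non-path edges exceeds the budget, forcing a size-$\ell$ cover.

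The main obstacle I anticipate is the ``soundness padding'': proving rigorously that \emph{no} clever contraction beats the intended one. I will need the padding (private neighbours of $w$ and private gadgets on the $t_i$) to be numerous enough — at least $k+1$ of each — so that an appeal in the style of Lemma~\ref{l-big} (a set of $2k+1$ pairwise-twin leaf-like vertices forces one of them to be a singleton bag whose neighbourhood constrains an adjacent bag) pins down the shape of any contraction structure achieving diameter $\le s$: namely that $w$ and the path vertices all lie in one bag, every $t_i$ keeps its own singleton bag, and the only merges among the remaining vertices are of at most $\ell$ clique vertices $s_j$ into $w$'s bag. Once the structure is pinned down, the covering equivalence is a one-line check. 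I expect the completeness direction (cover $\Rightarrow$ valid contraction of cost $\le k$) to be routine: contract the path, then contract the $\ell$ chosen $s_j$ into $w$, and observe every $t_i$ is now adjacent to $w$'s bag, so the resulting chordal (in fact split, for $s=2$) graph has diameter at most $s$.
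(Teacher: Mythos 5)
Your overall strategy — a polynomial, parameter-preserving reduction from a $\W[2]$-hard covering problem to {\sc $2$-Club Contraction} on split graphs, pinned down via Lemma~\ref{l-big}, followed by a lift to general $s\geq 2$ on chordal graphs — is essentially the paper's. (The paper uses {\sc Hitting Set} with universe elements placed in the clique and $2k+1$ copies of each ``constraint'' vertex in the independent set; you use {\sc Set Cover}, placing the sets in the clique and the elements outside, with an apex $w$. These are dual descriptions of the same reduction.) However, there are two concrete gaps in your plan.

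First, the ``private pendant-style gadget on each $t_i$'' does not achieve what you want and would in fact break the completeness direction. If you literally attach pendants to $t_i$, then two such pendants hanging off distinct $t_i,t_{i'}$ are at distance $\dist_G(t_i,t_{i'})+2\geq 4$, which the intended contraction (merging $w$ with a size-$\ell$ cover) does not reduce to $\leq 2$. The device that actually works — and what the paper uses — is to introduce $2k+1$ interchangeable copies of each demand vertex ($2k+1$ per constraint, not $k+1$): by Lemma~\ref{l-big} at most $2k$ vertices lie in big bags, so at least one copy is a singleton, and it is that surviving singleton that forces the apex bag $W_w$ to intersect the clique in a covering set. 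Pendants work only for vertices whose unique neighbour is the apex; for the $t_i$, which have several clique neighbours, you need twin copies, not pendants.

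Second, your step to general $s\geq 2$ via a path $w=p_0,p_1,\dots,p_s$ does not pin down the contraction as you claim. The assertion ``any contraction not using all $s-1$ path edges leaves some $p_i$--$t_j$ distance $>s$'' is false: already contracting the single edge $wp_1$ together with a covering set of $\ell$ clique vertices into $w$ gives $\dist(p_s,t_i)=(s-2)+1+1=s$, so diameter at most $s$ is achieved with $1+\ell$ contractions, not $s-1+\ell$. The soundness analysis for the path gadget therefore does not follow from your stated invariant and would need to be redone from scratch (and it is genuinely more delicate than the $s=2$ case). The paper avoids this entirely: it handles $s=2$ on split graphs, then $s=3$ on chordal graphs by a small local modification, and then transfers $s$ to $s+2$ by attaching $k+1$ pendant vertices to \emph{every} vertex of $G$, which raises the diameter by exactly $2$ and preserves chordality and the budget. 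That two-step lift is both simpler and more robust than the path gadget, and you may want to adopt it.
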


\begin{proof}
First, we show hardness for {\sc $2$-Club Contraction} on split graphs.

We reduce from the {\sc Hitting Set} problem, which takes as input a finite set $U$, a collection ${\cal S}$ of subsets of $U$, and an integer $k$, and asks whether there exists a subset $U'\subseteq U$ of size at most $k$ such that $U'$ contains at least one element from each subset in ${\cal S}$; such a subset $U'$ is called a {\em hitting set} of size at most $k$. This problem is well-known to be \NP-complete~\cite{GareyJ79} as well as $\W[2]$-hard when parameterized by $k$~\cite{DowneyF13}. 

Given an instance $(U,{\cal S},k)$ of the {\sc Hitting Set} problem with $U=\{u_1,\ldots,u_n\}$ and ${\cal S}=\{S_1,\ldots, S_m\}$, we create a split graph $G$ as follows. We start by creating a vertex $u_i$ for each $u_i\in U$, and we make all these vertices into a clique that we denote by $U$. For every $S_j\in {\cal S}$, we create 
$2k+1$ vertices $S_j^{1},\ldots,S_j^{2k+1}$ that are made adjacent to vertex $u_i$ if and only if $u_i\in S_j$, for each $i\in \{1,\ldots,n\}$. We then add a vertex $x$ that is made adjacent to every vertex in $U$, as well as $2k+1$ vertices $y_1,\ldots,y_{k+1}$ that are made adjacent to $x$ only. This completes the construction of $G$. 

Note that the vertex set of $G$ can be partitioned into a clique $U\cup \{x\}$ and an independent set $V(G)\setminus (U\cup \{x\})$, so $G$ is a split graph. Also observe that the diameter of $G$ is~$3$. Hence, in order to finish the proof, it suffices to show that $G$ is $k$-contractible to a graph of diameter at most~$2$ if and only if $(U,{\cal S},k)$ is a yes-instance of {\sc Hitting Set}.

First suppose there exists a set $U'\subseteq U$ with $|U'|\leq k$ such that $U'\cap S_j\neq \emptyset$ for every $j\in \{1,\ldots,m\}$. Consider the corresponding $|U'|$ vertices in $G$, 
contract the edges that join the vertices of $U'$ and $x$, and denote by $w$ the obtained vertex. 
Let $H$ denote the resulting graph. The fact that $U'$ is a hitting set implies that in $H$, vertex $S_j^{p}$ is adjacent to $w$ for every $j\in \{1,\ldots,n\}$ and $p\in \{1,\ldots,2k+1\}$. This means that $w$ is a universal vertex in~$H$, implying that $H$ has diameter at most~$2$.
 
For the reverse direction, suppose there exists a graph $H$ of diameter at most~$2$ such that $G$ is $k$-contractible to $H$. Let ${\cal W}$ be an $H$-contraction structure of $G$. Due to Lemma~\ref{l-big}, we know that for each $j\in \{1,\ldots,m\}$, one of the vertices $S_j^{1},\ldots,S_j^{2k+1}$ forms a 
singleton, and the same holds for one of the vertices $y_1,\ldots,y_{2k+1}$. Without loss of generality, assume that each of the vertices $S_1^1,S_2^1\ldots,S_m^1,y_1$ forms a 
singleton. In particular, this means that every edge incident with these vertices is a witness edge, that is, an edge whose endpoints belong to two different bags. Consequently, there must be a bag $W\in {\cal W}$ that is adjacent to each of the vertices $S_1^1,S_2^1\ldots,S_m^1,y_1$. Due to Lemma~\ref{l-big}, this bag $W$ contains at most $k+1$ vertices. It is clear that $x\in W$, as $x$ is the unique neighbor of $y_1$ in $H$. Let $U'=W\setminus \{x\}$. Since none of the vertices $S_1^1,S_2^1\ldots,S_m^1$ is adjacent to $x$, each of 
them is adjacent to at least one vertex in $U'$. We conclude that $U'$ is a hitting set of size $|W|-1\leq k$.

To show that {\sc $3$-Club Contraction} is hard, we modify the above construction as follows. Instead of adding $y_1,\ldots,y_{2k+1}$ adjacent to $x$, we crate $2k+1$ vertices $z_1,\ldots,z_{2k+1}$ and make the set $\{z_1,\ldots,z_{2k+1},x\}$ into a clique. Now we construct $y_1,\ldots,y_{2k+1}$ and make $y_i$ adjacent to $z_i$ for $i\in\{1,\ldots,2k+1\}$. It is easy to see that the obtained graph $G$ is chordal, and by the same arguments as above, we have that $G$  is $k$-contractible to a graph of diameter at most~$3$ if and only if $(U,{\cal S},k)$ is a yes-instance of {\sc Hitting Set}. 

Finally, let $s\geq 4$. Consider a graph $G$ and denote by $G'$ the graph obtained from $G$ by adding $k+1$ pendant vertices adjacent to  $v$ for each vertex $v$  of $G$. It is straightforward to observe that $G'$ is $k$-contractible to a graph of diameter at most~$s$ if and only if $G$ is $k$-contractible to a graph of diameter at most~$s-2$. Clearly, if $G$ is chordal, then $G'$ is chordal as well.
As we already proved that 
 {\sc $s$-Club Contraction} is $\NP$-complete as well as $\W[2]$-hard when parameterized by $k$ for chordal graphs for $s\in \{2,3\}$, this observation immediately implies that 
 {\sc $s$-Club Contraction} is $\NP$-complete and $\W[2]$-hard  for chordal graphs for every fixed $s\geq 2$.
\end{proof}

\section{Concluding Remarks}
\label{sec:conclusions}

In Section~\ref{sec:main}, we showed that the {\sc Hadwiger Number} problem can be solved in polynomial time on cographs and on bipartite permutation graphs, respectively. A natural question is how far the results in those two sections can be extended to larger graph classes. An easy reduction from the {\sc Hadwiger Number} problem on general graphs, involving subdividing every edge of the input graph exactly once, implies that the problem is \NP-complete on bipartite graphs. Since bipartite permutation graphs form exactly the intersection of bipartite graphs and permutation graphs, and the class of permutation graphs properly contains the class of cographs, our results naturally raise the question whether the {\sc Hadwiger Number} problem can be solved in polynomial time on permutation graphs. We leave this as an open question. We point out that the problem is \NP-complete on co-comparability graphs, a well-known superclass of permutation graphs, due to Theorem~\ref{t-co-bipartite} and the fact that co-bipartite graphs form a subclass of co-comparability graphs.

In Section~\ref{sec:diam}, we proved that {\sc $s$-Club Contraction} is polynomial-time solvable on AT-free graphs for $s\geq 2$.
An interesting direction for further research is to identify other non-trivial graph classes for which the {\sc $s$-Club Contraction} problem is polynomial-time solvable (or fixed-parameter tractable when parameterized by~$k$) for all values of $s\geq 2$.


\end{document}